\newcommand\myparagraph[1]{\textbf{#1}}
\newcolumntype{Y}{>{\centering\arraybackslash}X}
\newcolumntype{P}{>{\raggedleft\arraybackslash}X}
\newcolumntype{C}[1]{>{\centering\arraybackslash}m{#1}}
\newcolumntype{L}[1]{>{\raggedright\arraybackslash}m{#1}}
\renewcommand{\tilde}{\widetilde}
\renewcommand{\hat}{\widehat}
\newcommand{\RR}{\mathbb{R}}
\newcommand{\cC}{\mathcal{C}}
\newcommand{\cI}{\mathcal{I}}
\newcommand{\cN}{\mathcal{N}}
\newcommand{\cO}{\mathcal{O}}
\newcommand{\cX}{\mathcal{X}}
\newcommand{\ba}{{\bm{a}}}
\newcommand{\be}{{\bm{e}}}
\newcommand{\bb}{{\bm{b}}}
\newcommand{\bp}{{\bm{p}}}
\newcommand{\bw}{{\bm{w}}}
\newcommand{\by}{{\bm{y}}}
\newcommand{\bx}{{\bm{x}}}
\newcommand{\bz}{{\bm{z}}}
\newcommand{\bA}{{\bm{A}}}
\newcommand{\bK}{{\bm{K}}}
\newcommand{\bh}{{\bm{h}}}
\newcommand{\bI}{{\bm{I}}}
\newcommand{\bP}{{\bm{P}}}
\newcommand{\bSigma}{\bm{\Sigma}}
\newcommand{\bOmega}{\bm{\Omega}}
\newcommand{\bepsilon}{\bm{\varepsilon}}
\newcommand{\pto}{\xrightarrow{\textup{p}}}
\definecolor{revisionburntorange}{RGB}{191,87,0}
\def\BB{\mathsf{BB}}
\newcommand{\opt}{\mathrm{opt}}
\newcommand{\inv}{\mathrm{inv}}
\newcommand{\tr}{\mathrm{tr}}
\newcommand{\gl}{\mathrm{gl}}
\renewcommand{\sl}{\mathrm{sl}}
\newcommand{\VGS}{\mathrm{VGS}}
\let\bar\overbar
\let\hat\widehat
\let\tilde\widetilde
\def\aosbQ{\bar{q}_{\gamma, \eta}}
\def\aosbQlocal{\bar{q}_{\gamma, \eta}(\mu)}
\def\bbset{f^{-1}(\Gamma_\eta(\by)) \cap \mathcal{X}}
\def\maxqRS{\bar{q}_{\gamma, \eta}^{\mathrm{de}}}
\def\localQ{Q_{\mu, 1 - \alpha}^{\mathrm{max}}}
\def\globalQ{Q_{1 - \alpha}^{\mathrm{max}}}
\def\localcs{C_\alpha^{\sl}(\by; \mathcal{B}_\eta)}
\def\globalcs{C_\alpha^{\gl}(\by; \mathcal{B}_\eta)}
\newcommand{\convprob}[2]{#1 \pto #2}
\newcommand{\bergerboos}{Berger--Boos\xspace}
\setlist[enumerate]{leftmargin=.5in}
\setlist[itemize]{leftmargin=.5in}
\crefname{hypothesis}{Hypothesis}{Hypotheses}
\crefname{fact}{Fact}{Facts}
\title{Confidence intervals for functionals in constrained inverse problems via data-adaptive sampling-based calibration}
\newcommand{\titletext}{Confidence intervals for functionals in constrained inverse problems via data-adaptive sampling-based calibration}
\author{Michael Stanley\thanks{Analytical Mechanics Associates, USA}
\and Pau Batlle
\and Pratik Patil\thanks{Department of Statistics and Data Sciences, University of Texas at Austin, USA}
\and Houman Owhadi\thanks{Department of Computing and Mathematical Sciences, California Institute of Technology, USA}
\and Mikael Kuusela\thanks{Department of Statistics and Data Science, Carnegie Mellon University, USA}
}
\begin{document}

\maketitle

\begin{abstract}  
    We address functional uncertainty quantification for ill-posed inverse problems with rank-deficient forward models, known observation noise distributions, and parameter constraints common in scientific applications such as remote sensing and high-energy physics.
    We propose four constraint-aware confidence interval methods built upon a theoretical test inversion framework in \cite{previous_paper} and optimization-based endpoint computations, resulting in intervals that are both computationally feasible and less conservative.
    Our approach first shrinks the potentially unbounded constraint set in a data-adaptive way, obtains samples of the relevant test statistic inside this set to estimate a quantile function, and then uses these computed quantities to produce the constraint-aware confidence intervals.
    Our approach to bounding the constraint set in a data-adaptive way is based on the approach by \cite{berger_boos} and involves defining a subset of the constraint set where the true parameter is guaranteed to exist with high probability.
    The probabilistic guarantee of this subset is then incorporated into the final coverage guarantee in the form of an uncertainty budget. 
    Both theoretical insights and numerical experiments demonstrate that our intervals achieve nominal coverage for specified functionals, consistently outperforming the commonly used one-at-a-time strict bound intervals in terms of coverage accuracy and interval length.
    Numerical results include a realistic, severely ill-posed unfolding problem from high-energy physics involving a rank-deficient forward model.
\end{abstract}

\begin{keywords}
uncertainty quantification,
frequentist coverage,
constrained hypothesis testing,
likelihood ratio test
\end{keywords}

\begin{MSCcodes}
68Q25, 68R10, 68U05
\end{MSCcodes}

\section{Introduction}

We consider frequentist uncertainty quantification in constrained inverse \allowbreak problems with observation model
\(\by=f(\bx^*)+\bepsilon\), where \(f:\mathbb{R}^p\to\mathbb{R}^n\) is a known parametric forward map, \(\bepsilon\sim\mathcal{N}(\bm{0},\bSigma)\) has a known distribution, and the unknown parameter satisfies \(\bx^*\in\mathcal{X}\).
Throughout the paper, \(\mathcal{X}\) is primarily assumed to be a polyhedral constraint set of the form \(\bA\bx\leq\bb\).
We allow the parameter dimension \(p\) to exceed the observation dimension \(n\), and we do not require the forward map to be injective; both features are common in ill-posed inverse problems.
In this work, the uncertainty quantification target is a confidence interval \(\cC_\alpha(\by)\) for a one-dimensional quantity of interest (QoI) \(\varphi(\bx^*)\), where \(\varphi:\mathcal{X}\to\mathbb{R}\).
For a prescribed miscoverage level \(\alpha\in(0,1)\), we seek finite-sample frequentist coverage uniformly over the constrained parameter space:
\begin{equation} \label{eq:coverage_guarantee}
    \mathbb{P}_{\bx^*}\bigl(\varphi(\bx^*) \in \cC_\alpha(\by)\bigr) \geq 1 - \alpha \quad \text{for all } \bx^* \in \mathcal{X}.
\end{equation}
Here the probability is over the noise in the observation \(\by=f(\bx^*)+\bepsilon\) for fixed $\bx^*$.
Thus, the interval must contain the true functional value \(\varphi(\bx^*)\) with probability at least \(1-\alpha\), no matter which feasible parameter \(\bx^*\) generated the data.

\subsection{Context and related work} 
\label{sec:context_and_related_work}

High-dimensional, ill-posed inverse problems have become more prevalent, especially in remote sensing and data assimilation fields.
This setting includes a wide array of physical science applications, spanning Earth science \cite{rodgers}, atmospheric science and remote sensing \cite{liu_2016,patil}, and high energy physics \cite{kuusela_phd_thesis,stanley_unfolding}, among many others.
Providing guarantees for UQ in parameter inference is essential for assessing the precision of downstream scientific inferences.
However, the inherent ill-posed nature of these problems often leads to inferences that are highly sensitive to noise, posing significant challenges for UQ.
Namely, the ill-posedness leads to an identifiability issue in which statistical inference is impossible without providing some form of regularization, which usually takes either a deterministic (e.g., SVD truncation \cite{hocker_svd} and Tikhonov regularization \cite{schmitt_tunfold}) or probabilistic (e.g., priors and Bayesian inference) form.
Under some assumptions, these two approaches are mathematically equivalent and are therefore subject to the same pitfalls, such as a disruption of the coverage guarantees of downstream intervals as a result of the incurred regularization bias, as thoroughly discussed in \cite{kuusela_phd_thesis}.
Our method's focus on one-dimensional QoI's and incorporation of parameter constraints allow for implicit regularization. 
Therefore, it produces intervals with the promised coverage guarantee while avoiding the problem of regularization bias altogether.

Including parameter constraints shifts complexity to statistical inference under constraints, a non-trivial problem in even elementary settings \cite{gourieroux, wolak87, robertson, shapiro1988towards, wolak89,molenberghs2007likelihood}.
\emph{Optimization-based UQ}, originating in the work of \cite{burrus1965utilization} and \cite{rust_burrus}, offers one solution.
\cite{stark1992inference} extended and generalized their approach, calling this method \emph{strict bounds}, since it produces guaranteed simultaneous coverage interval estimators complying with known physical constraints on the model parameters.
This collection of work defines endpoint optimization problems over the physically constrained parameter space to compute confidence intervals for one-dimensional QoI's directly.
Not only does the optimization form of the confidence interval computation shift statistical inference complexity to numerical optimization, but it also allows known physical constraints to be directly included in the endpoint optimizations.
This practical advantage is dulled by the difficulty of proving the coverage properties of the intervals resulting from the defined endpoint optimizations in the one-at-a-time setting.
Even under the relatively strong assumptions in \cite{rust_burrus} of a linear forward model, non-negativity parameter constraints, and linear QoI, the authors were only able to conjecture the coverage of their interval (known as the Burrus Conjecture).
The interval coverage was unsuccessfully proven in \cite{oleary_rust} (the error pointed out in \cite{tenorio2007confidence}) and finally generally refuted in \cite{previous_paper}.
These optimized-based intervals that have gained recent attention take the form:
\begin{equation} \label{eq:opt_based_intervals}
    \cI(\psi_\alpha^2, \by) := \bigg[\varphi^{l}(\psi_\alpha^2, \by), \varphi^{u}(\psi_\alpha^2, \by)\bigg] = \bigg[\min_{\bx \in D(\psi_\alpha^2, \by)} \varphi(\bx), \max_{\bx \in D(\psi_\alpha^2, \by)} \varphi(\bx)\bigg],
\end{equation}
where
\begin{equation} \label{eq:constraint_def}
    D(\psi_\alpha^2, \by) := \big\{\bx \in \mathcal{X} : \lVert \by - f( \bx) \rVert_2^2 \leq \psi_\alpha^2 \big\}.
\end{equation}
The statistical challenge for these intervals is choosing $\psi_\alpha^2$ such that $\cI(\psi_\alpha^2, \by)$ has the desired coverage guarantee.
Here $D(\psi_\alpha^2, \by)$ assumes standardized noise; i.e., the covariance matrix of the noise has been transformed to the identity matrix.

The literature proposes two settings to guarantee coverage.
One approach is to set $\psi_\alpha^2$ such that $D(\psi_\alpha^2, \by)$ is itself a confidence set for $\bx^*$ in the parameter space.
Since this choice would automatically guarantee coverage for $\cI(\psi_\alpha^2, \by)$ regardless of the chosen QoI, it has been called ``simultaneous'' in \cite{oleary_rust}, and ``Simultaneous Strict Bounds'' (SSB) in \cite{stanley_unfolding, previous_paper} since it aligns with the setting of the strict bounds construction in \cite{stark1992inference}.
Under the Gaussian assumption on the additive noise term $\bepsilon$, a $1 - \alpha$ confidence set for $\bx^*$ can be obtained by setting $\psi_{SSB, \alpha}^2 := \chi^2_{n, \alpha}$, where $\chi^2_{n, \alpha}$ is the upper $\alpha$-quantile for a chi-squared distribution with $n$ degrees of freedom.
This approach is conservative since the guarantee simultaneously holds for all possible QoI choices.
By contrast, the ``one-at-a-time'' setting as described in \cite{rust_burrus, oleary_rust} (``one-at-a-time strict bounds'' (OSB) as called in \cite{stanley_unfolding, previous_paper}) tailors the interval coverage to one particular QoI.
Under the same Gaussian assumption this proposed setting is $\psi_{OSB, \alpha}^2 = \chi^2_{1, \alpha} + s(\by)^2$, where $s(\by)^2 := \min_{\bx \in \mathcal{X}} \lVert \by - f(\bx) \rVert_2^2$.
Unlike the simultaneous setting, $D(\psi_{OSB, \alpha}^2, \by)$ is not a $1 - \alpha$ confidence set for all $\bx^*$, which makes proving the coverage guarantee difficult.
The validity of this claim was proposed by the Burrus conjecture \cite{rust_burrus} and generally disproven by \cite{previous_paper}.
To address the challenge of calibrating strict bounds and optimization-based confidence intervals, \cite{previous_paper} approached these intervals as an inverted likelihood ratio test.
The distribution of the log-likelihood ratio (LLR) statistic is non-standard due to the constraints, complicating Type I error control.

Viewing QoI confidence sets as inverted likelihood ratio tests means that confidence set coverage can be achieved by controlling the Type I error rate of the hypothesis test.
A mathematical challenge following from this formulation is that the parameter space of the null hypothesis is composite (i.e., contains more than one parameter setting) and can be unbounded.
Since controlling the Type I error rate of a test with a composite null hypothesis test requires controlling at the worst-case parameter setting, the unboundedness poses a challenge.

\subsection{Our approach and contributions}
We address this unboundedness challenge by introducing the \emph{\bergerboos} set, inspired by \cite{berger_boos} for handling composite null tests involving nuisance parameters, allowing us to create a bounded version of the composite null hypothesis space.
We then develop maximum quantiles over the bounded \bergerboos set to control the hypothesis test Type I error rates and thus construct calibrated confidence sets.
In the additive Gaussian setting used in the algorithms below, this data-adaptive set has the simple form
\begin{equation} \label{eq:intro_bb_preview}
\begin{aligned}
    \Gamma_\eta(\by) &:= \{\by' \in \mathbb{R}^n : \| \by - \by' \|_2^2 \leq \chi^2_{n,\eta}\}, \\
    \mathcal{B}_\eta(\by) &:= f^{-1}(\Gamma_\eta(\by)) \cap \mathcal{X} \\
    &= \{\bx \in \mathcal{X}: \|\by - f(\bx)\|_2^2 \leq \chi^2_{n,\eta}\}.
\end{aligned}
\end{equation}
The set \(\mathcal{B}_\eta(\by)\) contains the true parameter with probability at least \(1-\eta\); the remaining miscoverage budget is used by calibrating the LLR tests at level \(\gamma\), with \(\gamma \leq \alpha-\eta\).
The formal construction and coverage argument are given in \Cref{sec:conf_sets_using_bb_set}.

Since the aforementioned OSB intervals are the state-of-the-art confidence intervals in this constrained inference setting, we conduct a series of numerical experiments showing that our intervals are competitive with OSB in situations where OSB achieves nominal coverage.
In situations where OSB does not achieve nominal coverage, we show that our intervals do, while also achieving smaller expected interval lengths in a high-dimensional experiment.

Methodologically, we propose new confidence interval constructions for the setting described at the beginning of the section.
Although the use of the \bergerboos set is inspired by \cite{berger_boos}, they originally applied it in a hypothesis testing setting for handling nuisance parameters.
\cite{masserano_berger_boos} applied a similar idea to nuisance parameters in a classification setting, but to the best of our knowledge, our paper is the first to use the idea in the ill-posed inverse problem UQ setting.
Inspired by the methods used in simulation-based inference \cite{dalmasso2020confidence, lf2i, waldo}, we apply the approach of estimating the quantile function of a test statistic using simulated data.
This work differs in the underlying model assumptions and composite nature of the null hypotheses. 
In addition, the prior work assumed bounded and relatively low-dimensional parameter spaces.
The novelty of our method is therefore not the Berger--Boos construction, specific sampling algorithms, or quantile regression in isolation, but their integration into a coverage-guaranteed test-inversion procedure for functionals in constrained, ill-posed inverse problems.
To the best of our knowledge, this work is the first to combine test inversion, sampling, and quantile regression to produce constraint-aware confidence intervals for functionals in high-dimensional, potentially unbounded settings.

Computationally, we adapt sampling-based calibration ideas to draw design points from the data-adaptive \bergerboos set and to estimate the relevant LLR quantiles from simulated data. The sampling problem can be challenging because this set is an observed-data-dependent pre-image intersected with parameter constraints, and its geometry reflects the ill-posedness of the forward map. To efficiently sample from this set, we consider a Voelker--Gosmann--Stewart (VGS) \cite{vgs} sampler in low dimensions and a custom-built Polytope/Vaidya-walk sampler \cite{mcmc_polytope} in higher dimensions and demonstrate their effectiveness in concrete ill-posed uncertainty quantification problems.

The rest of the paper is organized as follows.
\Cref{sec:background} provides the necessary statistical preliminaries for our method, including the hypothesis test we invert, the test statistic we use, and the test statistic quantile functions necessary for confidence set calibration.
\Cref{sec:interval_methodology} defines the \bergerboos set and downstream critical values used to calibrate several different confidence set varieties. It also provides two algorithms for computing the confidence sets.
\Cref{sec:theory} provides a necessary theoretical result to justify that the algorithms described in \Cref{sec:interval_methodology} produce the correct confidence sets.
\Cref{sec:sampling-quantileregress} describes two core computational components necessary for the construction of our confidence sets in practice.
\Cref{sec:numerical_exp} presents simple numerical experiments to illustrate the method's operation and a high-dimensional particle physics-inspired experiment to demonstrate our method's performance relative to the state-of-the-art.
Finally, \Cref{sec:discussion_and_conclusion} summarizes our results and presents next steps.

\section{Statistical preliminaries}
\label{sec:background}

A key part of the approach in \cite{previous_paper} was to view the intervals in \Cref{eq:opt_based_intervals} as inverted hypothesis tests, more specifically, a particular inverted likelihood ratio test.
This change in perspective allowed for interval analysis using the properties of a particular log-likelihood ratio (LLR) test.
We summarize this connection here, as it is critical to our method development carried out in Section~\ref{sec:interval_methodology}.

Suppose $\by \sim  P_{\bx^*}$, where $P_{\bx^*}$ is a distribution parameterized by a fixed but unknown $\bx^* \in \mathbb{R}^p$.
Let $\ell_{\bx}(\by)$ denote the log-likelihood of $\bx$ evaluated at $\by$.
We furthermore suppose we know a set $\cX \subseteq \mathbb{R}^p$ such that $\bx^* \in \mathcal{X}$.
The case $P_{\bx} = \mathcal{N}(\bK \bx, \bI)$, where the mean is linear in $\bx$ and the additive noise is Gaussian with standardized covariance, is a scenario of particular interest; we henceforth refer to it as the \emph{linear-Gaussian} case.
In this case, the log-likelihood takes the following form, $\ell_{\bx}(\by) = -\frac{1}{2} \lVert \by - \bK \bx \rVert_2^2 - \frac{n}{2}\log 2\pi$.
Although our numerical experiments in \Cref{sec:numerical_exp} focus on the linear-Gaussian case, we present the framework using the generic likelihood function to show that it theoretically holds more generally.

The duality between hypothesis tests and confidence sets is well known in statistics (see, e.g., Chapter~7 of \cite{casella_berger} or Chapter~5 of \cite{Panaretos2016}). 
We invert the following hypothesis test:
\begin{equation} \label{eq:hypothesis_test}
    H_0: \bx^* \in \Phi_\mu \cap \mathcal{X} \quad \text{versus} \quad H_1: \bx^* \in \mathcal{X} - \Phi_\mu,
\end{equation}
where $\Phi_\mu := \{\bx : \varphi(\bx) = \mu \}$.
Since we seek a confidence interval on the real line, each hypothesis test is defined by $\mu \in \mathbb{R}$.
This null hypothesis is \emph{composite}: fixing the value of the QoI, $\varphi(\bx)=\mu$, generally does not identify a single parameter vector, but instead leaves all parameter settings in the level set $\Phi_\mu \cap \mathcal{X}$.

We define the following test statistic to evaluate Test~\eqref{eq:hypothesis_test}:
\begin{equation} \label{eq:llr}
    \lambda(\mu, \by) := 
    \begin{cases}
        \displaystyle
        \inf_{\bx \in \Phi_\mu \cap \mathcal{X}} \bigl(-2 \ell_\bx(\by)\bigr)
        \;-\; 
        \inf_{\bx \in \mathcal{X}} \bigl(-2 \ell_\bx(\by)\bigr),
        & \text{if } \Phi_\mu \cap \mathcal{X} \neq \emptyset, \\[8pt]
        \infty, 
        & \text{otherwise}.
    \end{cases}
\end{equation}
where $\mu$ denotes the level set of the null hypothesis, $\by$ is the observed data, and $\mathcal{X}$ is the constraint set known to contain the parameter. 
The test is rejected if the test statistic is large, and hence we automatically reject the null if $\Phi_\mu \cap \mathcal{X} = \emptyset$ independently of the observed data. 
Alternatively, we can consider the test \eqref{eq:hypothesis_test} only for $\mu \in \varphi(\mathcal{X})$ to perform test inversion.
Under the linear-Gaussian case, the LLR test statistic is,
\begin{equation} \label{eq:llr_gaussian}
    \lambda(\mu, \by) = \min_{\bx \in \Phi_\mu \cap \mathcal{X}} \lVert \by - \bK \bx \rVert_2^2 - \min_{\bx \in \mathcal{X}} \lVert \by - \bK \bx \rVert_2^2.
\end{equation}

We control the behavior of this test statistic, and therefore the test, by bounding from above the probability of erroneously rejecting the null hypothesis (i.e., Type I error).
As such, we consider the distribution of $\lambda(\mu, \by)$ under the null.
For each $\bx \in \mathcal{X}$, let $\mu = \varphi(\bx)$.
Define $Q_{\bx}: (0, 1) \to \mathbb{R}$ such that, for all $\alpha \in (0, 1)$,
\begin{equation} \label{eq:quantile_function}
    \mathbb{P} \left(\lambda(\mu, \by) \leq Q_{\bx}(1 - \alpha) \right) = 1 - \alpha,
\end{equation}
where the probability is over $\by \sim P_{\bx}$.
We refer to $Q_{\bx}$ as the \emph{quantile function} of the LLR under the null hypothesis at $(\bx, \varphi(\bx) = \mu)$.
Since the null hypothesis is composite, using this quantile function to define a cutoff is not enough to control Type I error.
Thus, we use the \emph{sliced} maximum quantile function over the level set under consideration:
\begin{equation} \label{eq:max_q_mu}
    \localQ := \sup_{\bx \in \Phi_\mu \cap \mathcal{X}} Q_{\bx}(1 - \alpha).
\end{equation}
Note that we refer to this version as ``sliced'' since it considers the maximum of the quantile function defined by the level set of the functional, defining a slice through the constrained parameter space.
$\localQ$ as defined above controls the Type I error for a specific value of $\mu$.
We can define a more conservative confidence set using the following \emph{global} maximum quantile:
\begin{equation}  \label{eq:max_q}
    \globalQ := \sup_{\mu \in \varphi(\mathcal{X})} \localQ = \sup_{\bx \in \mathcal X}  Q_{\bx}(1 - \alpha).
\end{equation}
By Lemma 2.1 in \cite{previous_paper}, the set
\begin{equation}\label{eq:confidence_set}
\mathcal{C}^\mu_\alpha(\by; \mathcal{X}) := \{\mu: \lambda(\mu, \by) \leq \localQ \} \subset \mathbb{R}
\end{equation}
defines a $1 - \alpha$ confidence set for the true functional, $\mu^* = \varphi(\bx^*)$.
The global quantile \eqref{eq:max_q} is more conservative than the sliced \eqref{eq:max_q_mu} since it holds for all null hypotheses, and thus, 
\begin{equation}\label{eq:confidence_set_global}
\mathcal{C}_\alpha(\by; \mathcal{X}) := \{\mu: \lambda(\mu, \by) \leq \globalQ \} \subset \mathbb{R}
\end{equation}
is also a $1 - \alpha$ confidence set.

To connect this framework back to Interval \eqref{eq:opt_based_intervals}, under the linear-Gaussian assumption producing the LLR in \Cref{eq:llr_gaussian}, \cite{previous_paper} (Theorem 2.4) proved that
\begin{equation} \label{eq:interval_equivalence}
    \Bigg[\inf_{\mu: \lambda(\mu, \by) \leq \globalQ} \mu, \sup_{\mu: \lambda(\mu, \by) \leq \globalQ} \mu \Bigg] = \Bigg[\inf_{\bx \in D(\globalQ + s(\by)^2, \by)} \varphi(\bx), \sup_{\bx \in D(\globalQ + s(\by)^2, \by)} \varphi(\bx) \Bigg],
\end{equation}
where $D(\cdot, \by)$ is defined by \Cref{eq:constraint_def}.
This equivalence asserts that by setting $\psi_{\alpha}^2 := \globalQ + s(\by)^2$, where $s(\by)^2 := \min_{\bx \in \cX} \lVert \by - \bK \bx \rVert_2^2$, we guarantee coverage for OSB Interval~\eqref{eq:opt_based_intervals}.
It asserts that the original OSB interval formulation is valid if and only if $\chi^2_{1, \alpha} \geq \globalQ$ (see Lemma 2.2 and Corollary 2.3).
\cite{previous_paper} showed that this inequality generally does not hold.

As explored in \cite{previous_paper} and reiterated above, Interval \eqref{eq:opt_based_intervals} can be calibrated by computing $\localQ$ or $\globalQ$.
However, pursuing calibration in this way is computationally challenging and statistically conservative.
Both of these values require the ability to evaluate $Q_{\bx}$.
Without parameter constraints, this quantile function can be constant under linear-Gaussian assumptions \cite{previous_paper}, i.e., the test statistic is pivotal.
But even with a relatively simple two-dimensional Gaussian noise model with non-negativity constraints, this quantile function becomes non-trivial (e.g., see Figure 5.3 in \cite{previous_paper}).
Beyond the practical difficulty of dealing with the underlying quantile function, both \eqref{eq:max_q_mu} and \eqref{eq:max_q} can be expressed as chance-constrained optimization problems, which are known to be NP-hard in general and would need to be solved over an unbounded constraint set \cite{geng2019data, pena2020solving, previous_paper}.
Statistically, since both \eqref{eq:max_q_mu} and \eqref{eq:max_q} are the largest quantile function values subject to their respective constraints, they are, by definition, conservative, especially in scenarios where the true $\bx^*$ is far from these most conservative points.
Since we do not know $\bx^*$, this conservatism is necessary under the framework of \cite{previous_paper}.

As we see in Section~\ref{sec:interval_methodology}, our method addresses both challenges.
In scenarios where evaluating $Q_{\bx}$ is difficult, we sample a collection of design points in a bounded subset of the constraint set, sample the LLR under the null at each design point, and use quantile regression to estimate the quantile surface.
The most conservative points can potentially be removed from consideration by only considering parameter values that are \emph{not unlikely} given the observed data.

\section{Interval construction methodology}
\label{sec:interval_methodology}

This section presents four related interval constructions building on the theory of \cite{previous_paper} by addressing the aforementioned key challenges.
The first implementation challenge is to handle the potentially unbounded constraint set, for which we define and apply the \bergerboos set to create a data-dependent subset of the original constraints.
This set leads to our four interval definitions, which follow a two-stage taxonomy: Global versus Sliced and Inverted versus Optimized.
The second implementation challenge is computing these interval constructions, which we achieve using a combination of novel sampling algorithms and quantile regression.
We will cover this in a later section (\Cref{sec:sampling-quantileregress}).

We rewrite the data-generating process articulated in the introduction:
\begin{equation} \label{eq:sampling_data_gen}
    \by = f( \bx^*) + \bepsilon, \quad \bepsilon \sim \mathcal{N}(\bm{0}, \bI), \quad \bx^* \in \mathcal{X}
\end{equation}
where $f: \mathbb{R}^p \to \mathbb{R}^n$ is a known forward model and $\text{Cov}(\bepsilon) = \bI$, without loss of generality.
Note, in the linear-Gaussian case, we have $f(\bx^*) = \bK \bx^*$.
Let $f^{-1}(A) := \{\bx: f(\bx) \in A \}$ be the pre-image of a set $A \subset \mathbb{R}^n$ under the forward map $f$.

Although the assumed data-generating process in \Cref{eq:sampling_data_gen} is less general than $\by \sim P_{\bx^*}$ assumed in \Cref{sec:background}, it is sufficiently general to contain the application areas mentioned in \Cref{sec:context_and_related_work}.
Slightly generalizing the form of the additive noise distribution would make the construction of the \bergerboos set more complicated, but it is possible in principle.

\subsection{Global and sliced confidence sets using the \bergerboos set} \label{sec:conf_sets_using_bb_set}

Both $\localQ$ and $\globalQ$ in Section~\ref{sec:background} suffer from the same theoretical and practical concerns.
Theoretically, they are \emph{conservative} in the sense that they control Type I error under the worst case truth (i.e., the parameter setting with the largest quantile).
Practically, not only can $Q_{\bx}$ be difficult to compute, but if the constraint set $\mathcal{X}$ is unbounded, computing quantiles becomes even more difficult.
Both challenges stem from the composite nature of the null hypothesis ($H_0$ in \Cref{eq:hypothesis_test}).
Namely, for a given $\mu$, $\Phi_\mu \cap \mathcal{X}$ usually contains more than one feasible parameter value, so Type I error control must hold uniformly over all of these possible truths.
\cite{berger_boos} introduce a compelling solution in the context of hypothesis testing with nuisance parameters (a special case of hypothesis testing on $\varphi(\bx)$), which is to control Type I error only over a data-informed region of the parameter space. 
Following their construction, we build a confidence interval that, instead of using the maximum $1-\alpha$ quantile over $\mathcal{X}$ in \eqref{eq:max_q} (or $\mathcal{X}\cap \Phi_\mu$ in \eqref{eq:max_q_mu}), uses a larger quantile ($1-\gamma$ with $\gamma < \alpha$) but that is maximized over a smaller set. The construction follows a three-step process:

(1) Choose $\eta \in (0, \alpha)$ and build a $1-\eta$ confidence set for $\bx^*$, $\mathcal{B}_\eta$. 
Under the additive Gaussian noise assumption in \Cref{eq:sampling_data_gen}, letting $\Gamma_\eta(\by) := \{\by' \in \mathbb{R}^n: \lVert \by - \by' \rVert_2^2 \leq \chi^2_{n, \eta} \}$ we have $\mathbb{P} \left( f(\bx^*) \in \Gamma_\eta(\by) \right) = 1 - \eta$, hence 
\begin{equation}
    \label{eq:bbsetdef}
    \mathcal{B}_\eta := \bbset = \{ \bx \in \mathcal{X}: \|\by - f(\bx)\|^2_2 \leq \chi^2_{n, \eta}\}
\end{equation}
is a $1-\eta$ confidence set for $\bx^*$\footnote{Intersecting the pre-image $f^{-1}(\Gamma_\eta(\by))$ with the constraint set $\mathcal{X}$ does not change the coverage probability since we know $\bx^* \in \mathcal{X}$.}. 
This pre-image confidence set, $\mathcal{B}_\eta$, is the \bergerboos set.

(2) Optimize the $1-\gamma$ quantile of the test statistic over $\mathcal{B}_\eta$ (or $\mathcal{B}_\eta \cap \Phi_\mu$), instead of $\mathcal{X}$ (or $\mathcal X \cap \Phi_\mu)$, where $\gamma < \alpha$ is chosen to ensure calibration. 
\Cref{lem:setting_level_parameters_and_coverage} proves that $\gamma = \alpha - \eta$ is valid.

(3) Use the obtained quantiles, which we define as 
\begin{align}
    \aosbQlocal &:= \sup_{\bx \in \mathcal{B}_\eta \cap \Phi_\mu} Q_{\bx} ( 1 - \gamma), \label{eq:mq_local} \\
    \aosbQ &:= \sup_{\bx \in \mathcal{B}_\eta} Q_{\bx} ( 1 - \gamma), \label{eq:mq_global}
\end{align}
and construct the following sliced (sl) and global (gl) confidence sets:
\begin{align}
    \localcs &:= \{\mu : \lambda(\mu, \by) \leq \aosbQlocal \}, \label{eq:bb_local_cs} \\
    \globalcs &:= \{\mu : \lambda(\mu, \by) \leq \aosbQ \}. \label{eq:bb_global_cs}
\end{align}

Figure~\ref{fig:bb_interval_schematic} provides a visual roadmap of the calibration procedure after the relevant quantities have been defined. The left panel illustrates how the LLR quantile function \(Q_{\bx}\) can vary over the constrained parameter space. The right panels show how the data-adaptive \bergerboos set restricts the region over which the quantiles are maximized and how the resulting global or sliced cutoffs are used to invert the LLR test.
\begin{figure}[H]
  \centering
  \includegraphics[width=0.48\textwidth]{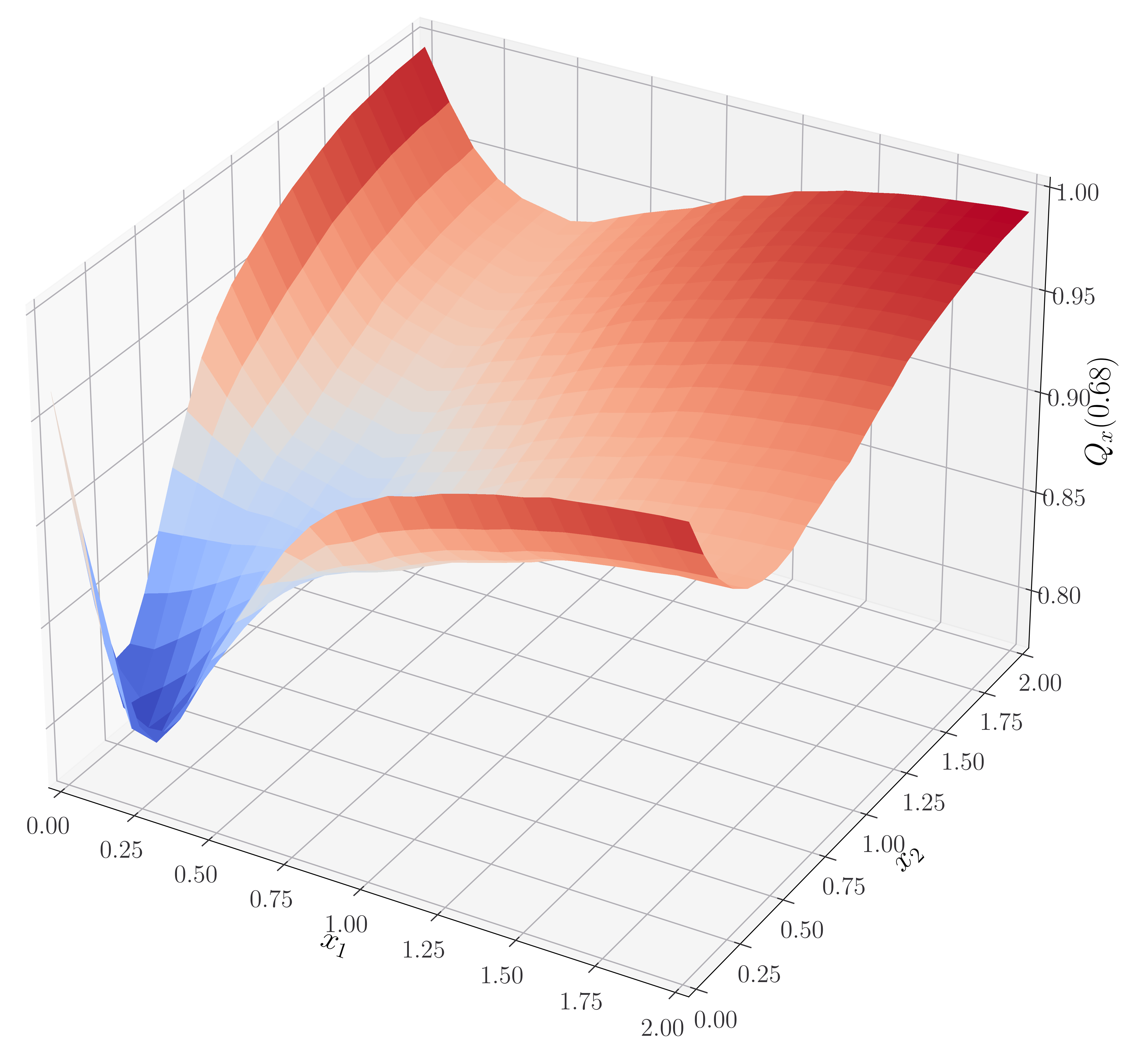}
  \includegraphics[width=0.48\textwidth]{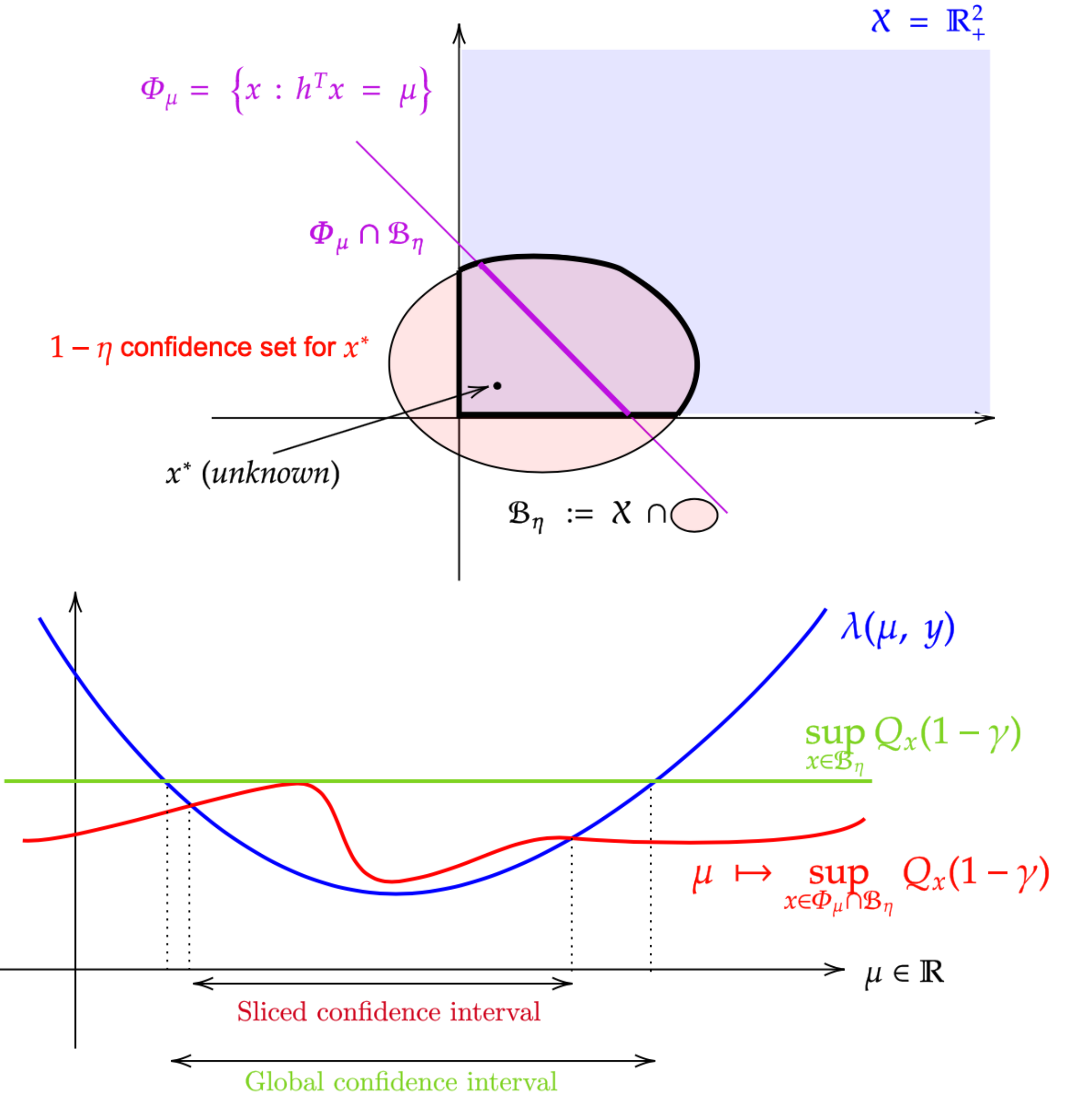}
  \caption[Visual roadmap of data-adaptive calibration]{Visual roadmap of the data-adaptive calibration procedure. Left: an illustrative LLR quantile surface, \(Q_{\bx}(1-\alpha)\), in a two-dimensional constrained-Gaussian example, showing how the relevant quantile can vary over the constrained parameter space. Right-top: the \bergerboos set \(\mathcal{B}_\eta\) restricts the constrained parameter space to a data-adaptive region. Right-bottom: global or sliced LLR cutoffs, based on \(\aosbQ\) or \(\aosbQlocal\), yield different inverted intervals for the QoI.}
  \label{fig:bb_interval_schematic}
\end{figure}

Analogous to the presentation in \cite{previous_paper}, we define \emph{sliced} and \emph{global} max-quantiles that include the \bergerboos set to control the Type I error of Test \eqref{eq:hypothesis_test}. 
Both max-quantiles maximize a larger $\gamma$-quantile over the set of interest instead of the $\alpha$-quantile, as shown in \eqref{eq:max_q_mu} and \eqref{eq:max_q}. 

The following Lemma (analogous to Lemma 2.1 in \cite{previous_paper}) gives a sufficient condition to select the values $\gamma$ and $\eta$ to ensure a $1-\alpha$ coverage. 

\begin{lemma}[Setting $\eta$ and $\gamma$ to guarantee $1 - \alpha$ coverage] 
    \label{lem:setting_level_parameters_and_coverage}
    Let $\alpha \in (0, 1)$ and define $\aosbQlocal$ according \eqref{eq:mq_local} and $\localcs$ according to \eqref{eq:bb_local_cs}. For $\eta \in (0, \alpha)$, $\localcs$ is a $1 - \alpha$ confidence set for $\mu^* = \varphi(\bx^*)$ if $\gamma \leq \alpha - \eta$. 
    The length of the obtained interval is a non-increasing function of $\gamma$, so the tightest interval will be obtained when equality is satisfied.
\end{lemma}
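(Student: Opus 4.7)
The plan is to decompose the non-coverage probability using $\{\bx^* \in \mathcal{B}_\eta\}$ as an auxiliary event and bound the two resulting pieces by $\gamma$ and $\eta$, respectively. Writing $\mu^* := \varphi(\bx^*)$, a union bound gives
\begin{equation*}
\mathbb{P}(\mu^* \notin \localcs) \leq \mathbb{P}\bigl(\mu^* \notin \localcs,\; \bx^* \in \mathcal{B}_\eta\bigr) + \mathbb{P}\bigl(\bx^* \notin \mathcal{B}_\eta\bigr),
\end{equation*}
and the second term is at most $\eta$ directly from the construction of the Berger--Boos set in step (1) preceding the lemma.

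For the first term, the key observation is that whenever $\bx^* \in \mathcal{B}_\eta$ we automatically have $\bx^* \in \mathcal{B}_\eta \cap \Phi_{\mu^*}$, so $\bx^*$ is feasible for the supremum \eqref{eq:mq_local} defining $\aosbQlocal$ at $\mu = \mu^*$, and therefore $Q_{\bx^*}(1 - \gamma) \leq \aosbQlocal$ on this event. This yields the event inclusion
\begin{equation*}
\bigl\{\lambda(\mu^*, \by) > \aosbQlocal\bigr\} \cap \{\bx^* \in \mathcal{B}_\eta\} \subseteq \bigl\{\lambda(\mu^*, \by) > Q_{\bx^*}(1 - \gamma)\bigr\},
\end{equation*}
with $\aosbQlocal$ evaluated at $\mu^*$, and the right-hand event has probability $\gamma$ by the definition \eqref{eq:quantile_function} of $Q_{\bx^*}$. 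Summing the two contributions yields a non-coverage probability of at most $\gamma + \eta$, which is $\leq \alpha$ precisely when $\gamma \leq \alpha - \eta$.

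For the monotonicity-in-$\gamma$ claim, I would note that for every fixed $\bx$ the map $\gamma \mapsto Q_{\bx}(1 - \gamma)$ is non-increasing (a standard property of quantile functions), so the supremum $\aosbQlocal$ is non-increasing in $\gamma$, and the sub-level set $\localcs = \{\mu : \lambda(\mu, \by) \leq \aosbQlocal\}$ can only contract as its threshold decreases. The tightest admissible interval is therefore obtained at $\gamma = \alpha - \eta$.

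The main obstacle is the data-dependence of $\aosbQlocal$ through $\mathcal{B}_\eta$: one cannot simply condition on $\{\bx^* \in \mathcal{B}_\eta\}$ and invoke the definition of a quantile, since that event is not independent of $\by$. The event-inclusion trick above is the cleanest way I see to convert the pointwise domination $Q_{\bx^*}(1-\gamma) \leq \aosbQlocal$ that holds on $\{\bx^* \in \mathcal{B}_\eta\}$ into an unconditional probability bound.
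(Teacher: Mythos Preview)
Your proof is correct and follows essentially the same Berger--Boos decomposition as the paper: in the paper's notation your argument is precisely that $(A_3^c \cap A_1) \subseteq A_2^c$, which is equivalent to the paper's inclusion $(A_1 \cap A_2) \subset (A_1 \cap A_3)$, and the remaining probability bounds are identical. You in fact go slightly further by supplying the monotonicity-in-$\gamma$ argument, which the paper's appendix proof omits.
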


The coverage guarantee implied by \Cref{lem:setting_level_parameters_and_coverage} also implies coverage for $\globalcs$, as shown in the following Corollary. 
Proofs of both statements are provided in \Cref{app:sec:interval_methodology_proofs} (in the Supplementary Material).

\begin{corollary} \label{cor:eta-gamma-alpha-adjustment-coverage}
    In the setting of \Cref{lem:setting_level_parameters_and_coverage}. $\globalcs$ is a $1 - \alpha$ confidence set for $\mu^* = \varphi(\bx^*)$ if $\gamma \leq \alpha -\eta$.
\end{corollary}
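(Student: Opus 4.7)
The plan is to derive the corollary directly from \Cref{lem:setting_level_parameters_and_coverage} by a monotonicity-of-sets argument, exploiting the fact that enlarging the feasible set in the supremum defining the cutoff only enlarges the confidence set, which can only increase coverage.

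First I would observe that for every $\mu \in \varphi(\mathcal{X})$, the set $\mathcal{B}_\eta \cap \Phi_\mu$ is a subset of $\mathcal{B}_\eta$. Taking suprema of $Q_{\bx}(1-\gamma)$ over these nested sets gives the pointwise inequality $\aosbQlocal \leq \aosbQ$ for all $\mu$. Consequently, any $\mu$ satisfying $\lambda(\mu, \by) \leq \aosbQlocal$ also satisfies $\lambda(\mu, \by) \leq \aosbQ$, so $\localcs \subseteq \globalcs$ pointwise in $\by$.

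From \Cref{lem:setting_level_parameters_and_coverage}, the hypothesis $\gamma \leq \alpha - \eta$ with $\eta \in (0, \alpha)$ gives $\mathbb{P}(\mu^* \in \localcs) \geq 1 - \alpha$ for every $\bx^* \in \mathcal{X}$. Combining with the set inclusion above yields
\begin{equation*}
    \mathbb{P}(\mu^* \in \globalcs) \geq \mathbb{P}(\mu^* \in \localcs) \geq 1 - \alpha,
\end{equation*}
which is the desired coverage guarantee.

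There is no real obstacle here; the corollary is essentially a free consequence of the lemma once one notes that the global cutoff dominates the sliced cutoff. The only point to be mildly careful about is the case $\Phi_\mu \cap \mathcal{B}_\eta = \emptyset$ (in which $\aosbQlocal$ is a supremum over an empty set), but this only matters at values of $\mu$ where $\lambda(\mu, \by) = \infty$ as well, so such $\mu$ lie in neither confidence set and the inclusion $\localcs \subseteq \globalcs$ is preserved.
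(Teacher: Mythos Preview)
Your proof is correct and essentially identical to the paper's: both use $\aosbQlocal \le \aosbQ$ to obtain $\localcs \subseteq \globalcs$ and then invoke \Cref{lem:setting_level_parameters_and_coverage}. Your edge-case remark is slightly off (the condition $\Phi_\mu \cap \mathcal{B}_\eta = \emptyset$ does not force $\lambda(\mu,\by)=\infty$, since the LLR is defined over $\Phi_\mu \cap \mathcal{X}$), but the inclusion still holds trivially there because the supremum over the empty set is $-\infty$, so such $\mu$ are never in $\localcs$.
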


\begin{remark}
Following the generality of the original construction, \Cref{lem:setting_level_parameters_and_coverage} generalizes to any $1-\eta$ confidence set of $x^*$ and test statistic calibrated with the $1-\gamma$ quantile. 
In our particular case, since both the confidence set $\mathcal{B}_\eta$ and the test statistic $\lambda(\mu, y)$ depend on the Gaussian log-likelihood $\|\by - f(\bx)\|^2_2$, there is a stronger interpretation of the Berger-Boos construction. 
Whenever $\mathcal{B}_\eta \cap \Phi_\mu \neq \emptyset$, the Berger-Boos construction is equivalent to a data-dependent reduction of the constraint set, replacing $\mathcal{X}$ for $\mathcal{B}_\eta$ both in the test statistic and the quantile optimization problems. Since this is done after seeing the data, the optimized quantile needs to increase to $1-\alpha + \eta$ to maintain $1-\alpha$ coverage.
\end{remark}

\myparagraph{Navigating the \bergerboos parameter choices and trade-offs.}
Setting $\eta = 0$ in $\globalcs$ we maximize $Q_{\bx}(1 - \gamma) = Q_{\bx}(1 - \alpha)$ over $\mathcal{X}$, which returns $Q_{1 - \alpha}^{\text{max}}$ from Equation~\eqref{eq:max_q}. 
For $\eta > 0$, the \bergerboos construction restricts the quantile maximization to a smaller subset of the parameter space while maximizing a larger $1 - \gamma$ quantile to maintain the desired $1 - \alpha$ confidence level of the final confidence set. 
Fundamentally, the $\eta$ choice is a trade-off between these two opposing effects: 
As we increase $\eta$ from $0$, the set over which the quantile function is optimized shrinks, but the quantile level is increased. 
In \Cref{sec:3d_constrained_gaussian}, we perform a numerical experiment comparing average interval length for different values of $\eta$, showing that a small $\eta > 0$ can be beneficial both numerically (since in many instances it allows sampling over a bounded set) and in terms of the average interval length.
In practice, we treat $\eta$ as a small pre-specified calibration parameter rather than choosing it after inspecting coverage outcomes. The original Berger--Boos recommendation and our experiments both support using a small positive value and, when computationally affordable, checking the sensitivity of interval length to a small grid of candidate values. A fully automated rule for choosing an expected-length-optimal $\eta$ while preserving the finite-sample coverage guarantee is an important direction for future work. Related calibration-parameter tuning ideas have been explored in other UQ settings, including objective frequentist uncertainty quantification for atmospheric CO$_2$ retrievals in \cite{patil}; adapting such ideas to the present \bergerboos/test-inversion setting while preserving finite-sample coverage is left for future work.

\subsection{Algorithmic implementations}
\label{sec:interval_constructions}

There are possibly many ways to compute $\globalcs$ and $\localcs$ in practice.
Obtaining either set reduces to the computation of $\aosbQlocal$ and $\aosbQ$.
If the quantile function $Q_{\bx}$ and its gradient $\nabla_{\bx} Q_{\bx}$ could be evaluated, computation could potentially be achieved using a first-order numerical optimizer.
However, since the sliced max-quantile, $\aosbQlocal$, is a function of the level-set parameter, $\mu$, such an optimization would have to be done for each possible functional value.
Since easy function and gradient evaluations rarely exist (see \cite{previous_paper} for some examples where such evaluations are possible), this paper develops a \emph{sampling}-based approach to estimate these quantities to construct Confidence Sets~\eqref{eq:bb_local_cs} and \eqref{eq:bb_global_cs}.
Once we estimate $\aosbQlocal$ and $\aosbQ$, we can either use the output from the sampling algorithm to compute the Global or Sliced interval via classical test inversion, or we can use the estimated max-quantiles in optimizations similar to those in Interval~\eqref{eq:interval_equivalence}.
As such, we introduce four interval constructions: Global Inverted, Global Optimized, Sliced Inverted, and Sliced Optimized.
These four options are summarized in \Cref{tbl:interval_table}.

We use our ability to sample $\lambda(\mu, \by)$ via $\by \sim \cN(f(\bx), \bI)$ to estimate the desired max-quantiles.
We present two algorithms, both of which first generate a random set of design points from the \bergerboos set.
If we can directly \emph{efficiently} compute the desired quantile at each design point, we use \Cref{alg:max_q_rs}.
Absent such an ability, \Cref{alg:max_q_qr} presents an alternative, which first samples a realization of the LLR test statistic at each design point and then performs quantile regression with the generated pairs of design points and LLR values to estimate the underlying quantile surface.
The sampled design points and either the exact or estimated quantiles at the design points are then used to produce the final intervals.

Both algorithms start by sampling $\mathcal{B}_\eta$ uniformly at random to generate a collection of $M$ design points, i.e., $\bar{\bx}_1, \bar{\bx}_2, \dots, \bar{\bx}_M \sim \mathcal{U}\left(\mathcal{B}_\eta \right)$.
Section~\ref{subsec:preimage_sampling} provides the practical sampler choices used for this step: a VGS-based sampler in low-dimensional full-rank settings and a Polytope sampler based on the Vaidya walk in higher-dimensional or rank-deficient settings.
These geometric difficulties are well known in convex-body and polytope sampling. Our contribution is not a new generic sampler for such sets, but rather the use of appropriate samplers to generate design points over the data-adaptive \bergerboos set and across the range of QoI values needed for calibrated test inversion.
If computing $\lambda(\mu, \by)$ for a given $(\mu, \by)$ is inexpensive, \Cref{alg:max_q_rs} directly estimates the quantile function at each design point.
This computation is most likely inexpensive when there are closed-form solutions to the LLR's subordinate optimizations.
As shown in \Cref{alg:max_q_rs}, an easy way to estimate each design point's quantile is to sample its test statistic $N$ times and take the appropriate percentile.
More often, computing $\lambda(\mu, \by)$ is expensive since it involves two constrained optimizations, which are possibly non-convex due to either the constraints or the forward model, or just numerically challenging due to the ill-posedness of the problem.
In this scenario, \Cref{alg:max_q_qr} samples a realization of the test statistic at each design point and estimates the quantile function over the \bergerboos set using quantile regression.
While \Cref{alg:max_q_rs} relies upon computational strength to compute the LLR $N \times M$ times, \Cref{alg:max_q_qr} shifts complexity to the quantile regression and relieves the computational burden by assuming a smooth quantile surface.
We discuss some considerations to this end in Section~\ref{subsec:quantile_regression}.
We note that although \Cref{alg:max_q_rs} and \Cref{alg:max_q_qr} make use of the additive Gaussian noise assumption, they are not necessarily limited to this assumption given the proper adjustments to the \bergerboos set.

The approach of \Cref{alg:max_q_qr} is inspired by recent UQ approaches in likelihood-free inference, where one can sample from a likelihood but cannot easily compute it. 
Specifically, we draw inspiration from the use of quantile regression in \cite{dalmasso2020confidence, lf2i, waldo, masserano_berger_boos}.
Although these approaches differ in detail and implementation from our approach (e.g., they typically focus on settings with low-dimensional $\bx$), they overlap in the sampling and quantile regression perspectives, which effectively allow machine learning to supplement computationally intensive or intractable quantities.
Here, rather than assuming a purely stochastic forward model and therefore only being able to sample from the likelihood, we assume $f$ is a deterministic function involved in $\lambda(\mu, \by)$, which is a random quantity due to the additive noise.
Since \Cref{alg:max_q_qr} involves training a quantile regressor, it includes separate training and testing sets of design points over the \bergerboos set and samples from their respective test statistics.

\begin{algorithm}[!ht]
\caption{Direct estimation of quantiles}
\label{alg:max_q_rs}
\begin{algorithmic}[1]
\REQUIRE
$\alpha, \gamma, \eta \in (0, 1)$ such that $\gamma = \alpha-\eta$, $M, N \in \mathbb{N}$.
\vspace{0.35em}
\STATE \textbf{Construct the \bergerboos set}:
Define $\Gamma_\eta(\by) := \{\by' \in \mathbb{R}^n : \|\by-\by'\|_2^2 \leq \chi^2_{n,\eta}\}$, so that $\mathbb{P}(f(\bx^*) \in \Gamma_\eta(\by)) \geq 1-\eta$, and set $\mathcal{B}_\eta$ as in \Cref{eq:bbsetdef}.
\STATE \textbf{Sample from the \bergerboos confidence set, $\mathcal{B}_\eta$}: 
Sample $\bar{\bx}_1, \dots, \bar{\bx}_M \sim \mathcal{U} \left(\mathcal{B}_\eta \right)$.
\FOR{$k = 1, 2, \dots, M$}
    \STATE \textbf{Sample noise realizations}: Sample $N$ noise realizations: $\bepsilon_1, \dots, \bepsilon_N \sim \mathcal{N}\left(\bm{0}, \bI \right)$.
    \STATE \textbf{Generate LLR draws}: For $i=1,\ldots,N$, set $\by_i := f(\bar{\bx}_k)+\bepsilon_i$ and compute $\lambda_i := \lambda\left(\varphi(\bar{\bx}_k), \by_i; \cX \right)$.
    \STATE \textbf{Compute percentile estimate of the $\gamma$-quantile}: Compute the $(1 - \gamma) \times 100$ percentile of the LLR samples for the data-generating process under $\bar{\bx}_k$, i.e., $\hat{q}_{\gamma}^{k} := \lambda_{(\{(1 - \gamma) N \})}$, where $\{ \cdot \}$ denotes the nearest whole number and $\lambda_{(i)}$ denotes the $i$-th order statistic.
\ENDFOR
\vspace{0.15em}
\ENSURE Pairs of sampled design points and their respective $\gamma$-quantiles, i.e., $\{(\bar{\bx}_k, \hat{q}^k_{\gamma}) \}_{k = 1}^M$.
\end{algorithmic}
\end{algorithm}

\begin{algorithm}[!ht]
\caption{Quantile regression estimate of quantile surface}
\label{alg:max_q_qr}
\begin{algorithmic}[1]
\REQUIRE $\alpha, \gamma, \eta \in (0, 1)$ such that $\gamma = \alpha-\eta$; $M_{\tr}, M \in \mathbb{N}$.
\vspace{0.35em}
\STATE \textbf{Construct the \bergerboos set}: 
Define $\Gamma_\eta(\by) := \{\by' \in \mathbb{R}^n : \|\by-\by'\|_2^2 \leq \chi^2_{n,\eta}\}$, so that $\mathbb{P}(f(\bx^*) \in \Gamma_\eta(\by)) \geq 1-\eta$, and set $\mathcal{B}_\eta$ as in \Cref{eq:bbsetdef}.
\STATE \textbf{Sample from the \bergerboos confidence set $\mathcal{B}_\eta$}: 
Sample $\bar{\bx}_{1}, \dots, \bar{\bx}_{M_{\tr}} \sim \mathcal{U} \left(\mathcal{B}_\eta \right)$ design points to train the quantile regressor and $\tilde{\bx}_{1}, \dots, \tilde{\bx}_{M} \sim \mathcal{U} \left(\mathcal{B}_\eta \right)$ test points to invert the interval, generating $M_{\tr} + M$ total samples.
Since the test points are used for the interval inversion, they are used as out-of-sample points for the quantile regressor.
\FOR{$k = 1, 2, \dots, M_{\tr}$}
    \STATE \textbf{Sample a noise realization:} Sample a noise realization: $\bepsilon_k \sim \mathcal{N}\left(\bm{0}, \bI \right)$.
    \STATE \textbf{Generate one LLR draw:} Set $\by_k := f(\bar{\bx}_{k}) + \bepsilon_k$ and compute $\lambda_k := \lambda\left(\varphi(\bar{\bx}_{k}), \by_k; \cX \right)$.
\ENDFOR
\STATE \textbf{Estimate the quantile function using quantile regression:} Using the generated pairs $\left\{\left(\bar{\bx}_{k}, \lambda_k \right)\right\}_{k = 1}^{M_{\tr}}$, estimate the upper $\gamma$-conditional quantile function, $\hat{q}_{\gamma}(\bx)$, using quantile regression. 
\vspace{0.15em}
\ENSURE Generate $\gamma$-quantile predictions at out-of-sample test points, $\left\{\left(\tilde{\bx}_{k}, \hat{q}_\gamma(\tilde{\bx}_{k}) \right) \right\}_{k = 1}^{M}$.
\end{algorithmic}
\end{algorithm}

With the generated pairs from Algorithms~\eqref{alg:max_q_rs} and \eqref{alg:max_q_qr}, we present two strategies to estimate each of $\globalcs$ and $\localcs$.
To streamline notation, let $\left(\bx_k, q_\gamma^k \right)$ denote the $k$-th pair from either algorithm.
This is notationally helpful since \Cref{alg:max_q_rs} only generates one set of parameter samples, whereas \Cref{alg:max_q_qr} generates two.
That is, using \Cref{alg:max_q_rs}, $\bx_k := \bar{\bx}_k$ and $q_\gamma^k := \hat{q}_\gamma^k$ and using \Cref{alg:max_q_qr}, $\bx_k := \tilde{\bx}_{k}$ and $q_\gamma^k := \hat{q}_\gamma(\tilde{\bx}_{k})$.

To estimate the global confidence set, we estimate $\aosbQ$ using the empirical maximum, $\hat{q} := \max_k q_\gamma^k$.
This results in the following two interval constructions:
\begin{align}
    C^{\gl}_{\opt}(\by) &:= \min/\max \big\{\varphi(\bx): \bx \in D(\hat{q} + s(\by)^2, \by) \big\} \label{eq:global_optimized} \\
    C^{\gl}_{\inv}(\by) &:= \min / \max \big\{\varphi(\bx_k): k = 1, \dots, M \; \text{and} \; \lambda(\varphi(\bx_k), \by; \cX) \leq \hat{q} \big\}, \label{eq:global_inverted}
\end{align}
We refer to Interval~\eqref{eq:global_inverted} as the Global Inverted interval construction since it is endpoints defined by only those sampled parameter values that comport with the maximum estimated quantile LLR cutoff.
We refer to Interval~\eqref{eq:global_optimized} as the Global Optimized interval since its endpoints are defined by the extreme functional values of a feasible region defined by $\hat{q}$.
Although, $C^{\gl}_{\inv}(\by) \neq C^{\gl}_{\opt}(\by)$ due to finite sample, they are asymptotically equal and in practice show similar performance in terms of coverage and expected length, as shown in \Cref{sec:numerical_exp}.
We prove the consistency of the interval constructed via inversion and via optimization in \Cref{thm:maintheoremalgs}.
Since the endpoints of both the inverted and optimized intervals converge in probability to the endpoints of $\globalcs$, the two interval constructions are asymptotically equivalent.

To estimate the sliced set, we estimate $\aosbQlocal$ using a rolling maximum of the sampled $q_\gamma^k$ values, as ordered by the sampled functional values, and directly accept or reject functional values based on their estimated quantile.
These approaches result in two interval constructions:
\begin{align}
    C^{\sl}_{\opt}(\by) &:= \min/\max \big\{ \mu \in \mathbb{R}: \lambda(\mu, \by; \cX) \leq \hat{m}_\gamma(\mu) \big\}, \label{eq:sliced_optimized} \\
    C^{\sl}_{\inv}(\by) &:= \min/\max \big\{ \varphi(\bx_k): k = 1, \dots, M \; \text{and} \; \lambda(\varphi(\bx_k), \by; \cX) \leq q_\gamma^k \big\}, \label{eq:sliced_inverted}
\end{align}
where $\hat{m}_\gamma(\mu)$ denotes rolling estimate of $\aosbQlocal$ defined as follows.
We refer to $I^{\sl}(\by)$ as the ``sliced'' index set, i.e., those indices for which the LLR at a particular functional value is less than the estimated quantile at a design point generating that functional value.

The rolling maximum quantile is defined using estimated quantiles ordered by the sampled functional values.
Choose ``rolling'' parameter, $T \in \mathbb{N}$, and let $\sigma(1), \sigma(2), \dots, \sigma(M)$ define an ordering such that $\mu_{\sigma(k)} \leq \mu_{\sigma(k + 1)}$ for all $k = 1, \dots, M - 1$.
Define $$Q_k := \{q^{\sigma(k)}_\gamma, q^{\sigma(k - 1)}_\gamma, \dots, q^{\sigma(k - T)}_\gamma \}.$$
Then, for a given $\mu \in \left[\min_{\bx \in \mathcal{B}_\eta} \varphi(\bx), \max_{\bx \in \mathcal{B}_\eta} \varphi(\bx) \right]$, define $k^*(\mu) := \text{argmin}_k \lvert \mu - \mu_{\sigma(k)} \rvert$ and define $\hat{m}_\gamma(\mu) := \max \{q \in Q_{k^*(\mu)} \}$.
Using the rolling maximum quantile is one way to estimate $\aosbQlocal$.
One could also bin the quantiles by functional value, compute the maximum predicted quantile in each bin, and then fit a nonparametric regression to fit the maximum binned quantiles to the functional values.

This estimator choice for $\hat{m}_\gamma(\mu)$ affects how one computes $C^{\sl}_{\opt}(\by)$.
To see why, note that we can re-express $C^{\sl}_{\opt}(\by)$ as follows:
\begin{equation}\label{eq:opt_endpoints}
    C^{\sl}_{\opt}(\by) := \min/\max \big\{\varphi(\bx): \bx \in D(\hat{m}_\gamma(\varphi(\bx)) + s(\by)^2, \by) \big\}.
\end{equation}
As such, the estimated $\hat{m}_\gamma(\cdot)$ could be substituted in each endpoint optimization.
However, this estimated curve is likely not convex in $\bx$ and therefore complicates optimizations.
One could also pursue a root-finding approach to find the set of $\mu$ such that $\lambda(\mu, \by) = \hat{m}_\gamma(\mu)$ since these intersection points define a set of accepted functional values.
This approach can be as complex as the complexity of the estimated curve.
In our view, the most pragmatic approach is to determine all $\mu_k$ such that $\lambda(\mu_k, \by; \mathcal{X}) \leq \hat{m}_\gamma(\mu_k)$, and then define the endpoints of $C^{\sl}_{\opt}(\by)$ as the minimum and maximum of the accepted sample values.
The consistency of the computed endpoints for both inverted and optimized sliced intervals is proven in \Cref{thm:maintheoremalgs}.
Similarly to the global interval constructions, the consistency of both constructions to $\localcs$ establishes the asymptotic equivalence of the inverted and optimized approaches.

\begin{table}[!t]
\centering
\caption{Summary of methods based on global and sliced max quantile, and whether they are optimization-based or inversion-based.} \label{tbl:interval_table}
\begin{tabularx}{\textwidth}{X X X}
Category & Global max quantile & Sliced max quantile \\
\midrule
Inversion-based methods & Interval \eqref{eq:global_inverted} & Interval \eqref{eq:sliced_inverted} \\
Optimization-based methods & Interval \eqref{eq:global_optimized} & Interval \eqref{eq:sliced_optimized} \\
\bottomrule
\end{tabularx}
\end{table}

Although both Algorithms~\ref{alg:max_q_rs} and \ref{alg:max_q_qr} can produce a quantile estimate, it is worth noting a few key differences between the two. 
First, \Cref{alg:max_q_rs} involves a nested sampling loop. 
Thus, any statistical guarantee regarding the validity of its output has the two moving parts of the accuracy of $\hat{q}_\gamma^k$ as $N$ gets large and the proximity of the approximated quantile to the true maximum quantile function value over $\mathcal{B}_\eta$ as $M$ gets large. 
By contrast, \Cref{alg:max_q_qr} only has one sampling loop and estimates the full quantile function surface over $\mathcal{B}_\eta$, producing an estimate of the maximum as a consequence, but is reliant on smoothness to do an accurate regression with finitely many data points. 
In \Cref{sec:theory}, we provide a rigorous theoretical justification for the consistency of each algorithm estimate and prove that the consistency propagates to all defined interval constructions.

\section{Theoretical justification}
\label{sec:theory}

In this section, we provide convergence results for the interval constructions of the previous section. 
We prove that, under certain assumptions, whenever \Cref{alg:max_q_rs} or \Cref{alg:max_q_qr} are used to estimate the maximum quantiles, the global interval constructions \eqref{eq:global_optimized} and \eqref{eq:global_inverted} converge in probability to the true global interval $\globalcs$ in \eqref{eq:bb_global_cs}, and the sliced interval constructions \eqref{eq:sliced_optimized} and \eqref{eq:sliced_inverted} converge in probability to the true sliced interval $\localcs$ in \eqref{eq:bb_local_cs}.
Although our approach involving quantile regression is inspired by the approaches in \cite{dalmasso2020confidence, lf2i, waldo} and therefore requires similar theoretical results to connect the consistency of the quantile estimation with the interval validity, our approach is sufficiently different so that it requires novel theoretical insights.
First, the inversion of a hypothesis test with a composite null adds a layer of complexity to the proofs.
\cite{lf2i} do address composite null hypotheses, but not in the functional setting, as we do here.
Second, since we use the sampled points to invert the interval, it is insufficient to show that the cutoff is consistent as the results in \cite{dalmasso2020confidence, lf2i, waldo} show.
Instead, our inverted intervals require proof that the samplers can get arbitrarily close to the true endpoint boundaries.

Throughout the following results, we assume that for fixed $\gamma \in (0, 1)$, $Q_{\bx}(1 - \gamma)$ is a continuous function of $\bx$.
We refer the reader to \cite{Kibzun1997} for a complete analysis of the properties of parametrized quantile functions.
We furthermore assume that $\mathcal B_\eta$ is a compact set without isolated points so that the sampling methods eventually sample close to every point.

\begin{theorem}\label{thm:maintheoremalgs}
    Assume that $Q_{\bx}(1 - \gamma)$ is a continuous function of $\bx$, and let $\mathcal B_\eta$ be compact and without isolated points. 
    Let the quantile regression in \Cref{alg:max_q_qr} be consistent for all $\bx$, i.e, such that  
    $\mathbb{P}(\lvert \hat{q}_\gamma(\bx) - Q_{\bx}(1 - \gamma) \rvert > \varepsilon) \to 0 \text{ as } M_{tr} \to \infty$ is satisfied $\forall \varepsilon > 0$. 
    We will write $\convprob{}{}$ for convergence in probability, understood as $N,M \to \infty$ if \Cref{alg:max_q_rs} is used and as $M_{tr}, M \to \infty$ if \Cref{alg:max_q_qr} is used. For either algorithm, we have, for a given observation $\by$:
    $\convprob{C^{\gl}_{\inv}(\by)}{C^{\gl}_\alpha(\by)}$,
    $\convprob{C^{\sl}_{\inv}(\by)}{C^{\sl}_\alpha(\by)}$.
Further assume that there exists a point $\bar{\mu} \in \varphi(\mathcal{B}_\eta)$ satisfying $\lambda(\bar{\mu}, \by; \mathcal{X}) < \aosbQ$, then
    $\convprob{C^{\gl}_{\opt}(\by)}{C^{\gl}_\alpha(\by)}$.
Finally, further assuming technical conditions discussed in \Cref{proofs_appendix}, then
    $\convprob{C^{\sl}_{\opt}(\by)}{C^{\sl}_\alpha(\by)}$.
\end{theorem}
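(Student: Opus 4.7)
The plan is to reduce the convergence of $C^{\sl}_{\opt}(\by)$ to two ingredients: a uniform-in-$\mu$ consistency statement for the rolling cutoff $\hat{m}_\gamma$, and a sandwich argument at each endpoint. Let $\mu_{\max}^* := \sup \localcs$ and $\mu_{\max}^{\opt} := \sup C^{\sl}_{\opt}(\by)$, with symmetric notation for the infima. Then the goal is $\convprob{\mu_{\max}^{\opt}}{\mu_{\max}^*}$ and its lower-endpoint analogue. Both endpoints follow the same template, so I would focus on the upper one.

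The first and most delicate ingredient is the uniform statement
\[
\convprob{\sup_{\mu \in \varphi(\mathcal{B}_\eta)} \bigl\lvert \hat{m}_\gamma(\mu) - \aosbQlocal \bigr\rvert}{0},
\]
where $\aosbQlocal$ is read as a function of $\mu$. I anticipate the appendix's technical conditions to include (i) the window parameter grows with $M$ as $T = T(M) \to \infty$ with $T/M \to 0$; (ii) $\varphi$ is sufficiently regular on $\mathcal{B}_\eta$ (for instance, Lipschitz with non-degenerate level sets) so that a thin slab $\{\bx \in \mathcal{B}_\eta : |\varphi(\bx) - \mu| \leq \delta\}$ concentrates on $\Phi_\mu \cap \mathcal{B}_\eta$ as $\delta \to 0$ uniformly in $\mu$; and (iii) $\mu \mapsto \aosbQlocal$ is continuous on the compact set $\varphi(\mathcal{B}_\eta)$, which in turn would follow from continuity of $Q_{\bx}(1-\gamma)$ plus Hausdorff continuity of the slice map via a Berge maximum-theorem style argument. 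Under these, the window $Q_{k^*(\mu)}$ has $\mu$-width of order $T/M \to 0$, contains $T \to \infty$ samples whose parameter-space image approaches $\Phi_\mu \cap \mathcal{B}_\eta$, and combining pointwise consistency of the individual $q_\gamma^k$ (already inherited from the proof of claim 1) with continuity of $Q_{\bx}(1-\gamma)$ yields pointwise convergence of $\hat{m}_\gamma(\mu)$; compactness plus equicontinuity then lifts pointwise to uniform convergence.

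With uniform consistency of the cutoff in hand, endpoint convergence follows by a standard sandwich. For the upper bound, any accepted sample satisfies $\lambda(\varphi(\bx_k), \by) \leq \hat{m}_\gamma(\varphi(\bx_k)) = \aosbQlocal + o_p(1)$, so by continuity of $\lambda(\cdot, \by)$ and $\mu \mapsto \aosbQlocal$ the sample lies within an $o_p(1)$ neighborhood of $\localcs$, giving $\mu_{\max}^{\opt} \leq \mu_{\max}^* + o_p(1)$. For the matching lower bound I would invoke a non-degeneracy condition analogous to the one stated for claim 3, namely that there exist $\mu_0 \in \localcs$ arbitrarily close to $\mu_{\max}^*$ with $\lambda(\mu_0, \by) < \aosbQlocal$ at $\mu_0$. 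The density of the uniform sampler on $\mathcal{B}_\eta$ (from compactness together with the no-isolated-points assumption) yields samples with $\varphi(\bx_k)$ arbitrarily close to $\mu_0$; the strict gap together with uniform consistency of $\hat{m}_\gamma$ ensures that such samples are accepted with probability tending to one, so $\mu_{\max}^{\opt} \geq \mu_0 - o_p(1)$, and letting $\mu_0 \uparrow \mu_{\max}^*$ closes the sandwich.

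The main obstacle is the uniform consistency of $\hat{m}_\gamma$. Unlike the global case of claim 3, where a single scalar cutoff is estimated and the argument reduces to convergence of one maximum, here the estimator is a function of $\mu$ that must be controlled uniformly, which requires the delicate window-growth balance $T(M) \to \infty$ with $T/M \to 0$ together with the $\varphi$-regularity assumption linking $\mu$-neighborhoods to $\bx$-neighborhoods. A secondary but non-trivial subtlety is verifying continuity of $\mu \mapsto \aosbQlocal$ as a supremum over a moving constraint set, which I expect to follow from a Berge-type argument once Hausdorff continuity of the slice map is established. These are the two pieces I expect to constitute the technical conditions referenced in \Cref{proofs_appendix}.
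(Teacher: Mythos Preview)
Your plan shares the paper's key ingredient: both require uniform-in-$\mu$ convergence in probability of $\hat{m}_\gamma(\mu)$ to $\aosbQlocal$, and both list this as a technical condition. You go further than the paper by proposing concrete sufficient conditions for this uniform convergence (the window schedule $T(M)\to\infty$, $T/M\to 0$; regularity of $\varphi$ linking $\mu$-slabs to $\bx$-slices; Berge-type Hausdorff continuity of the slice map). The paper does not develop any of that---it simply assumes uniform convergence as one of the stated technical conditions and moves on.

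Where the two diverge is the endpoint step. The paper does not run a sandwich. Instead it sets $f(\mu)=m_\gamma(\mu)-\lambda(\mu,\by)$, $\hat f_k(\mu)=\hat m_\gamma(\mu)-\lambda(\mu,\by)$, and proves an abstract lemma: if $\hat f_k\to f$ uniformly in probability and $f$ satisfies a ``well-separated zero'' condition at the endpoint $\mu^*$ (small $|f(\mu)|$ forces $\mu$ close to $\mu^*$), then $\inf\{\mu:\hat f_k(\mu)\ge 0\}\to\inf\{\mu:f(\mu)\ge 0\}$, via a short contradiction argument. Your sandwich is morally equivalent, but note a soft spot in your upper-bound direction: from $\lambda(\mu_k,\by)\le \aosbQlocal+o_p(1)$ you infer $\mu_k$ lies in an $o_p(1)$ neighborhood of $\localcs$ ``by continuity''. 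Continuity alone is not enough---if $f$ grazes zero tangentially near the boundary, the level sets $\{f\ge -\delta\}$ need not shrink to $\{f\ge 0\}$. You need exactly the crossing/regularity condition the paper imposes on $f$, not just on the lower-bound side but symmetrically on both. Once you add that, your argument and the paper's coincide in content; the paper's abstract-lemma packaging is simply more compact.
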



Note that if $\lambda(\mu, \by)$ is convex in $\mu$, as it is for linear forward models and quantities of interest \cite[Proposition 2.5]{previous_paper}, the condition of the existence of $\bar \mu$ satisfying $\lambda(\bar{\mu}, \by; \mathcal{X}) < \aosbQ$ is equivalent to the interval not being empty. For nonlinear forward models, it is a slightly stronger condition.
For the convergence of the sliced optimized version, one needs to show that $\inf_{\mu : \lambda(\mu, \by) \leq \hat{m}_\gamma(\mu)} \mu$ converges to $\inf_{\mu : \lambda(\mu, \by) \leq m_\gamma(\mu)} \mu$ as $\hat{m}_\gamma(\mu)$ converges to $m_\gamma(\mu)$. 
In order to do so, we study the convergence of optimization problems of the form $\inf_{\mu : \hat{f}(\mu) \geq 0} \mu$ to $\inf_{\mu: f(\mu) \geq 0} \mu$ as $\hat{f}$ converges to $f$. 
Although the result is not true in general, we provide sufficient technical conditions on $f$ and the uniformity of the convergence of $\hat{f}$ to $f$ for the result to hold. 
The technical details and proofs of these results are provided in \Cref{proofs_appendix} (in the Supplement).

\section{Implementation methodology}
\label{sec:sampling-quantileregress}

\subsection{Sampling the pre-image \bergerboos set}
\label{subsec:preimage_sampling}

The viability of this method directly relies upon our ability to sample from the \bergerboos set, $\mathcal{B}_\eta = \bbset$.
Under the assumed data-generating process in Equation~\eqref{eq:sampling_data_gen}, the \bergerboos set is defined as follows:
\begin{equation} \label{eq:sim_conf_set_easy}
    \mathcal{B}_\eta = \bbset = \big\{\bx \in \mathcal{X} : (\by - f(\bx))^\top \bSigma^{-1} (\by - f(\bx)) \leq \chi^2_{n, \eta}\big\},
\end{equation}
where we have generalized to a non-identity covariance matrix, $\bSigma$, to make the following exposition more general.
This set is equivalent to the set over which the strict bounds intervals are optimized in \cite{stark_strict_bounds}, also called ``SSB'' intervals in \cite{stanley_unfolding, previous_paper}.
Note, however, that one would use $\chi^2_{n, \alpha}$ instead of $\chi^2_{n, \eta}$ for $1 - \alpha$ interval computation in that scenario.

We focus on sampling strategies under the linear-Gaussian setting ($f(\bx)=\bK\bx$) corresponding to the numerical experiments described in Section~\ref{sec:numerical_exp}. 
Sampling from constrained ellipsoids, polytopes, and convex bodies is a well-developed topic; the purpose of this subsection is therefore to explain which existing sampling ideas are useful for the particular \bergerboos sets arising in our calibrated interval construction, and what trade-offs they introduce.
In low-dimensional cases with full column rank, $\mathcal{B}_\eta$ becomes an ellipsoid intersected with the parameter constraints $\mathcal{X}$.
In such settings, we apply the efficient Voelker--Gosmann--Stewart (VGS) algorithm from \cite{vgs}, which samples uniformly from the ellipsoid by using a suitable linear transformation of uniform samples from a ball, followed by a constraint-based accept-reject step.

However, the VGS algorithm becomes either inefficient or infeasible as dimensionality and ill-conditioning of the forward model increase, due to rapidly declining accept-reject probabilities.
To address these challenging scenarios, we introduce an MCMC-based sampling approach called the Polytope sampler, inspired by the Vaidya walk algorithm from \cite{mcmc_polytope}. 
This method constructs a bounding polytope around $\mathcal{B}_\eta$ defined by linear constraints (e.g., derived from the parameter constraints and eigenvectors of the forward model) and performs parallel random walks starting from strategically chosen points. 
These starting points span the space between the extremes of the functional of interest and the polytope's Chebyshev center. 
Such parallelization helps to ensure exploration of the \bergerboos set without relying solely on a single chain reaching stationarity, which may be challenging due to the geometry induced by parameter constraints.
Additional technical details, including exact algorithms and further illustrations of both samplers, are provided in \Cref{app:sec:sampling_bergerboos} (in the Supplementary Material).

\subsection{Quantile regression}
\label{subsec:quantile_regression}

\Cref{alg:max_q_qr} explained in \Cref{sec:interval_constructions} uses quantile regression to learn a quantile surface from a collection of pairs of design points and LLR test statistic samples.
As mentioned previously, similar approaches have been taken in \cite{dalmasso2020confidence, lf2i, waldo, masserano_berger_boos}, and since quantile regression is a technique that facilitates our interval constructions, we give a brief overview of quantile regression and different ways to implement it in \Cref{app:sec:quantile_regression_overview}.

Using the design points in the \bergerboos set as sampled via the VGS or Polytope samplers, \Cref{alg:max_q_qr} shows how we sample from the test statistic distribution defined at each design point to define a data set to fit a quantile regressor.
We use the Gradient Boosting Regressor implemented in scikit-learn \cite{scikit-learn} with the ``quantile'' loss function (i.e., pinball loss defined in \cref{eq:pinball_loss}) for our numerical examples in \Cref{sec:numerical_exp}.
This algorithm involves a collection of hyperparameters (i.e., the minimum number of samples required to split an internal node, the minimum number of samples needed to be a leaf node, the maximum depth of any individual estimator, the learning rate, and the number of estimators) which we determine using $10$-fold cross validation in a pilot study ahead of our simulation experiments in \Cref{sec:numerical_exp}.
Although one may use any quantile regressor to estimate the quantile surface, we emphasize the importance of choosing a strategy that can accommodate a nonlinear surface in the parameters (such as the Gradient Boosting Regressor) as the parameter constraints are known to produce nonlinear quantile surfaces in even simple examples as seen in \cite{previous_paper}.

Practically, the quantile-regression step trades repeated LLR simulation at every design point for a modeling assumption about the smoothness of the conditional quantile surface.
This trade-off is beneficial when LLR evaluations require expensive constrained optimizations, but it also makes diagnostic checks important: in our experiments we use a pilot cross-validation study to choose hyperparameters and reserve out-of-sample design points for interval construction.
For substantially higher-dimensional problems, the same pipeline can be used in principle, but its performance will depend on obtaining enough design points to cover the relevant functional range and on using a regressor with sufficient inductive bias for the quantile surface.

\section{Numerical experiments}
\label{sec:numerical_exp}

For scenarios within the case of the linear-Gaussian data-generating process of \eqref{eq:sampling_data_gen}, we regard the OSB interval as the previous state-of-the-art option for computing constraint-aware confidence intervals.
Although these intervals have empirically achieved nominal coverage in applications \cite{patil, stanley_unfolding}, they generally do not guarantee coverage \cite{previous_paper}.
As such, we use the OSB interval in the following numerical experiments as a main comparison point to the intervals defined in this paper.
In scenarios where the OSB interval achieves at least nominal coverage, we show that our intervals are either competitive or better in terms of expected interval length.
In scenarios where the OSB interval does not achieve nominal coverage, our intervals do achieve nominal coverage and can have shorter expected length.
The first two experiments use a constrained Gaussian noise model in two or three dimensions to make the above points while demonstrating parts of the interval construction.

Given the straightforward mathematics of these problems, both the mechanics of our intervals and their performance improvements over the OSB interval can be clearly seen.
The second two experiments (inspired by particle unfolding in high-energy physics \cite{stanley_unfolding}) consider a wide-bin deconvolution setup featuring a substantially more challenging 80-dimensional parameter space with a rank-deficient forward model.
These examples demonstrate the superior performance of our intervals over OSB in terms of both coverage and expected length.
The experiments are intentionally ordered from pedagogical to application-oriented.
The two-dimensional example is a visualization and sanity-check case in which the geometry of the \bergerboos set, sampled design points, and LLR cutoffs can be inspected directly.
The three-dimensional example is a calibration stress test in which the OSB interval fails to achieve nominal coverage, while the proposed intervals retain the desired coverage.
The wide-bin deconvolution experiments are the application-oriented inverse-problem UQ examples: they use a rank-deficient, high-dimensional forward model motivated by particle unfolding and test whether the full sampling and quantile-regression pipeline remains effective in the setting for which the method is designed.

In the following experiments, we form $68$\% confidence intervals, set $\eta = 0.01$, and compute $\gamma$ according to \Cref{lem:setting_level_parameters_and_coverage}.
We draw $10^3$ observations from each data-generating process to estimate both coverage and expected interval length for our four interval constructions and the OSB interval.
We also provide $95$\% confidence intervals in the form of orange line segments to characterize the statistical error for both coverage and expected length estimates.
The coverage confidence intervals are Clopper-Pearson intervals for the success probability parameter of a binomial distribution, while the expected length confidence intervals are the average length plus/minus the appropriately scaled standard error of the mean.
The source code to reproduce our experiments is available at \texttt{https://github.com/mcstanle/adaptive-functional-ci}.

\subsection{Constrained Gaussian in two dimensions} \label{sec:2d_constrained_gaussian}

The two-dimensional Gaussian noise model is defined as follows:
\begin{equation}
    \by = \bx^* + \bepsilon, \quad \bepsilon \sim \mathcal{N}\left(\bm{0}, \bI_2 \right), \quad \bx^* \in \mathbb{R}^2_+,
\end{equation}
where $\varphi(\bx) = x_1 - x_2$ and $\bx^* = \begin{pmatrix} 0.5 & 0.5 \end{pmatrix}^\top$.
The LLR is then given as follows:
\begin{equation}
    \lambda(\mu, \by) = \min_{\substack{x_1 - x_2 = \mu \\ \bx \in \mathbb{R}^2_+}} \lVert \by - \bx \rVert_2^2 - \min_{\bx \in \mathbb{R}^2_+} \lVert \by - \bx \rVert_2^2.
\end{equation}
This example first appeared in \cite{tenorio2007confidence} as a case where the OSB interval allegedly fails to achieve nominal coverage when the true parameter $\bx^*$ is such that $\varphi(\bx^*) = 0$.
Since \cite{previous_paper} overturned this result by proving OSB validity in this case, this example is important to include for its historical context and OSB validity.
The proof of OSB interval coverage in this case relies upon showing that $\globalQ = \chi^2_{1, \alpha}$ for all $\alpha \in (0, 1)$, where $\chi^2_{1, \alpha}$ is the upper $\alpha$-quantile of a chi-squared distribution with one degree of freedom.
This result is shown in Lemma 4.4 of \cite{previous_paper}.

Estimated coverage and length results are shown in \Cref{fig:2d_coverage_length}.
We note that all four of our interval constructions are competitive with OSB in terms of coverage, while all of our interval constructions have higher estimated expected length, apart from the Sliced constructions, which are within statistical error of OSB.
Since the OSB interval is defined using $\globalQ = \chi^2_{1, \alpha}$ and the $\alpha$-quantile surface of the LLR rapidly approaches this global max-quantile as one moves away from the origin (see Figure 5.3 in \cite{previous_paper}), the OSB interval lengths are difficult to beat in practice with intervals based on the \bergerboos sets since these sets likely contain parameter settings with quantiles near $\chi^2_{1, \alpha}$.
The left panel of \Cref{fig:2d_bb_sets} shows four realizations of the data-generating process with the observations shown as red points.
For each observation, the blue points show uniformly distributed draws within its \bergerboos set, sampled using the VGS sampler.
Cross-referencing the spread of the \bergerboos set samples in \Cref{fig:2d_bb_sets} with Figure 5.3 in \cite{previous_paper}, it is clear that there are always samples in the parameter space where the quantile surface is nearly the same as the $\chi^2_{1, \alpha}$ quantile.
Further, when including the \bergerboos set in the interval construction, we instead construct our intervals using the $\gamma$-quantile, where $\gamma < \alpha$, as the LLR cutoff, resulting in a more relaxed constraint.
This fact can be clearly observed in the central panel of \Cref{fig:2d_bb_sets}, showing the sampled $\gamma$-quantiles within the \bergerboos set of one observation from the data-generating process.
Since a non-trivial portion of this distribution is above $\chi^2_{1, \alpha}$, the longer average length of the Global intervals is explained.
In the right panel of \Cref{fig:2d_bb_sets}, for the same observation, we show the estimated sliced max-quantile function, $\hat{m}_\gamma(\mu)$, in orange alongside $\chi^2_{1, \alpha}$.
Since this estimated function is above $\chi^2_{1, \alpha}$ at their intersection points with the underlying LLR function shown in the solid blue line, it further makes sense that the Sliced interval constructions provide no additional length improvement compared to the OSB interval in this particular setting.

\begin{figure*}[!ht]
    \includegraphics[width=0.49\textwidth]{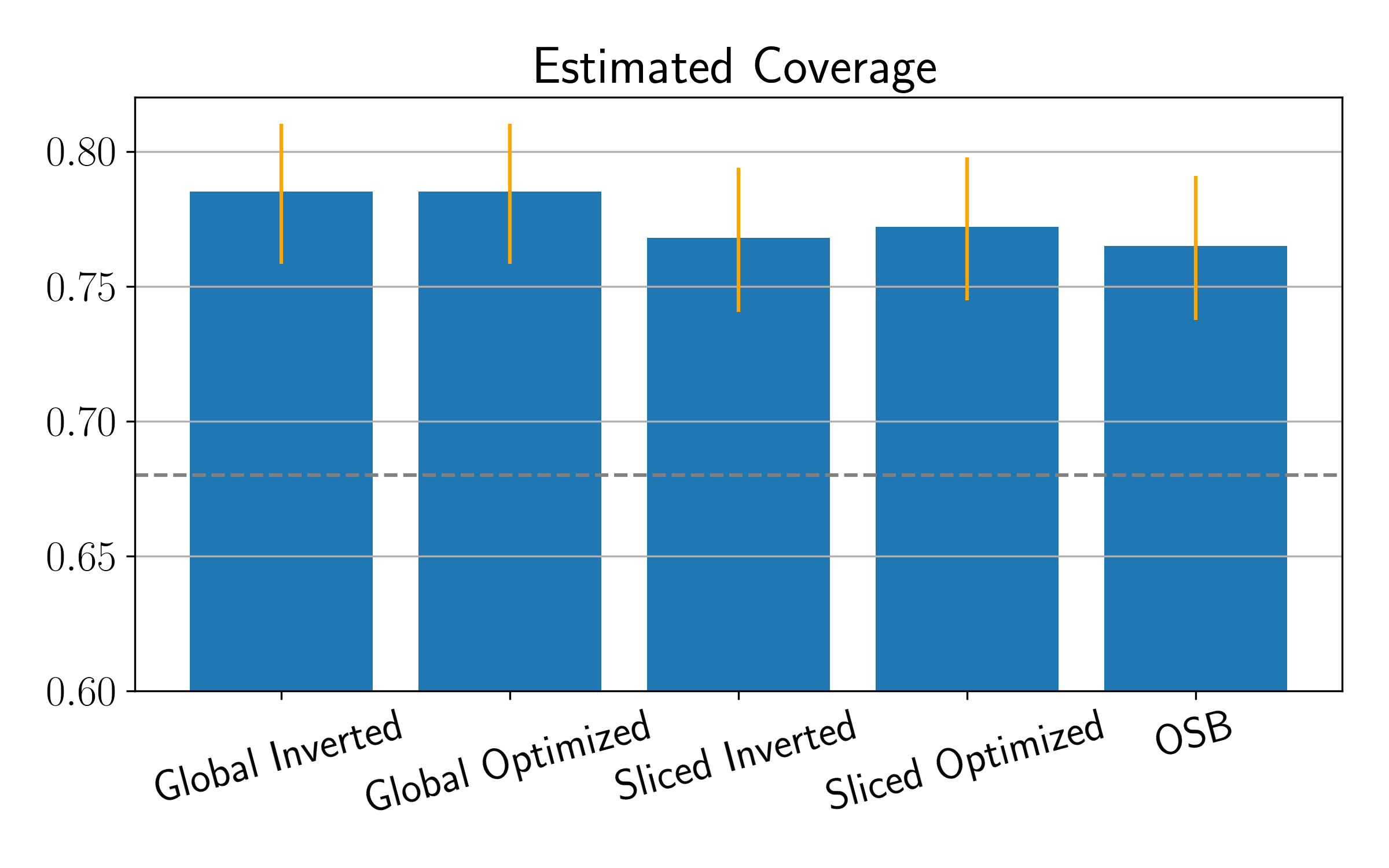}
    \includegraphics[width=0.49\textwidth]{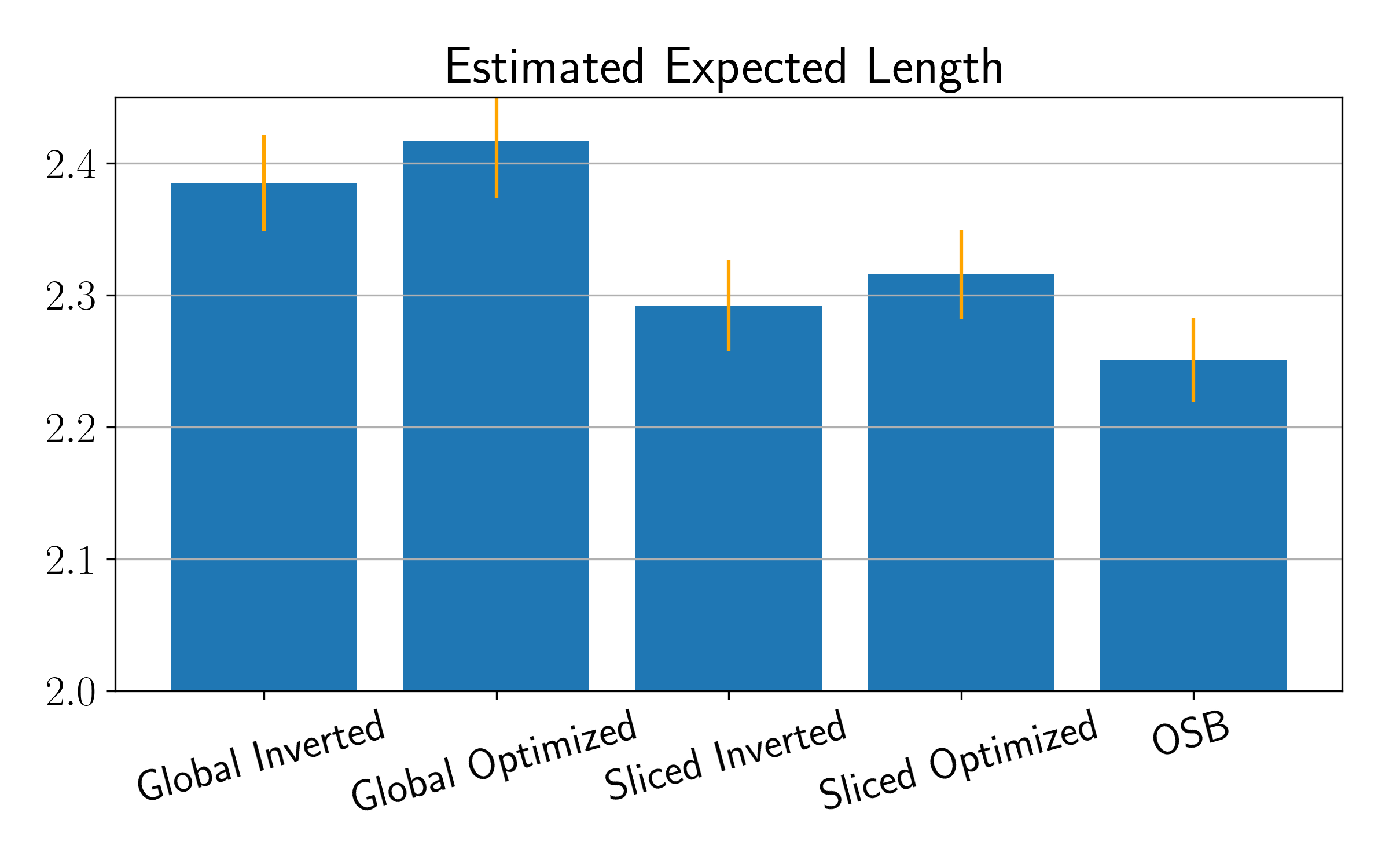}
    \caption[Estimated coverage and expected length for the two-dimensional constrained Gaussian example]{
    Estimated coverages and expected lengths across all four interval constructions and OSB for comparison at the $68$\% level for the two-dimensional constrained Gaussian setting. 
    All four of our interval constructions are comparable to OSB with respect to coverage, but OSB shows better expected length performance, aside from our two sliced interval constructions. 
    Although the OSB intervals are defined using the global max-quantile ($\globalQ$) and therefore can potentially be improved upon by limiting the considered parameter space via the \bergerboos set, due to the rapidity with which the $\alpha$-quantile surface meets the $\chi^2_{1, \alpha}$ quantile (see Figure 5.3 in \cite{previous_paper}), the OSB interval lengths are difficult to beat in practice.
}
    \label{fig:2d_coverage_length}
\end{figure*}

\begin{figure}
    \centering
    \includegraphics[width=0.32\textwidth]{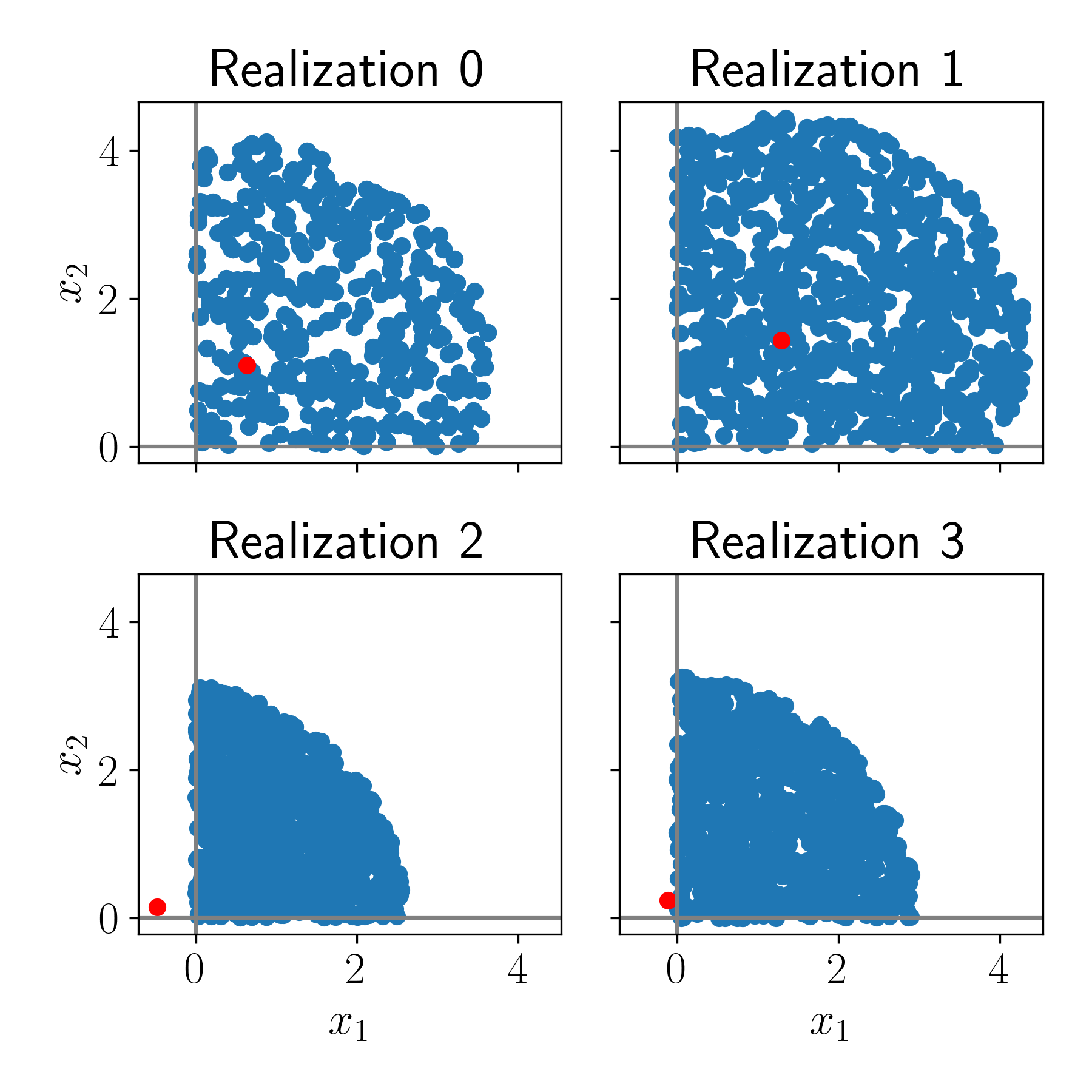}
    \includegraphics[width=0.32\textwidth]{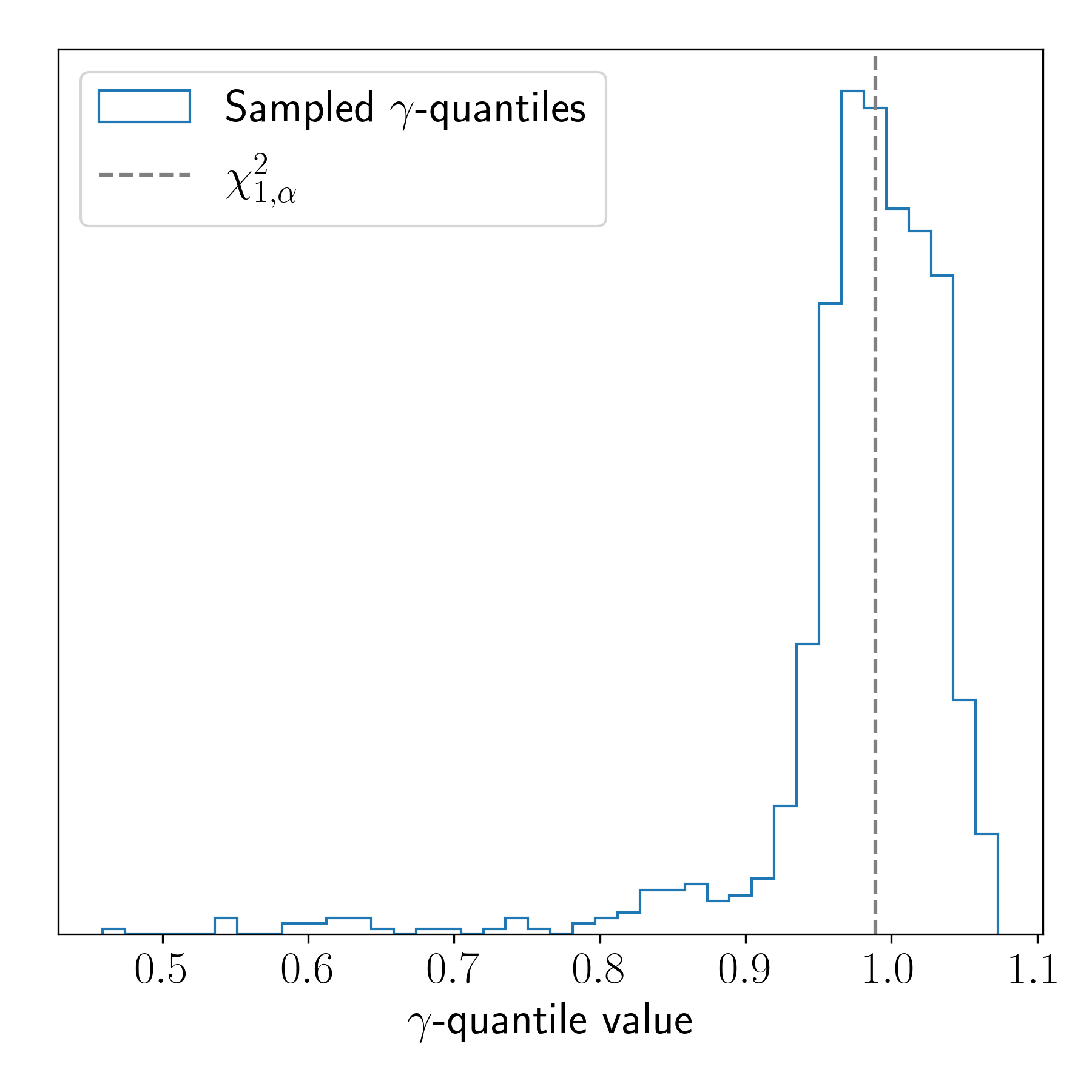}
    \includegraphics[width=0.32\textwidth]{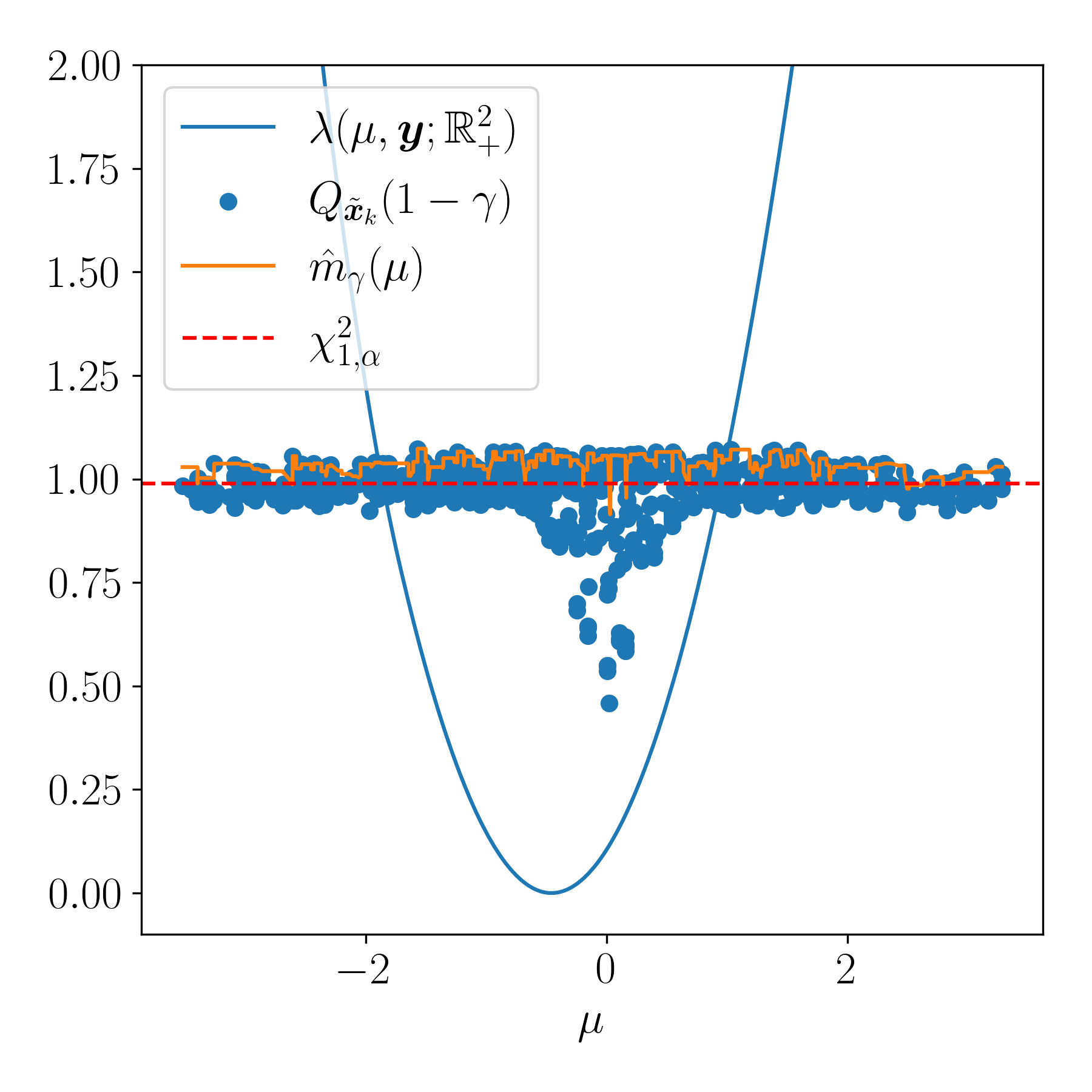}
    \caption[Berger--Boos samples, sampled quantiles, and interval cutoffs in the two-dimensional example]{
    (\textbf{Left}) Four realizations of the data-generating process where the observations are shown in red. 
    For each realization, the blue points are uniformly distributed samples from its \bergerboos set, sampled using the VGS sampler. 
    (\textbf{Center}) For a realization of the data-generating process, we plot the distribution of $\gamma$-quantiles for the points sampled by the VGS sampler. 
    Notably, a non-trivial percent of these are above $\chi^2_{1, \alpha}$ defining the OSB interval. 
    (\textbf{Right}) For the same realization, we plot the estimated sliced max-quantile function, $\hat{m}_\gamma(\mu)$ in orange alongside $\chi^2_{1, \alpha}$ in red. 
    The blue points correspond to sampled parameter values, each of which has a functional and quantile value, while the solid blue line shows the LLR over the functional varies. 
    All intervals can be read immediately from this image by inspecting where the blue LLR curve intersects the sampled points.
    }
    \label{fig:2d_bb_sets}
\end{figure}

\subsection{Constrained Gaussian in three dimensions}
\label{sec:3d_constrained_gaussian}

In situations where the OSB coverage guarantee is difficult to prove, our intervals provide a clear theoretical advantage.
We apply our four intervals to one such case involving a three-dimensional constrained Gaussian case as explored in \cite{previous_paper}.
The three-dimensional Gaussian noise model is defined as follows:
\begin{equation}
    \by = \bx^* + \bepsilon, \quad \bepsilon \sim \mathcal{N}\left(\bm{0}, \bI_3 \right), \quad \bx^* \in \mathbb{R}^3_+,
\end{equation}
where $\varphi(\bx) = x_1 + x_2 - x_3$ and $\bx^* = \begin{pmatrix} 0.03 & 0.03 & 1 \end{pmatrix}^\top$.
The LLR is then defined as follows:
\begin{equation}
    \lambda(\mu, \by) = \min_{\substack{x_1 + x_2 - x_3 = \mu \\ \bx \in \mathbb{R}^3_+}} \lVert \by - \bx \rVert_2^2 - \min_{\bx \in \mathbb{R}^3_+} \lVert \by - \bx \rVert_2^2.
\end{equation}

\Cref{fig:3d_coverage_length} shows estimated coverage and expected length across all four interval constructions and the OSB interval.
While the OSB interval fails to attain nominal coverage, all four of our interval constructions do, with the Sliced constructions providing the best calibration.
The Sliced constructions best navigate the trade-off between coverage and expected length, paying for coverage with only slightly longer intervals compared to the OSB interval.

\begin{figure*}[!ht]
    \includegraphics[width=0.49\textwidth]{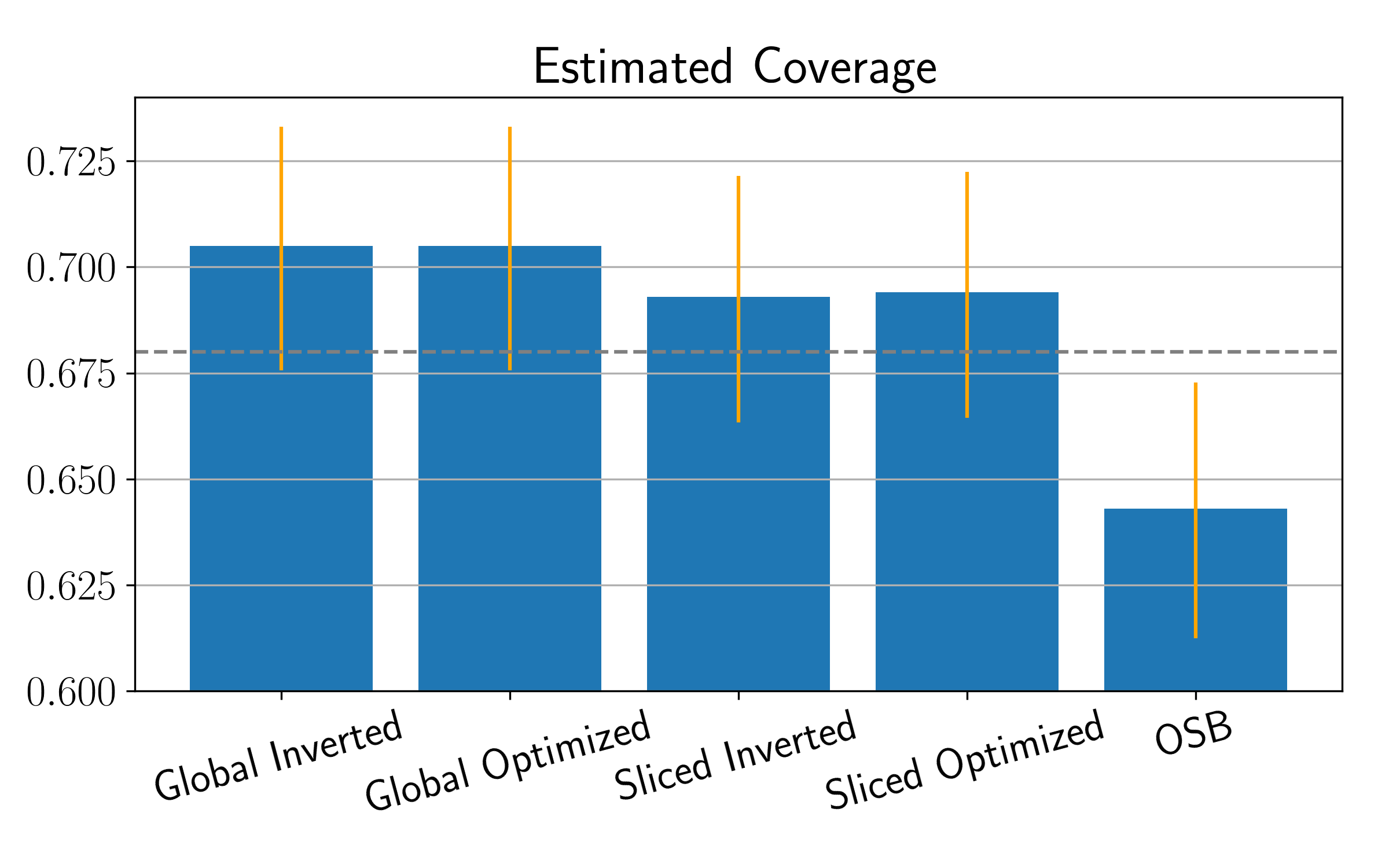}
    \includegraphics[width=0.49\textwidth]{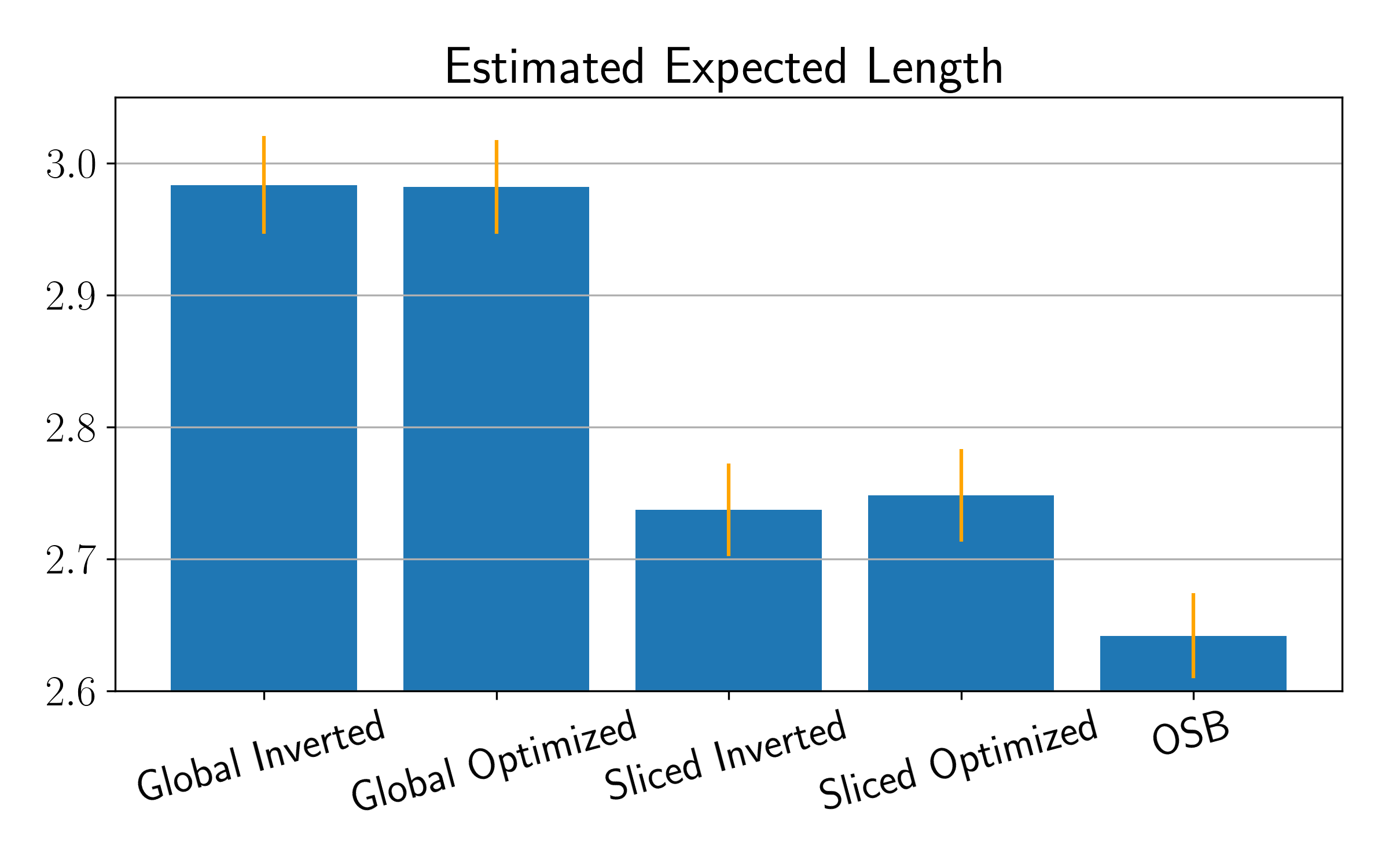}
    \caption[Estimated coverage and expected length for the three-dimensional constrained Gaussian example]{
    Estimated coverage and expected length across all four interval constructions and OSB for comparison at the $68$\% level for the three-dimensional constrained Gaussian example. 
    All four of our interval constructions achieve nominal coverage, while the OSB interval does not. 
    While the Global interval constructions pay a steep price in expected length compared to OSB, the Sliced constructions are only slightly longer than OSB.
    }
    \label{fig:3d_coverage_length}
\end{figure*}

The setting of $\bx^*$ used here is slightly different than that of \cite{previous_paper}, where $\bx^* = \begin{pmatrix} 0 & 0 & 1 \end{pmatrix}^\top$ was used.
In \cite{previous_paper}, this setting was used as a counter-example for OSB coverage, since $Q_{\bx^*}(1 - \alpha) > \chi^2_{1, \alpha}$ for at least some $\alpha \in (0, 1)$.
However, as shown in Figure 5.5 of \cite{previous_paper}, when $\alpha = 0.05$, $Q_{\bx^*}(1 - \alpha) \leq \chi^2_{1, \alpha}$ for $\bx^* = \begin{pmatrix} t & t & 1 \end{pmatrix}^\top$ when $t$ is approximately greater than $e^{-2} \approx 0.135$, which indicates that parameter settings violating stochastic dominance by $\chi^2_1$ exist close to the parameter constraint boundary.
The location of these key parameter settings presents a challenge for the Polytope sampler (\Cref{alg:polytope_sampler}).
In \Cref{app:sec:importance_like_sampler}, \Cref{alg:3d_sampler} presents a modified version of \Cref{alg:polytope_sampler} that better handles this situation.

\myparagraph{\bergerboos set experiment.}
For this model, we investigate the effect of changing the parameter $\eta$ that controls the \bergerboos construction in the global interval.
We compute intervals for different $\by$ and fixed $\bx^* = (2,2,0), \bx^* = (3,3,0)$ and $\bx^* = (5,5,0)$ as $\eta$ ranges between $0$ and $\alpha = 0.32$.
The lengths of such intervals, averaged over the data $\by$, are shown in \Cref{fig:BBSet_combined}.
In this example the maximum quantile is achieved at $\bx = (0,0,t)$ for large $t$, and, as expected, the benefit of using a small $\eta > 0$ becomes more pronounced as the true $\bx^*$ becomes farther from the point that achieves the maximum quantile. 
However, as $\eta$ grows too close to $\alpha$ the downside of optimizing the $1-\alpha+\eta$ quantile instead of $1-\alpha$ outweighs the benefit of optimizing it over a smaller set, resulting in larger intervals.
This suggests that a small $\eta > 0$ is a reasonable default, as proposed originally by \cite{berger_boos}.

\begin{figure}[ht]
    \centering
    \includegraphics[width=0.9\columnwidth]{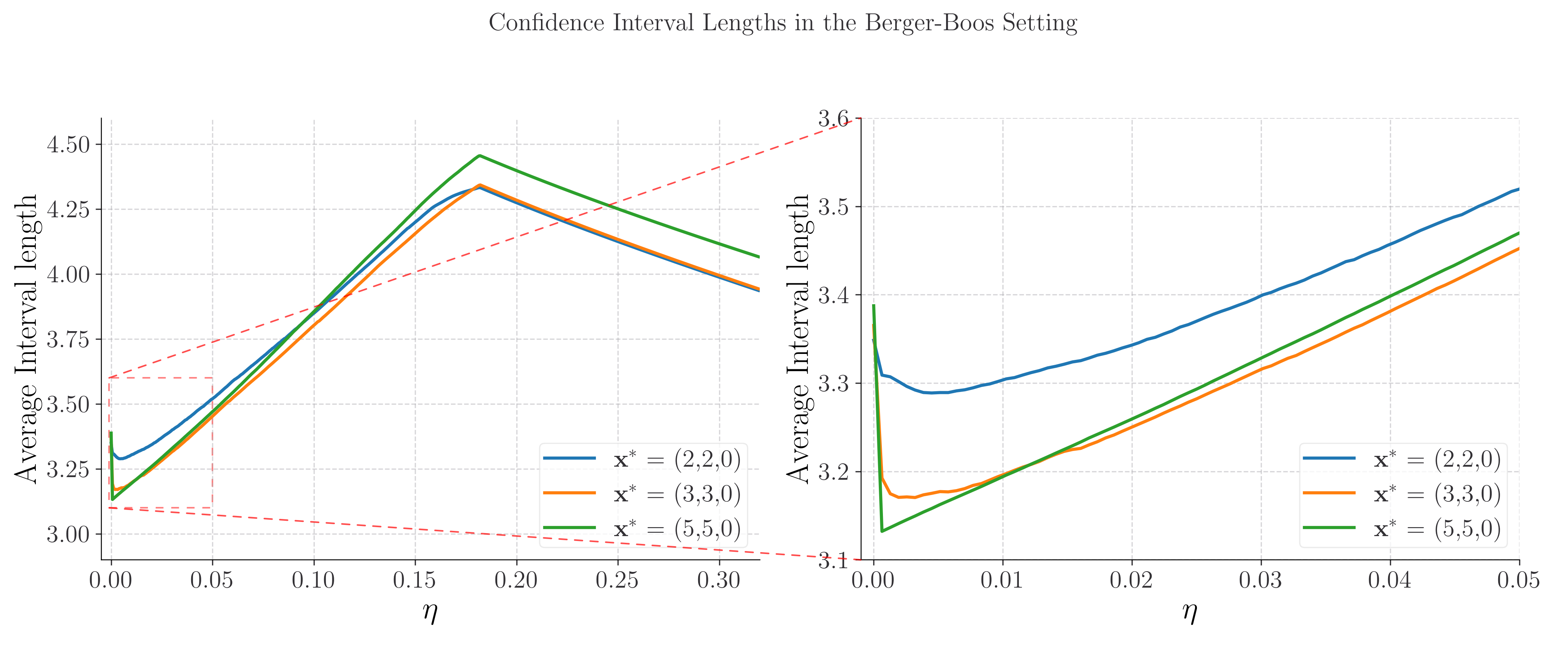}
    \caption[Effect of the Berger--Boos parameter on average interval length]{
    Confidence interval lengths in the \bergerboos setting, averaged over values of $\by$, for varying $\eta$ and $\bx^*$. 
    The minimum average length occurs at a small $\eta > 0$, showing that the construction is beneficial if $\eta$ is tuned correctly. 
    This occurs because even for moderately small $\eta$, the \bergerboos set, which is a three-dimensional sphere intersected with the non-negative orthant, avoids the point with the highest $1-\alpha$ quantile.
    }
    \label{fig:BBSet_combined}
\end{figure}

\subsection{Wide-bin deconvolution}
\label{sec:wide_bin_deconvolution}
 
This section shows the superior performance of our interval constructions relative to OSB in a more complex high-dimensional setting by considering the problem of computing a confidence interval for the sum of adjacent bins of a deconvolved histogram as described in \cite{stanley_unfolding}.
We deliberately use this previously studied model as a benchmark: the forward model is not the novelty of the present paper, but keeping the application model fixed allows the effect of the new data-adaptive calibration and sampling-based interval construction to be isolated and compared directly with OSB and related optimization-based intervals.
This problem is a core statistical problem of particle unfolding in high-energy physics.
We refer the readers to the following references for more detailed information \cite{kuusela_panaretos, kuusela_phd_thesis, kuusela_stark, cms_2016, cms_2019}.

The data-generating process is linear with Gaussian noise,
\begin{equation} \label{eq:unfolding}
    \by = \bK \bx^* + \bepsilon, \quad \bepsilon \sim \mathcal{N}(\bm{0}, \bI), \quad \bx^* \geq \bm{0},
\end{equation}
where $\bK \in \mathbb{R}^{40 \times 80}$ and $\varphi(\bx) = \bh^\top \bx$.
The vector $\bh$ defines the bin-adjacent aggregation.
In particle unfolding, the vectors $\bx^*$ and $\by$ represent particle counts within discretized bins.

We note that the dimension of these discretized bins is directly related to the resolution of the scientific inferences that can be made from the system, and it is advantageous to set the discretization so that the dimension of $\bx^*$ is significantly larger than the dimension of $\by$.
Other approaches typically cannot use this high-dimensional discretization and still achieve calibrated UQ on $\varphi(\bx^*)$, since they require regularization and thus incur estimator bias.

The LLR is then defined as follows:
\begin{equation}
    \lambda(\mu, \by) = \min_{\substack{\bh^\top \bx = \mu \\ \bx \in \mathbb{R}^{80}_+}} \lVert \by - \bK \bx \rVert_2^2 - \min_{\bx \in \mathbb{R}^{80}_+} \lVert \by - \bK \bx \rVert_2^2.
\end{equation}
The forward model, $\bK$, has a non-trivial null space and a large condition number, making inverse problem point estimation and inference challenging (see \cite{kuusela_phd_thesis} for more details).
Although \cite{previous_paper} and this paper propose a theoretical framework to perform inference with constraints, implementation is challenging in practice due to the high-dimensional parameter space of this scenario.
As we show in the following sections, the Polytope sampler (\Cref{alg:polytope_sampler}) and quantile regression do an adequate job producing samples and fitting quantile surfaces in this high-dimensional space to ensure the desired coverage of the interval constructions.

As extensively discussed in \cite{stanley_unfolding}, the OSB interval (i.e., using $\chi^2_{1, \alpha}$ in the optimization-based interval construction) produces empirically valid (though typically conservative) confidence intervals in all tested scenarios.
Since the over-coverage in \cite{stanley_unfolding} is likely the result of relatively smooth mean vector ($\bx^* \in \mathbb{R}^{80}_+$), we present two true parameter settings for $\bx^*$ in \eqref{eq:unfolding} to highlight two advantages of our interval constructions over the OSB interval.
First, we use the original smooth parameter setting from \cite{stanley_unfolding} to show improved over-coverage relative to the OSB interval via a reduction in the expected interval length.
Second, we present an ``adversarial'' setting where our interval constructions achieve nominal coverage while the OSB interval does not.
We constructed the adversarial setting (see \Cref{fig:unfolding_true_settings}) by first computing our interval constructions on the smooth setting and then looking at the maximum out-of-sample predicted quantile for a generated observation with a large predicted quantile.
For each observation drawn within both settings, we draw $2.1 \times 10^{4}$ samples using the Polytope sampler as described by \Cref{alg:polytope_sampler}.

\begin{SCfigure}[50][!ht]
    \centering
    \includegraphics[width=0.5\columnwidth]{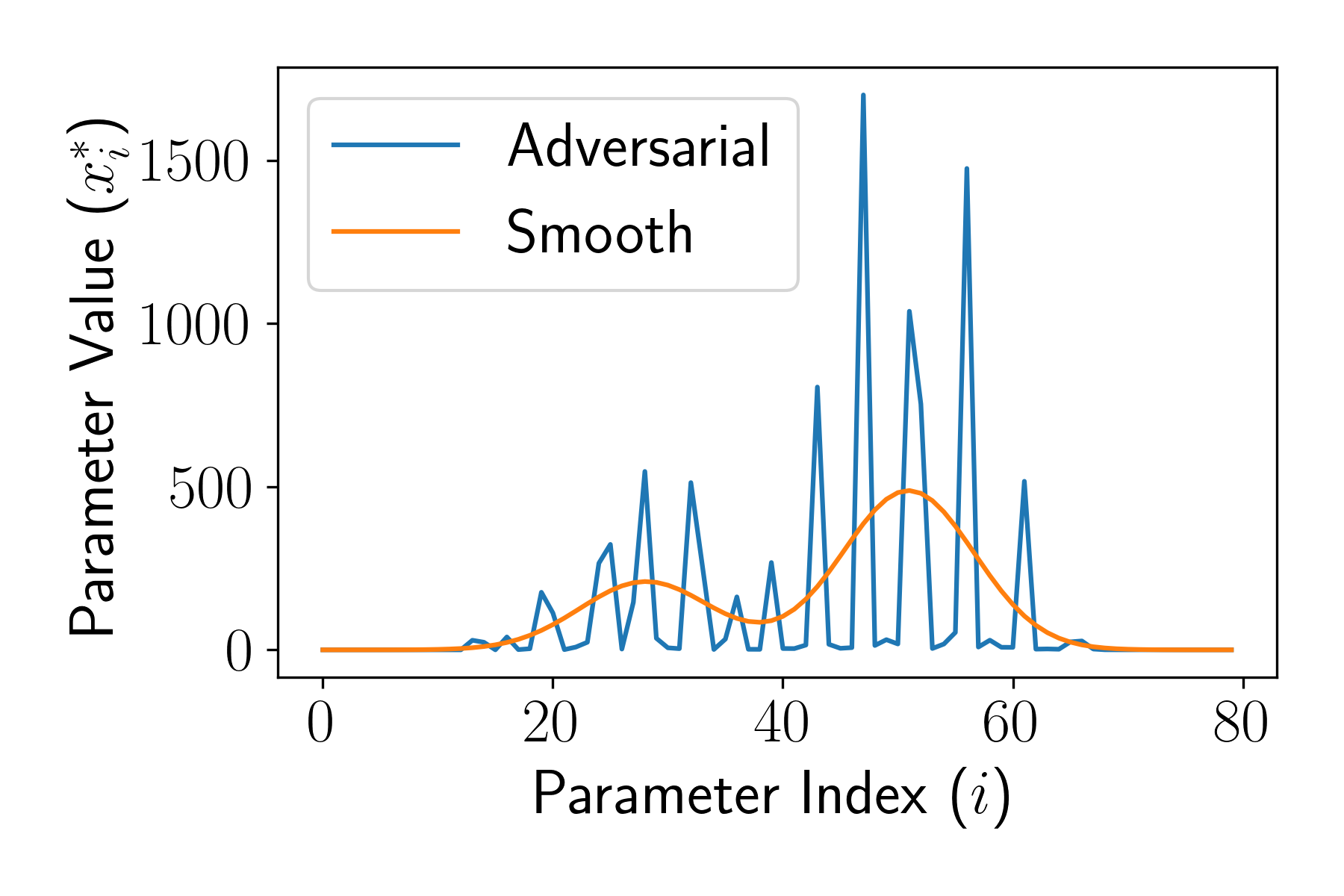}
    \caption[Smooth and adversarial parameter settings for the wide-bin deconvolution experiments]{
    Parameter values for the smooth and adversarial settings for $\bx^*$ used to illustrate our interval construction versus the OSB interval. 
    The adversarial setting is made more difficult by the sharp jumps in parameter values.
    \vspace{5em}
    }
    \label{fig:unfolding_true_settings}
\end{SCfigure}

\subsubsection{Smooth setting}
\label{sec:unfolding_realistic}
We show that our interval constructions achieve nominal coverage and the sliced constructions reduce over-coverage compared to OSB by producing shorter intervals on average.
Estimated coverage and expected interval lengths are shown in \Cref{fig:unfolding_coverage_length}.

Both Global interval constructions and OSB dramatically over-cover, which highlights the conservatism of the Global constructions.
These estimated coverage values indicate that within each observation's \bergerboos set, there is a parameter setting against which the method has to protect that is substantially more difficult to cover than the true realistic parameter setting.
The Sliced intervals show lower over-coverage and significantly smaller average lengths, and both Sliced constructions are shorter on average compared to the OSB interval, with the Sliced Inverted interval showing an $18.7$\% reduction in average length and the Sliced Optimized showing a $11.1$\% reduction in average length.

\begin{figure*}[!ht]
    \includegraphics[width=0.49\textwidth]{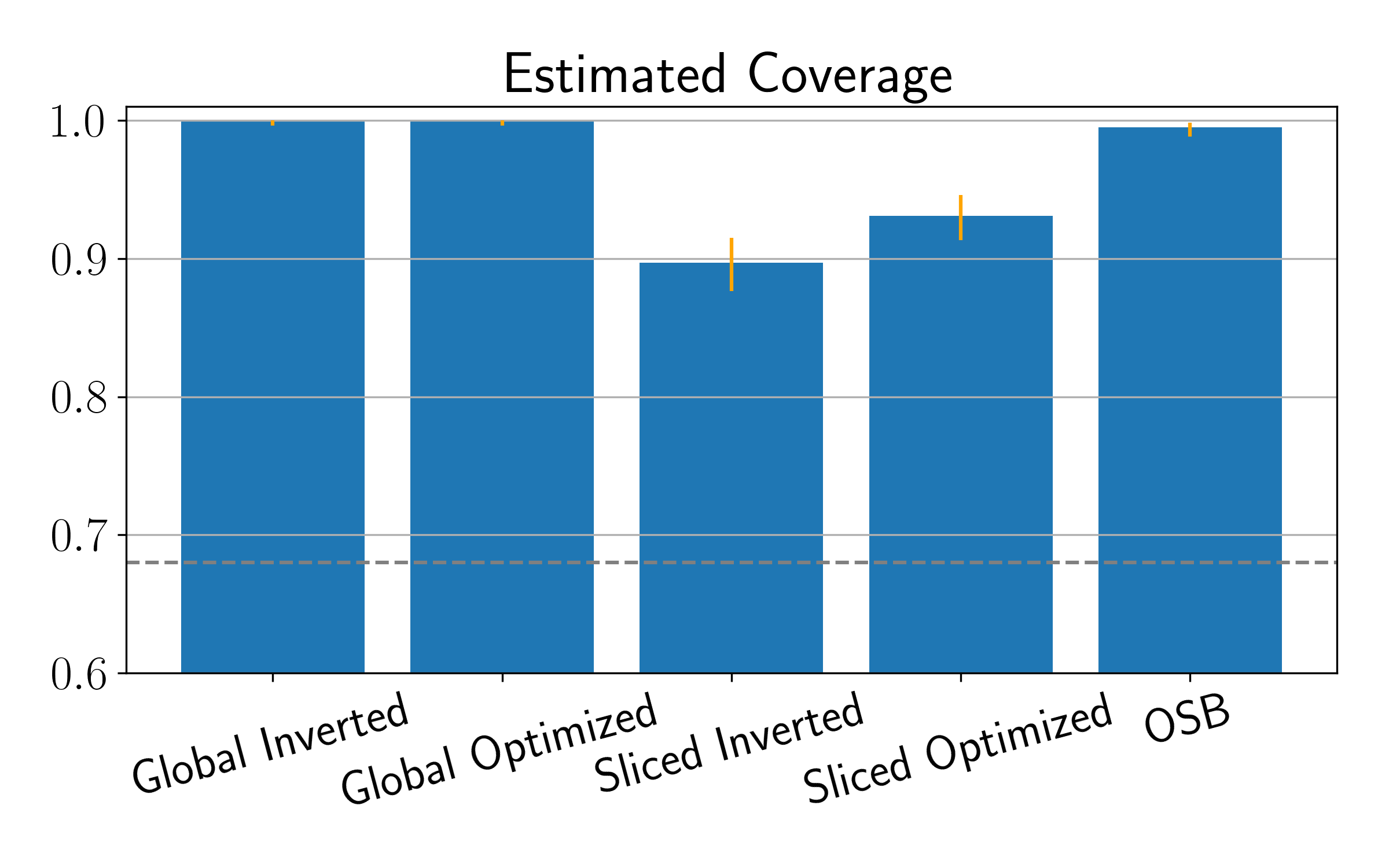}
    \includegraphics[width=0.49\textwidth]{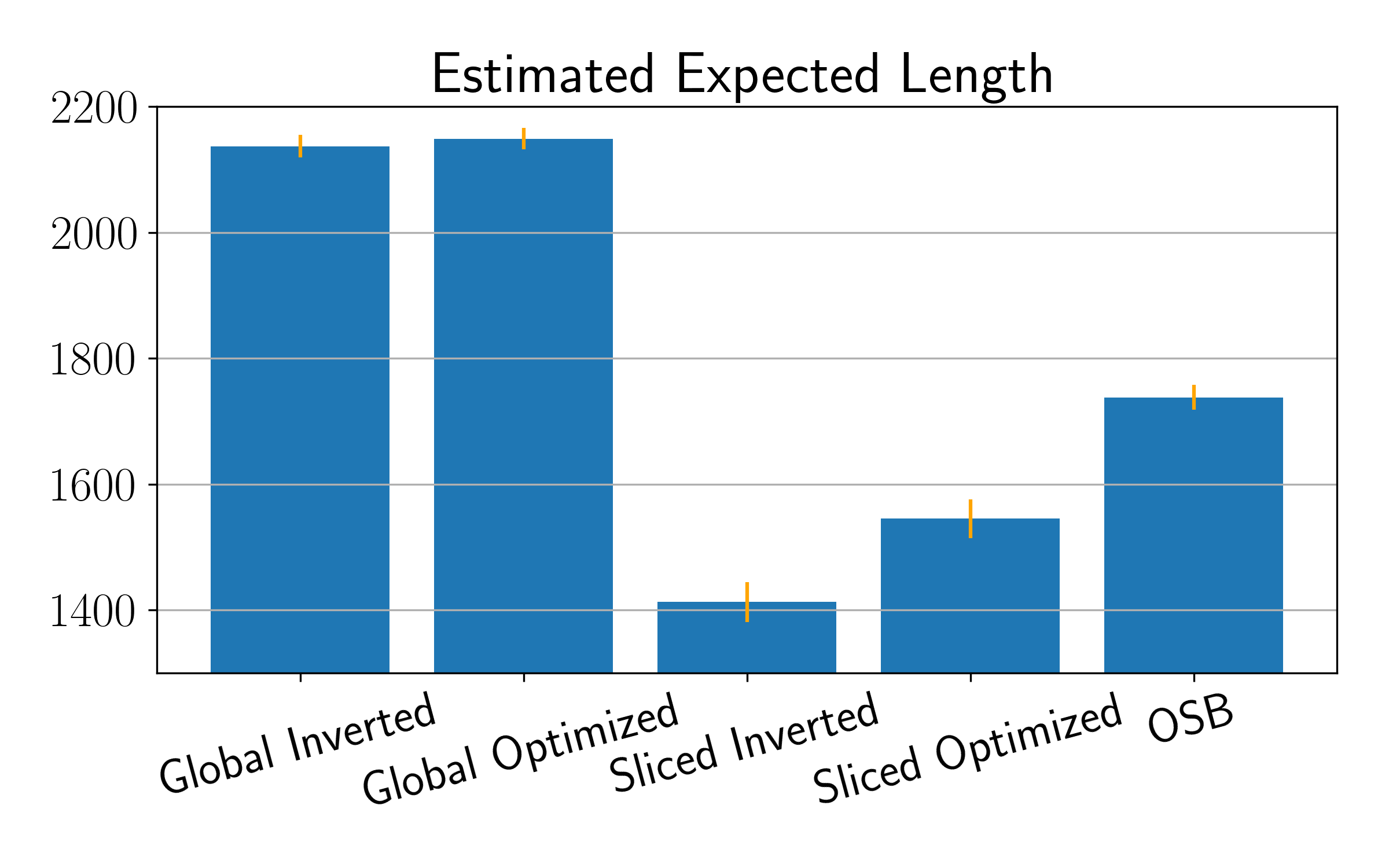}
    \caption[Estimated coverage and expected length for the smooth wide-bin deconvolution experiment]{
    Estimated coverage and expected length across all four interval constructions and OSB at the $68$\% level for the \textbf{smooth} wide-bin deconvolution experiment. 
    While the Global interval constructions over-cover like the OSB interval, the Sliced interval constructions reduce both over-coverage and expected interval length.
    }
    \label{fig:unfolding_coverage_length}
\end{figure*}

\subsubsection{Adversarial setting}
\label{sec:unfolding_adversarial}
We show that for the adversarial parameter setting, the OSB interval does not achieve nominal coverage, whereas all four of our interval constructions do while still reducing the expected interval length in the case of the Sliced interval constructions compared to the OSB interval (see \Cref{fig:unfolding_coverage_length_adversarial} for primary results).

Both Global interval constructions and the Sliced Optimized interval over-cover, while the Sliced Inverted intervals achieve nominal coverage within statistical uncertainty.
The estimated expected lengths are similar to those of the smooth example, with the Global intervals showing the longest average interval lengths, the Sliced intervals showing the shortest, and the OSB interval being between the two.
Importantly, the Sliced intervals are again significantly shorter than the OSB interval, even though the OSB interval does not achieve nominal coverage.
The Sliced Inverted interval shows a $18.9$\% average interval length reduction over OSB while the Sliced Optimized interval shows an $11.4$\% average interval length reduction.

\begin{figure*}[!ht]
    \includegraphics[width=0.49\textwidth]{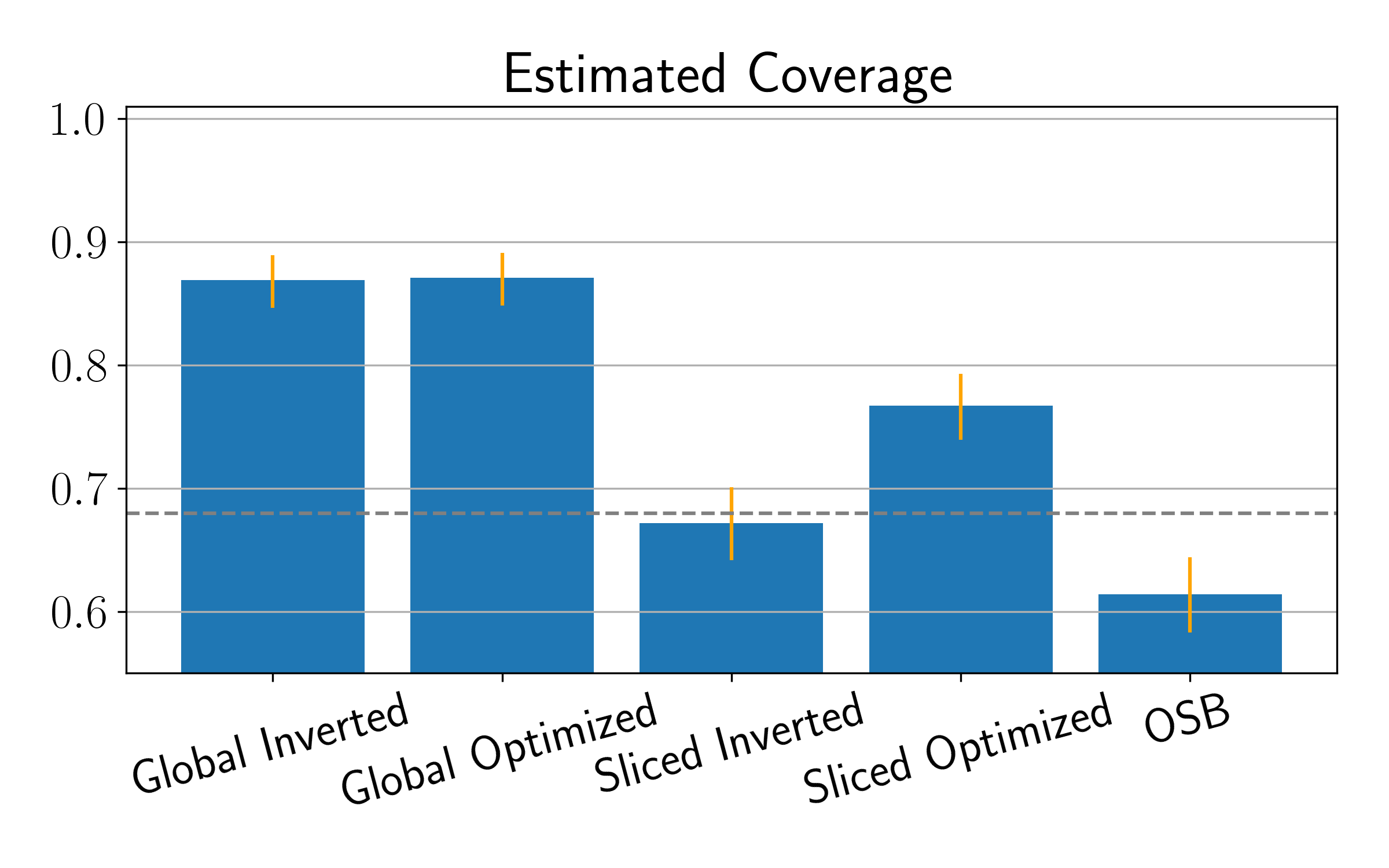}
    \includegraphics[width=0.49\textwidth]{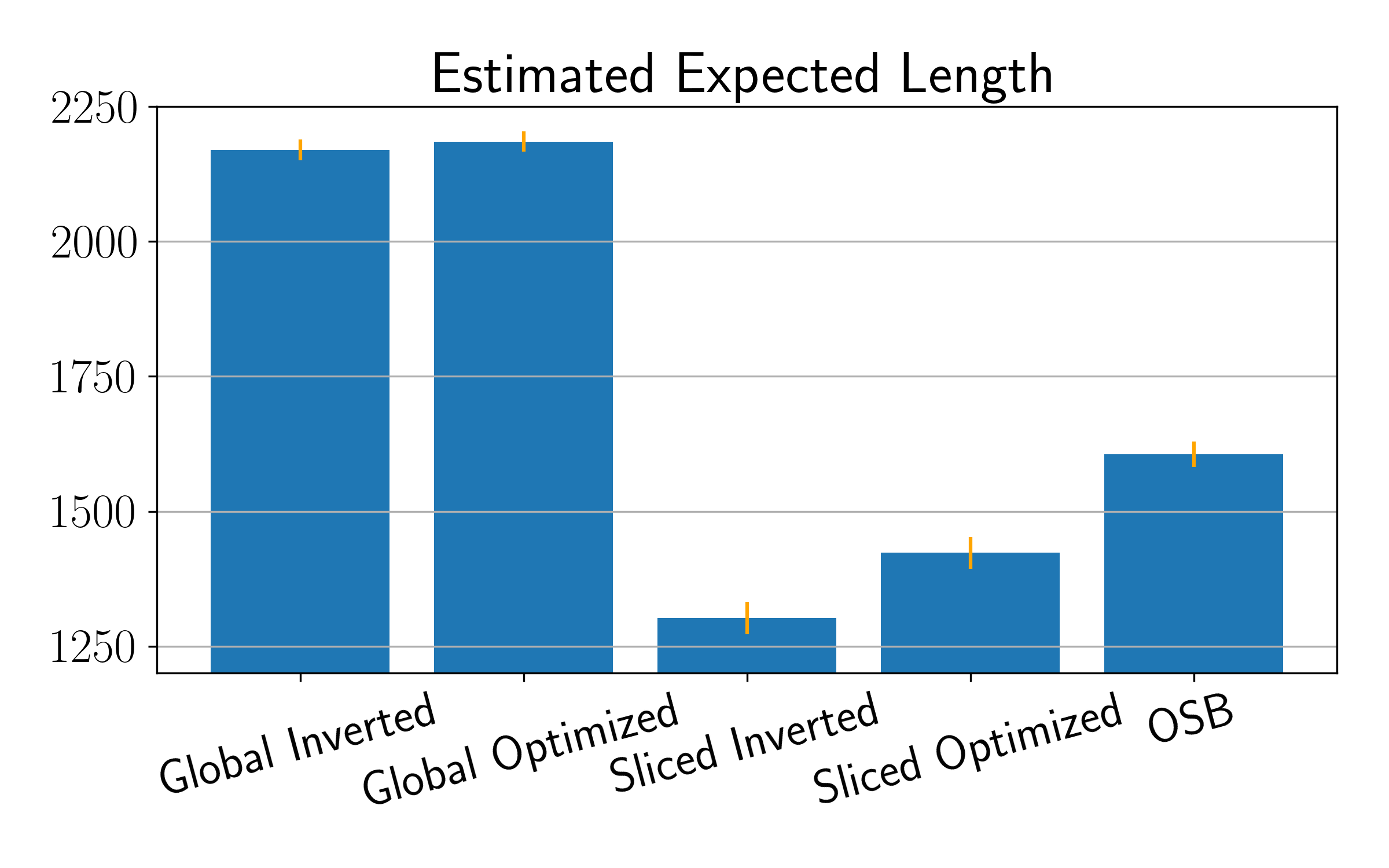}
    \caption[Estimated coverage and expected length for the adversarial wide-bin deconvolution experiment]{
    Estimated coverage and expected length across all four interval constructions and OSB at the $68$\% level for the \textbf{adversarial} wide-bin deconvolution experiment. 
    While the OSB interval fails to achieve nominal coverage, all four of our interval constructions do. 
    Interestingly, the Sliced interval constructions are meaningfully shorter than the OSB interval while also providing coverage.
    }
    \label{fig:unfolding_coverage_length_adversarial}
\end{figure*}

For both experiments, we note the differences in coverage and expected length between the two Sliced intervals.
Although both approaches theoretically compute the same interval, significant differences can arise from their construction.
The Sliced Inverted intervals are constructed by accepting individual functional values, which makes each point's acceptance dependent upon the quality of the quantile regressor at that point.
By contrast, since the Sliced Optimized intervals essentially smooth over the max quantiles as a function of the functional space, the intervals are less sensitive to quantile regressor performance at any individual point.
In the above experiments, there were realizations of the data for which the Sliced Inverted construction only accepted a single functional value sample, thereby making the confidence interval a single point.
We found the Sliced Optimized construction to be more robust in these settings as the max quantile values were shared in a sliding window over the functional space. 

\section{Conclusion and future work}
\label{sec:discussion_and_conclusion}

This paper proposes several confidence interval constructions for functionals in constrained ill-posed inverse problems. 
Our approach is based on two key ideas: data-adaptive constraints using a \bergerboos construction and sampling-based inversion.
Two independent decisions when constructing intervals provide four different valid intervals: Global versus Sliced, using the quantile function either over the entire or along level-set slices of the \bergerboos set and Inverted versus Optimized, constructing the interval either by individually accepted functional values via the estimated quantile function or using the estimated quantile function in endpoint optimizations.
All of the constructions are built upon the preliminary constraint by the data-informed \bergerboos set, followed by a sampling procedure to estimate a quantile function that can be used to invert or optimize the interval endpoints.
We have validated the method (including all four aforementioned interval constructions) through several numerical examples, demonstrating its ability to provide correct coverage, better calibration, and comparable or shorter interval length compared to the OSB interval baseline. 
Overall, our approach offers a flexible framework that can incorporate constraints directly and can be tailored to various types of inverse problems. 
The main takeaway is that data-adaptive constraining helps improve the length of the resulting confidence intervals and enables sampling, which makes it feasible to carry out the test inversion needed to construct confidence intervals with a desired nominal coverage.

There are several promising directions for future work.
One immediate direction is principled automated guidance for choosing the Berger--Boos parameter $\eta$. The finite-sample guarantee holds for any pre-specified $\eta \in (0,\alpha)$ with $\gamma \leq \alpha-\eta$, but the expected-length-optimal value is problem-dependent because it depends on both the geometry of $\mathcal{B}_\eta$ and the shape of the LLR quantile surface. Our experiments support using a small positive value and, when feasible, checking sensitivity over a small grid of pre-specified candidate values. In this paper, for example, the $68\%$ experiments use $\alpha=0.32$ and $\eta=0.01$, so $\eta$ uses only a small fraction of the total miscoverage budget while still making the data-adaptive set useful computationally. We do not claim that this value is universally optimal. Developing a fully automated, data-adaptive choice of $\eta$ that optimizes interval length while preserving the coverage guarantee is left for future work.
Another direction is to find ways to extend our method to even higher-dimensional problems, which are more challenging.
This would involve developing improved techniques to handle the curse of dimensionality and exploring the trade-off between accuracy and computational complexity.
For the approaches in this paper in particular, this extension would require a more tailored sampling approach.
Another direction is to leverage more sophisticated machine learning algorithms (deep learning models or ensemble methods) to improve the estimate of the quantile function and, thus, improve the accuracy and efficiency of our confidence intervals. 
Additionally, applying our approach to other applications involving ill-posed inverse problems, such as medical imaging or geophysics, would provide further validation of the effectiveness of our strategy. 
Finally, it is of interest to conduct a further theoretical analysis of our approach under different constraints and noise conditions to better understand its limitations and strengths. 
This would involve studying the statistical properties of our confidence intervals and investigating the impact of various assumptions on their performance.
Of particular interest are relaxations of the linear forward model and Gaussian noise assumptions to extend our method's application domain.
Although the original theoretical foundation developed in \cite{previous_paper} does not make these assumptions, our implementation relies upon them for the tractability of the sampling algorithms.
Overall, these future research directions have the potential to demonstrate the applicability and robustness of our approach in a wide range of domains.

\section*{Acknowledgment}
We thank members of the STAMPS (Statistical Methods for the Physical Sciences) Research Center at Carnegie Mellon University for fruitful discussions on this work.     
MS and MK were partially supported by NSF grants DMS-2053804 and PHY-2020295, JPL RSAs No.~1670375, 1689177 \& 1704914 and a grant from the C3.AI Digital Transformation Institute.
PB and HO acknowledge support from the Air Force Office of Scientific Research under MURI awards number FA9550-20-1-0358, FOA-AFRL-AFOSR-2023-0004 and by the Department of Energy under award number DE-SC0023163. Additionally HO acknowledge support from the DoD Vannevar Bush Faculty Fellowship Program.



\bibliographystyle{siamplain}
\bibliography{references}

\clearpage
\appendix

\newcommand{\removelinebreaks}[1]{\def\\{\relax}#1}
\def\titleRLB{\removelinebreaks{\titletext}}

\setcounter{equation}{0}
\setcounter{figure}{0}
\renewcommand{\theequation}{S.\arabic{equation}}
\renewcommand{\thefigure}{S.\arabic{figure}}

\newgeometry{left=0.5in,top=0.5in,right=0.5in,bottom=0.5in,head=0.1in,foot=0.1in}

\begin{center}
\Large
{
{\bf
\framebox{Appendix}}
}
\end{center}

\bigskip

This document serves as an appendix to the paper ``\titleRLB''.
Below we provide an organization for the appendix, followed by a summary of the main notation used in both the paper and the appendix.
The equation and figure numbers in this appendix begin with the letter ``S'' to differentiate them from those appearing in the main paper.

\section*{Organization}

A roadmap for the supplement is provided in \Cref{tab:supplement-organization}.

\begin{table}[!ht]
\centering
\caption{Roadmap of the supplement.}
\begin{tabularx}{\textwidth}{L{3cm}L{15cm}}
\toprule
\textbf{Appendix} & \textbf{Description} \\
\midrule
\Cref{app:sec:interval_methodology_proofs} & Proofs in \Cref{sec:interval_methodology} (Proofs of \Cref{lem:setting_level_parameters_and_coverage} and \Cref{cor:eta-gamma-alpha-adjustment-coverage}) \\
\addlinespace[0.5ex] \arrayrulecolor{black!25} \midrule
\Cref{proofs_appendix} & Proofs in \Cref{sec:theory} (Proof of \Cref{thm:maintheoremalgs}) \\
\addlinespace[0.5ex] \arrayrulecolor{black!25} \midrule
\Cref{app:sec:sampling_bergerboos} & Additional details and illustrations in \Cref{subsec:preimage_sampling} (VGS and Polytope samplers) \\
\addlinespace[0.5ex] \arrayrulecolor{black!25} \midrule
\Cref{app:sec:quantile_regression_overview} & Additional details in \Cref{subsec:quantile_regression} (quantile regression) \\
\addlinespace[0.5ex] \arrayrulecolor{black!25} \midrule
\Cref{sec:additional-details-sec:numerical_exp} & Additional details and illustrations in \Cref{sec:numerical_exp} (importance-like sampler) \\
\addlinespace[0.5ex] \arrayrulecolor{black} \bottomrule
\end{tabularx}
\label{tab:supplement-organization}
\end{table}

\section*{Notation}

An overview of the main notation used in this paper is provided in \Cref{tab:notations}.

\begin{table}[ht]
\centering
\caption{Summary of main notation used in the paper and the appendix.}
\begin{tabularx}{\textwidth}{L{3cm}L{15cm}}
\toprule
\textbf{Notation} & \textbf{Description} \\
\midrule
Non-bold lower or upper case & Denotes scalars (e.g., $\alpha$, $\mu$, $Q$). \\
Bold lower case & Denotes vectors (e.g., $\bx$, $\by$, $\bh$). \\
Bold upper case & Denotes matrices (e.g., $\bK$, $\bI$). \\
Calligraphic font & Denotes sets (e.g., $\mathcal{X}$, $\mathcal{C}$, $\mathcal{D}$). \\
\arrayrulecolor{black!25}\midrule
$\mathbb{R}$ & Set of real numbers. \\
$\mathbb{R}_+$ & Set of non-negative real numbers. \\
$[n]$ & Set $\{1, \dots, n\}$ for a positive integer $n$. \\
\midrule
$\| \bm{u} \|_{2}$ & The $\ell_2$ norm of vector $\bm{u}$. \\
$\| f \|_{L_2}$ & The $L_2$ norm of function $f$. \\
\midrule
$\bm{K}^\top$ & The transpose of a matrix $\bK \in \RR^{m \times p}$. \\
$\bI_m$ or $\bI$ & The $m \times m$ identity matrix. \\
$\bm{v} \le \bm{u}$ & Componentwise ordering for vectors $\bm{v}$ and $\bm{u}$. \\
$\bm{A} \preceq \bm{B}$ & The Loewner ordering for symmetric matrices $\bm{A}$ and $\bm{B}$.\\
\midrule
$\mathbf{1}\{A\}$ & Indicator random variable associated with event $A$. \\
$Y = \cO_\alpha(X)$ & Deterministic big-O notation, indicating that $Y$ is bounded by $| Y | \le C_\alpha X$. \\
$C_\alpha$ & A numerical constant that may depend on the parameter $\alpha$ in context. \\
$\cO_p$ & Probabilistic big-O notation. \\
$\pto$ & Convergence in probability. \\
$X \succeq Y$ & Stochastic dominance of $X$ by $Y$, indicating that $\mathbb{P}(X\geq z)\geq\mathbb{P}( Y\geq z)$ for all $z \in \mathbb{R}$. \\
\arrayrulecolor{black}\bottomrule
\end{tabularx}
\label{tab:notations}
\end{table}

\section{Proofs in \Cref{sec:interval_methodology}}
\label{app:sec:interval_methodology_proofs}

\subsection{Proof of \Cref{lem:setting_level_parameters_and_coverage}}
\label{app:lem:coverage_proof}

We reproduce the original argument in \cite{berger_boos_supp}, originally stated in terms of p-values, translated into our quantile setting. 
Fix any $\bx^* \in \mathcal{X}$ and consider the sets: 
\begin{itemize}
    \item $A_1 = \{\by : B_\eta(\by) \ni \bx^*\}$
    \item $A_2 = \{\by : \lambda(\mu^*, \by) \leq  Q_{\bx} ( 1 - \gamma)\}$
    \item $A_3 = \{\by : \lambda(\mu^*, \by) \leq \bar{q}_{\gamma, \eta}(\mu^*)\} = \{\by : \lambda(\mu^*, \by) \leq \sup_{\bx \in \mathcal{B}_\eta \cap \Phi_{\mu^*}} Q_{\bx} ( 1 - \gamma)\}$
\end{itemize}
We now have that $\mathbb{P}(\by \in A_1) = 1-\eta, \mathbb{P}(\by \in A_2) = 1-\gamma$, and that $(A_1 \cap A_2) \subset (A_1 \cap A_3)$. 
Therefore,
\begin{align}  
\mathbb{P}(\by \notin A_3) &=  \mathbb{P}(\by \notin A_3, \by \in A_1) +  \mathbb{P}(\by \notin A_3, \by \notin A_1) \\
&\leq \mathbb{P}(\by \notin A_2, \by \in A_1) +  \mathbb{P}(\by \notin A_1) \\ &\leq \mathbb{P}(\by \notin A_2) +  \mathbb{P}(\by \notin A_1)
\\ &= \gamma + \eta
\end{align}
so that $\mathbb{P}(\by \in A_3) \geq 1-\gamma -\eta$. 
Imposing $1-\gamma -\eta \geq 1-\alpha$ gives the desired result.

\subsection{Proof of \Cref{cor:eta-gamma-alpha-adjustment-coverage}}
\label{proof:cor:eta-gamma-alpha-adjustment-coverage}

By definition, $\aosbQ^\mu \leq \aosbQ$ for all $\mu \in \mathbb{R}$.
Therefore, $\localcs \subseteq \globalcs$, and thus
\begin{equation}
    \mathbb{P}_{\bx^*} \big(\mu^* \in \globalcs \big) \geq \mathbb{P}_{\bx^*} \big(\mu^* \in \localcs \big) \geq 1 - \alpha.
\end{equation}

\section{Proof of \Cref{thm:maintheoremalgs}}
\label{proofs_appendix}

Throughout the proof, we make use of the following lemma: 

\begin{lemma}\label{lemma:conv_minimum}
    Let $f: \mathbb{R}^n \to \mathbb{R}$ and $\mathcal{X} \subset \mathbb{R}^n$ such that $\underset{\bx \in \mathcal{X}}{\min} f(\bx)$ is achieved, and at least one of the minimizers $\bx^*$ satisfies: 
    \begin{enumerate}
        \item $f$ is continuous at $\bx^*$,
        \item $\bx^*$ is not an isolated point of $\mathcal{X}$ (i.e., $\forall \delta > 0, B_\delta(\bx^*) \cap \mathcal{X} \neq \emptyset$).
    \end{enumerate}
    Let $\mu$ be a measure on $\mathcal{X}$ such that $\mu(B) > 0$ for all $B \subseteq \mathcal{X}$ such that $\lambda_{\text{Leb}
    }(B) > 0$ (where $\lambda_{\text{Leb}
    }(B)$ refers here to the Lebesgue measure of the set $B$). Let $Y_m = \underset{i = 1, \dots, m}{\min} f(\bx_i)$, where $\bx_i$ are i.i.d. samples from $\mu$. Then $Y_m \xrightarrow{p} f(\bx^*)$.
\end{lemma}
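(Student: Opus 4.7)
The overall approach is a sandwich argument. On the one hand, $Y_m \ge f(\bx^*)$ holds almost surely, since each sample $\bx_i$ lies in $\mathcal{X}$ and $\bx^*$ is a minimizer of $f$ over $\mathcal{X}$. On the other hand, I need to show that for every $\varepsilon > 0$,
\[
\mathbb{P}\bigl(Y_m > f(\bx^*) + \varepsilon\bigr) \longrightarrow 0 \quad \text{as } m \to \infty.
\]
Combining these two gives convergence in probability, because $|Y_m - f(\bx^*)| > \varepsilon$ is contained in $\{Y_m > f(\bx^*) + \varepsilon\}$.

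To establish the upper tail, I would fix $\varepsilon > 0$ and invoke the continuity of $f$ at $\bx^*$ to select $\delta > 0$ such that $f(\bx) < f(\bx^*) + \varepsilon$ for all $\bx \in B_\delta(\bx^*)$. Setting $A_\delta := B_\delta(\bx^*) \cap \mathcal{X}$, the event $\{Y_m \geq f(\bx^*) + \varepsilon\}$ is contained in the event that none of the $m$ i.i.d. samples falls in $A_\delta$, since any sample landing in $A_\delta$ would pull the minimum below $f(\bx^*) + \varepsilon$. By independence,
\[
\mathbb{P}\bigl(Y_m \geq f(\bx^*) + \varepsilon\bigr) \le \bigl(1 - \mu(A_\delta)\bigr)^m,
\]
which vanishes as $m \to \infty$ provided $\mu(A_\delta) > 0$.

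The main obstacle, and the step where the two standing hypotheses actually get used together, is establishing $\mu(A_\delta) > 0$. The non-isolation of $\bx^*$ in $\mathcal{X}$ gives $A_\delta \neq \emptyset$, but non-emptiness alone is not enough to conclude positive Lebesgue measure, since $A_\delta$ could in principle be a lower-dimensional sliver. In the setting of the paper, $\mathcal{X}$ plays the role of $\mathcal{B}_\eta$, which is a full-dimensional compact subset of $\mathbb{R}^p$, and near a non-isolated point of a full-dimensional set one can argue that $A_\delta$ contains a relatively open subset of positive Lebesgue measure; the measure assumption on $\mu$ then transfers this to $\mu(A_\delta) > 0$. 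I would formalize this either by shrinking $\delta$ and using that the closure of $\mathcal{X}$ around $\bx^*$ has positive $p$-dimensional volume, or by stating the additional mild regularity (e.g., $\bx^*$ lies in the closure of the interior of $\mathcal{X}$) that is implicit in the paper's concrete use case.

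Once $\mu(A_\delta) > 0$ is in hand, the proof concludes immediately: combining the deterministic lower bound $Y_m \ge f(\bx^*)$ with the geometric-decay upper bound above yields $\mathbb{P}(|Y_m - f(\bx^*)| > \varepsilon) \le (1 - \mu(A_\delta))^m \to 0$, which is precisely $Y_m \pto f(\bx^*)$.
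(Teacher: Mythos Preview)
Your proposal is correct and follows essentially the same approach as the paper: continuity at $\bx^*$ yields a $\delta$-ball on which $f$ stays within $\varepsilon$ of $f(\bx^*)$, and then independence gives the geometric bound $(1-\mu(A_\delta))^m$. The paper's proof is in fact less careful than yours at exactly the point you flag: it asserts without justification that non-isolation of $\bx^*$ implies $\lambda_{\text{Leb}}(B_\delta(\bx^*)\cap\mathcal{X})>0$, whereas you correctly note that non-isolation alone only gives non-emptiness and that some additional full-dimensionality of $\mathcal{X}$ (which holds for $\mathcal{B}_\eta$ in the paper's use case) is needed to pass to positive Lebesgue measure.
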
 
\begin{proof}
Fix $\varepsilon > 0$ and let us show that $\mathbb{P}(|Y_m - f(\bx^*)| > \varepsilon) \rightarrow 0$. Since $f$ is continuous at $\bx^*$ there exists a $\delta > 0$ such that $f(B_\delta(\bx^*)) \subset B_\varepsilon(f(\bx^*))$ so that, 
\begin{equation}
    \mathbb{P}(|Y_m-f(\bx^*)| \geq \varepsilon) \leq \mathbb{P}(\bx_i \notin B_\delta(\bx^*), \; \forall i = 1, \dots, m) = (\mathbb{P}_{\bx \sim \mu}(\bx \notin B_\delta(\bx^*)))^m. 
\end{equation}
Since $\bx^*$ is not an isolated point, we have $\lambda_{\text{Leb}}(B_\delta(\bx^*) \cap \mathcal{X}) > 0$ and therefore $\mathbb{P}_{\bx \sim \mu}(\bx \notin B_\delta(\bx^*)) < 1$ and  $\mathbb{P}(|Y_m-f(\bx^*)| \geq \varepsilon) \rightarrow 0$.
\end{proof}
We begin by proving that the empirical maximums of the quantiles obtained both by \Cref{alg:max_q_rs,alg:max_q_qr} (assuming the quantile regressor is consistent) converge to the true max quantile. 
\myparagraph{\Cref{alg:max_q_rs}}
\label{app:lem:rs_consistency_proof}

Let $\maxqRS := \underset{i = 1, \dots, M}{\max} \hat{q}^i_\gamma(N)$, where we explicitly write the dependence with the number of samples and the index $i$ refers to the quantile estimated at the $i$-th sampled point $\bx_i$.
We aim to show that $\convprob{\maxqRS}{\aosbQ}$.
We know that $\hat{q}^i_\gamma(N)$ converges in probability to $Q_{P_{\bx_i}}(1-\gamma)$ as $N \rightarrow \infty$. We have 

\begin{align}
    \left\lvert \maxqRS - \aosbQ \right\rvert \leq \left\lvert \maxqRS -\max_{i = 1, \dots, M} Q_{P_{\bx_i}}(1-\gamma) \right\rvert + \left\lvert \max_{i = 1, \dots, M} Q_{P_{\bx_i}}(1-\gamma) - \aosbQ \right \rvert.
\end{align}

The first term can be made smaller than $\varepsilon/2$ as $N \rightarrow \infty$ by convergence of the estimator, and the second term can be made smaller than $\varepsilon/2$ as $M \rightarrow \infty$ by application of \Cref{lemma:conv_minimum} to the quantile function (maximizing instead of minimizing).

\myparagraph{\Cref{alg:max_q_qr}}
\label{app:lem_qr_max_consistency_proof}

The proof is identical to that of \Cref{alg:max_q_rs}, with the only difference of replacing the quantiles estimated via Monte Carlo sampling with those estimated by the quantile regression, and $N$ to $M_{\text{tr}}$, the number of samples needed to train the quantile regression. 
Since the quantile regression is assumed to be consistent, the first term can be made arbitrarily small as $M_{tr}$ grows, and the result follows.

Since identical convergence results apply for both algorithms, we will not explicitly distinguish. 
The proof is written in terms of $N$, which can be replaced by $M_{tr}$.

\subsection{Proof of Statement 1 (Global Inverted)}

Recall,
\begin{align}
    C^{\gl}_{\inv}(\by) &= \bigg[\min_{k \in \{1, \ldots, M\}: \lambda(\varphi(\bx_k), \by) \leq \hat{q}(N)} \, \varphi(\bx_k), \max_{k: \lambda(\varphi(\bx_k), \by) \leq \hat{q}(N)}  \, \varphi(\bx_k) \bigg] \\ 
    &= \bigg[\min_{k \in \{1, \ldots, M\}: \lambda(\mu_k, \by) \leq \hat{q}(N)} \, \mu_k, \max_{k: \lambda(\mu_k, \by) \leq \hat{q}(N)}  \, \mu_k \bigg]
\end{align}
where $\hat{q}(N) :=  \underset{i = 1, \dots, M}{\max} \hat{q}^i_\gamma(N)$, the estimated quantiles of the sampled $\bx_i \in \mathcal{B}_\eta$ and $\mu_i := \varphi(\bx_i)$.
Note that $\mu_i$ are samples in $\varphi(\mathcal{B}_\eta) \subset \mathbb{R}$.
Also note that we have more explicitly written out the interval definition (i.e., \Cref{eq:global_inverted}) to emphasize clarity rather than presentation.

Consider the left extreme of the interval; a similar argument follows from the right extreme. Consider three quantities: 
\begin{align}
    \mu_1(N, M)&:= \min \mu_k \quad \text{s.t.} \quad k = 1, \dots, M \text{ and } \; \lambda(\mu_k, \by) \leq \hat{q}(N) \\ 
    \mu_2(M) &:= \min \mu_k \quad \text{s.t.} \quad k = 1, \dots, M \text{ and } \; \lambda(\mu_k, \by) \leq \aosbQ \\
    \mu_3 &:= \min \mu \quad \text{s.t.} \quad \mu \in \varphi\left(\mathcal{B}_\eta \right) \text{ and } \; \lambda(\mu, \by) \leq \aosbQ
\end{align}
Our goal is to show that as $N, M \rightarrow \infty$, $\convprob{\mu_1}{\mu_3}$.
\Cref{lemma:conv_minimum} shows that $\convprob{\mu_2}{\mu_3}$.
Indeed, the minimization over the indices $k$ such that the condition is satisfied can be seen as a rejection sampling strategy in which all accepted samples are samples of the feasible region of the optimization in $\mu_3$.
As $M$ grows, since the sampler eventually samples all areas of $\varphi(\mathcal B_\eta)$, some samples are guaranteed to be close to the optimum with high probability. Finally, for fixed $M$ and $N$ going to infinity, $\convprob{\mu_1}{\mu_2}$.
This follows from the continuity of the optimization problem with respect to the right-hand side of the constraint, and the fact that $\convprob{\underset{i = 1, \dots, M}{\max} q(\bx_i)}{\aosbQ}$. 
It follows that as $N, M \rightarrow \infty$, $\convprob{\mu_1}{\mu_3}$.

\subsection{Proof of Statement 2 (Sliced Inverted)}

The proof technique is similar to the one of Statement 1, replacing $\hat{q}(N) :=  \underset{i = 1, \dots, M}{\max} \hat{q}^i_\gamma(N)$ for $\hat{q}^k_\gamma(N)$. 
Defined then, similarly to the previous proof:
\begin{align}
    \mu_1(N, M)&:= \min \mu_k \quad \text{s.t.} \quad k = 1, \dots, M \text{ and } \; \lambda(\mu_k, \by) \leq \hat{q}^k_\gamma(N) \\ 
    \mu_2(M) &:= \min \mu_k \quad \text{s.t.} \quad k = 1, \dots, M \text{ and } \; \lambda(\mu_k, \by) \leq \aosbQ(\mu_k) \\
    \mu_3 &:= \min \mu \quad \text{s.t.} \quad \mu \in \varphi\left(\mathcal{B}_\eta \right) \text{ and } \; \lambda(\mu, \by) \leq \aosbQlocal
\end{align}

where $\aosbQlocal = \max_{\bx \in \Phi_\mu \cap \mathcal{B}_\eta} Q_\bx(1 - \gamma)$. As $N \to \infty$, $\hat{q}^k_\gamma(N) \to Q_{\bx_k}(1-\gamma)$. \Cref{lemma:conv_minimum} can be used to show $\convprob{\mu_2}{\mu_3}$, because the sampler strategy used that first samples $\bx_k$ and then accepts $\mu_k = \varphi(\bx_k)$ as a sample if $\lambda(\varphi(\bx_k), \by) < q_\gamma^k$, it can be shown that for every feasible point $\mu$, there is eventually a sample close to it.

Finally, $\mu_1$ will become arbitrarily close to $\mu_2$ as $M$ grows large, since both are taking the minimum over samples that are sampled densely from the feasible set, meaning accepted $\mu_k$ will eventually be close to $\mu_3$  for both the case of $\mu_1$ and the case of $\mu_2$

\subsection{Proof of Statement 3 (Global Optimized)}

We prove the continuity of the optimization problem,
\begin{equation}   
\max_{\mu \in \varphi(\mathcal{B}_\eta)} \mu \text{ s.t. } \lambda(\mu, \by) \leq q,
\end{equation}
as a function of $q$ in the positive measure interval $(\lambda(\bar{\mu}, \by), \aosbQ]$.
Therefore, convergence in probability follows as we have convergence in probability to $\aosbQ$ as $N, M \to \infty$, which in particular implies that the maximum quantile estimate is in the interval $(\lambda(\bar{\mu}, \by), \aosbQ]$ almost surely.
We do so by appealing to the maximum theorem \cite[\S~E.3]{ok2007real}, which, in general, guarantees continuity of functions of the form $f^*(\theta) = \sup \{ f(x, \theta) : x \in C(\theta) \}$ as long as $f$ is continuous, $C$ is a continuous compact-valued correspondence and $C(\theta)$ is non-empty for all $\theta \in \Theta$.
The continuity of $C$ comes from the continuity of the LLR, $f$ is equal to the identity and the strict feasibility condition ensures $C$ is non-empty in $\Theta := (\lambda(\bar{\mu}, \by), \aosbQ]$.

\subsection{Proof of Statement 3 (Sliced Optimized)}

We will prove sufficient conditions for convergence of   $\underset{\mu : \hat{f}_k(\mu) \geq 0}{\inf} \mu$ to $\underset{\mu: f(\mu) \geq 0}{\inf} \mu$ as $\hat{f}_k$ converges to $f$, and the result will follow by taking $\hat{f}_k(\mu) = \hat{m}_\gamma(\mu) - \lambda(\mu, \by)$ and $f(\mu) = m_\gamma(\mu) - \lambda(\mu, \by)$.
A similar argument can be repeated for the supremum.
Use the notation $\hat{f}_k$ to indicate that $k$ sampled points are used to estimate this function via the definition of $\hat{m}_\gamma(\mu)$ (see \Cref{sec:interval_constructions}).

\begin{lemma}\label{lemma:conv_minimum-v2}
    Let $f: \mathbb{R} \to \mathbb{R}$ be a function, and let $f_k$ be a sequence of functions $f_k: \mathbb{R} \to \mathbb{R}$. Let $\displaystyle \mu^* = \inf_{f(\mu) \geq 0} \mu$ and $\displaystyle \mu_k = \inf_{f_k(\mu) \geq 0} \mu$.
    Let the sequence of functions $\{ f_k \}$ be such that for all $\delta > 0$,
    \begin{equation}
        \mathbb{P}\left(\underset{\mu}{\sup} |f_k(\mu) - f(\mu)| > \delta\right) \rightarrow 0 \text{ as } k \rightarrow \infty,
    \end{equation}
    namely, $f_k$ converges in probability uniformly to $f$.
    Furthermore, let $f$ be such that for all $\varepsilon > 0$, there exists $\delta' > 0$ such that if $\lvert \mu - \mu^* \rvert \geq \varepsilon$, then $\lvert f(\mu) \rvert > \delta'$.
    Then, we have $\convprob{\mu_k}{\mu^*}$.

\end{lemma}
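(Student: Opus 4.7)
The plan is to combine the uniform convergence in probability of $f_k$ to $f$ with the separation condition on $f$ to sandwich $\mu_k$ inside $[\mu^* - \varepsilon, \mu^* + \varepsilon]$ with arbitrarily high probability. I read the hypothesis on $f$ as the natural identification/well-separated-crossing condition, namely that for every $\varepsilon > 0$ there is $\delta' > 0$ with $|f(\mu)| > \delta'$ whenever $|\mu - \mu^*| > \varepsilon$, which rules out tangent zeros at $\mu^*$. Combining this with continuity of $f$ and with the fact that $\mu^*$ is the infimum of the feasible set $\{\mu : f(\mu) \geq 0\}$, I would upgrade this to the signed versions I actually need: $f(\mu) < -\delta'$ for all $\mu < \mu^* - \varepsilon$, and there exists $\bar{\mu} \in (\mu^*, \mu^* + \varepsilon)$ with $f(\bar{\mu}) > \delta'$.

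Fix $\varepsilon > 0$ and let $\delta'$ be given by the separation condition. By uniform convergence in probability, for any $\eta > 0$ there exists $K$ such that for $k \geq K$ the event $E_k := \{\sup_\mu |f_k(\mu) - f(\mu)| < \delta'/2\}$ has probability at least $1-\eta$. I would then establish two bounds on the event $E_k$. For the upper bound, take $\bar{\mu} \in (\mu^*, \mu^* + \varepsilon)$ with $f(\bar{\mu}) > \delta'$; then $f_k(\bar{\mu}) > \delta'/2 > 0$, so $\bar{\mu}$ is feasible for the $f_k$ problem and $\mu_k \leq \mu^* + \varepsilon$. For the lower bound, for any $\mu \leq \mu^* - \varepsilon$, the signed separation gives $f(\mu) < -\delta'$, so $f_k(\mu) < -\delta'/2 < 0$ on $E_k$, meaning no such $\mu$ lies in $\{f_k \geq 0\}$ and therefore $\mu_k \geq \mu^* - \varepsilon$. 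Combining the two bounds, $\mathbb{P}(|\mu_k - \mu^*| \leq \varepsilon) \geq 1 - \eta$ for $k \geq K$, which is exactly $\mu_k \pto \mu^*$. The supremum analogue needed for the right endpoint of the sliced interval is obtained by replacing infima with suprema throughout; the result for Statement 4 then follows by applying the lemma to $f_k(\mu) := \hat{m}_\gamma(\mu) - \lambda(\mu, \by)$ and $f(\mu) := m_\gamma(\mu) - \lambda(\mu, \by)$, with uniform convergence of $\hat{m}_\gamma$ to $m_\gamma$ following from the consistency guarantees already established for the estimator and the continuity of $Q_{\bx}(1-\gamma)$ combined with the rolling-window construction on a compact domain $\varphi(\mathcal{B}_\eta)$.

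The main obstacle I anticipate is the rigorous extraction of the signed one-sided information from the hypothesis, which is stated symmetrically in $|f(\mu)|$ rather than in $f(\mu)$ itself. The lower-bound half of the sandwich needs $f(\mu) < -\delta'$ on $(-\infty, \mu^* - \varepsilon)$, not merely $|f(\mu)| > \delta'$; this requires continuity of $f$ plus the observation that any $\mu < \mu^*$ must satisfy $f(\mu) < 0$ by definition of the infimum, followed by an intermediate-value argument against a hypothetical sign change inside $(-\infty, \mu^* - \varepsilon)$. Symmetrically, the upper bound needs a point just to the right of $\mu^*$ on which $f$ is strictly positive and bounded away from zero, which again follows from continuity and the infimum property, but is not literally in the hypothesis. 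Once these sign refinements are established, the closeness event $E_k$ carries the rest of the argument through, and the only other technical condition I would want to assume is that $f_k \to f$ uniformly in probability on a relevant compact set, which is reasonable to impose on the rolling-maximum estimator $\hat{m}_\gamma$ since $\varphi(\mathcal{B}_\eta)$ is compact and $Q_{\bx}(1-\gamma)$ is continuous.
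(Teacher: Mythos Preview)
Your sandwich argument is a genuinely different route from the paper's proof. The paper argues by contradiction: assuming $\mathbb{P}(|\mu_k - \mu^*| \geq \varepsilon)$ does not vanish, it invokes the separation hypothesis to get $\mathbb{P}(|f(\mu_k)| > \delta)$ non-vanishing, then bounds $\mathbb{P}(|f(\mu_k)| > \delta) \leq \mathbb{P}(|f(\mu_k) - f_k(\mu_k)| > \delta) + \mathbb{P}(|f_k(\mu_k)| > \delta)$ and claims both terms vanish---the first by uniform convergence, the second ``by feasibility of $\mu_k$'' (implicitly using that $\mu_k$ lies on the boundary of $\{f_k \geq 0\}$, so $f_k(\mu_k) = 0$). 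Your approach is direct: exhibit an $f_k$-feasible point just above $\mu^*$ (upper bound) and show no point below $\mu^* - \varepsilon$ can be $f_k$-feasible (lower bound). Your lower-bound half is in fact cleaner than the paper's, since $f(\mu) < 0$ for all $\mu < \mu^*$ is immediate from the definition of the infimum with no continuity needed, and combining this with $|f(\mu)| > \delta'$ yields $f(\mu) < -\delta'$ directly.

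There is a small gap in your upper-bound half. You assert that ``continuity and the infimum property'' yield some $\bar{\mu} \in (\mu^*, \mu^* + \varepsilon)$ with $f(\bar{\mu}) > \delta'$, but this does not follow: take $f(\mu) = -(\mu - \mu^*)^2$, which is continuous, has $\{f \geq 0\} = \{\mu^*\}$ so that $\mu^*$ is indeed the infimum, and satisfies the separation hypothesis with $\delta' = \varepsilon^2$, yet $f \leq 0$ everywhere. For this $f$ the perturbation $f_k = f - 1/k$ has empty feasible set, $\mu_k = +\infty$, and the lemma's conclusion fails. The paper's proof hits the same wall (it tacitly needs $\mu_k$ finite with $f_k(\mu_k) = 0$, neither guaranteed), so this is not a deficiency of your route relative to theirs. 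What is actually required is an explicit strict-feasibility assumption---the existence of some $\bar{\mu}$ with $f(\bar{\mu}) > 0$---which in the intended application $f(\mu) = m_\gamma(\mu) - \lambda(\mu, \by)$ is precisely the non-degeneracy condition already imposed for the global optimized case in Statement~3; you should add it as a hypothesis rather than try to derive it.
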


\begin{proof}
    Assume for the sake of contradiction that there exists $\varepsilon > 0$ such that $\mathbb{P}\left(\lvert \mu_k - \mu^* \rvert \geq \varepsilon\right)$ does not go to $0$.
    Then, by the condition on $f$, it must follow that $\mathbb{P}\left(\lvert f(\mu_k) \rvert > \delta \right)$ does not go to $0$.
    But, 
    \begin{align}
        \mathbb{P}\left(\lvert f(\mu_k)\rvert > \delta\right) \leq \mathbb{P} \left( \lvert f(\mu_k) - f_k(\mu_k)\rvert > \delta \right) + \mathbb{P} \left( \lvert f_k(\mu_k) \rvert > \delta \right),
    \end{align}
    and the right-hand side goes to 0 by uniform convergence in probability (first term) and by feasibility of $\mu_k$ (second term).
\end{proof}

\section{Sampling the Berger-Boos set}
\label{app:sec:sampling_bergerboos}

\subsection{VGS Sampler for low-dimensional and full column rank settings}
\label{subsec:vgs_ellipsoid}

Under the linear-Gaussian assumptions, when $\bK$ has full column rank, it can be shown that
\begin{equation} \label{eq:sim_conf_set_around_ls}
    \mathcal{B}_\eta = \big\{ \bx \in \mathcal{X}: (\bx - \hat{\bx})^\top \bK^\top \bSigma^{-1} \bK (\bx - \hat{\bx}) \leq \chi^2_{n, \eta} \big\},
\end{equation}
where $\hat{\bx}$ is the Generalized Least-Squares (GLS) estimator.
Equation~\eqref{eq:sim_conf_set_around_ls} describes an ellipsoid in $\mathbb{R}^p$ intersected with $\mathcal{X}$ with axis directions and lengths determined by $\chi^2_{n, \eta}$ and the eigenvectors and eigenvalues of $\bK^\top \bSigma^{-1} \bK$, respectively (see \cite{rust_burrus_supp} for more details).
An ellipsoid is nothing but a deformed ball.
As such, we can sample uniformly at random from the ellipsoid in $\mathbb{R}^p$ using an algorithm sampling uniformly at random from a $p$-ball, followed by an appropriate linear transformation and translation.
We can then include an additional accept-reject step to account for the constraint set $\mathcal{X}$.
Note that all subsequent discussions of spheres and balls assume unit radii and centering at the origin.
Also note that because the ellipsoid in \eqref{eq:sim_conf_set_around_ls} is centered around the GLS estimator, we can sample points first from within an ellipsoid of the same shape centered at the origin, and then translate those points by $\hat{\bx}$.

\cite{vgs_supp} propose a particularly efficient and clever algorithm to sample uniformly at random from the $p$-ball by proving a connection between uniform sampling on the $(p + 1)$-sphere (i.e., the surface of the ball in $\mathbb{R}^{p + 2}$) and uniform sampling within the $p$-ball (i.e., \emph{within} the unit ball in $\mathbb{R}^p$).
Namely, one can sample a Gaussian $\tilde{\bz} \sim \cN( \bm{0}, \bI_{p + 2})$ and normalize $\bz := \tilde{\bz} / \Vert \tilde{\bz} \Vert_2$ to sample a point uniformly at random from the $(p + 1)$-sphere.
Next, one simply drops the last two elements of $\bz$ to obtain a point that is uniformly sampled from within a $p$-ball.
The validity of this approach is substantiated by Lemma 1 and Theorem 1 in \cite{vgs_supp} and relies upon a distribution result about the ratio of chi-squared distributions and preservation of distribution under an orthogonal transformation.
To denote sampling a point following this procedure, we use the notation $\bx \sim \VGS(p)$.

To sample from our desired ellipsoid in \eqref{eq:sim_conf_set_around_ls}, first consider the eigendecomposition of $\bK^\top \bSigma^{-1} \bK = \bP \bOmega \bP^\top$, where $\bOmega = \text{diag} \left(\omega_1^2, \omega_2^2, \dots, \omega_p^2 \right)$, $\omega_i$ is the $i$-th eigenvalue of $\bK^\top \bSigma^{-1} \bK$ and $\bP$ is an orthonormal matrix where the columns vectors are the eigenvectors of $\bK^\top \bSigma^{-1} \bK$.
Denote $\bOmega^{1/2} = \text{diag} \left(\omega_1, \omega_2, \dots, \omega_p \right)$ and by extension, $\bOmega^{-1/2} = \text{diag} \left(\omega_1^{-1}, \omega_2^{-1}, \dots, \omega_p^{-1} \right)$ when $\omega_i > 0$ for all $i$.
These decompositions imply the correct transformation to apply to points sampled from the $p$-ball via $\bx \sim \VGS(p)$.
Namely, define $\bw := \sqrt{\chi^2_{n, \eta}} \bP \bOmega^{-1/2} \bx$.
To know that $\bw$ is sampled from the correct ellipsoid, it should be the case that $\bw^\top \bK^\top \bSigma^{-1} \bK \bw \leq \chi^2_{n, \eta}$, as then we can simply make the update $\bw \leftarrow \bw + \hat{\bx}$ to ensure that we have a point sampled from \eqref{eq:sim_conf_set_around_ls}.
This guarantee is verified as follows:
\begin{align}
    \bw^\top \bK^\top \bSigma^{-1} \bK \bw &= \bw^\top \bP \bOmega^{1/2} \bOmega^{1/2} \bP^\top \bw \nonumber \\
    &= \chi^2_{n, \eta} \cdot \bx^\top \bOmega^{-1/2} \bP^\top \bP \bOmega^{1/2} \bOmega^{1/2} \bP^\top \bP \bOmega^{-1/2} \bx  \label{eq:change_of_basis_justification} \\
    &= \chi^2_{n, \eta} \cdot \bx^\top \bx \leq \chi^2_{n, \eta}, \nonumber
\end{align}
where $\bP^\top \bP = \bI$ by definition and the last line follows since we know $\bx$ is sampled from the $p$-ball and therefore $\bx^\top \bx \leq 1$.
\Cref{eq:change_of_basis_justification} justifies that the $\bw$ samples lie within the desired ellipsoid, and their uniform distribution follows from the uniform distribution of $\bx$ in the $p$-ball since the distribution is preserved under a linear transformation.
Finally, we accept $\bw$ if $\bw \in \mathcal{X}$ and reject if $\bw \notin \mathcal{X}$.
This procedure for sampling at random from $\mathcal{B}_\eta$ is summarized in \Cref{alg:ellipsoid_sample_full_rank}.

\begin{algorithm}[!ht]
\caption{VGS Sampler}
\label{alg:ellipsoid_sample_full_rank}
\begin{algorithmic}[1]
\REQUIRE $p \in \mathbb{N}$, $M \in \mathbb{N}$, $\bP$, $\bOmega^{-1/2}$, $\hat{\bx}$, $\chi^2_{n, \eta}$.
\vspace{0.35em}
\STATE Define $\mathcal{S} := \{ \}$ to be the initialized set in which the sampled points are to be placed.
\FOR{$k = 1, 2, \dots, M$}
    \STATE \textbf{Sample from the $p$-ball using VGS:} $\bx_k := \bz_{1:p} / \Vert \bz \Vert_2$, where $\bz \sim \cN(\bm{0}, \bI_{p + 2})$ and $\bz_{1:p}$ denotes taking the first through $p$-th indices (inclusive).
    \STATE \textbf{Transform VGS output:} $\bw_k := \sqrt{\chi^2_{n, \eta}} \bP \bOmega^{-1/2} \bx_k$.
    \STATE \textbf{Translate $\bw_k$ by the GLS estimator:} $\bw_k \leftarrow \bw_k + \hat{\bx}$.
    \STATE \textbf{Accept-Reject to incorporate constraints}: If $\bw_k \in \mathcal{X}$, then add $\mathcal{S} \leftarrow \mathcal{S} \cup \{\bw_k\}$, else start loop iteration $k$ again.
\ENDFOR
\vspace{0.15em}
\ENSURE $\mathcal{S}$ containing uniformly sampled points over $\mathcal{B}_\eta$ (as defined in \eqref{eq:sim_conf_set_around_ls}).
\end{algorithmic}
\end{algorithm}
Generating samples using the VGS Sampler is efficient, but its feasibility diminishes in high dimensions because of the accept-reject step.
To illustrate this point, consider the simple scenario where $\mathcal{X} = \mathbb{R}^p_+$, i.e., the non-negative orthant of $\mathbb{R}^p$ and suppose we sample from the $p$-ball intersected with $\mathbb{R}^p_+$ by sampling $\bx \sim \VGS(p)$ where $\bP = \bOmega = \bI$, $\hat{\bx} = \bm{0}$ and we use $1$ instead of $\chi^2_{n, \eta}$.
Then, $\mathbb{P}(\bx \in \mathbb{R}^p_+) = 2^{-p}$ and the acceptance probability of each sample goes to zero exponentially in $p$.
To make this point slightly more general, we generate data $\by \sim \mathcal{N}(\bx^*_p, \bI_p)$ where $\bx^*_p \in \mathbb{R}^p_+$ and is a vector of ones.
For a collection of dimensions $p \in [2, 30]$, we sample $\bx \sim \VGS(p)$ with $\bP = \bOmega = \bI$ and $\hat{\bx} = \by$ to estimate the probability that $\bx$ is in $\mathbb{R}^p_+$ and plot the results in the left panel of \Cref{fig:vgs_sampler_failure}.
Since the acceptance probability decays exponentially (under $10^{-4}$ for $30$ dimensions), it becomes clear that this algorithm is inefficient in dimensions larger than $10$, since the acceptance probability quickly becomes prohibitively small in regimes where a larger sample size is even more important.

Dimension is only one of two primary complicating factors.
Not only is less of a (potentially) shifted $p$-ball's volume in the non-negative orthant as $p$ grows but if our data are generated via $\by \sim \mathcal{N}(\bK \bx^*, \bI_p)$ with $\bx^* \in \mathbb{R}^p_+$ where the condition number of $\bK$ is large, it is possible that even less of the ellipsoid from which the VGS Sampler draws points intersects with the parameter constraint.
To illustrate this point, we generate a single observation from the aforementioned model using a $\bK \in \mathbb{R}^{40 \times 40}$ as described in \Cref{sec:wide_bin_deconvolution}.
The $\bx^* \in \mathbb{R}^{40}_+$ is created in the same way as that of \Cref{sec:unfolding_realistic}.
The right panel of \Cref{fig:vgs_sampler_failure} shows a computed probability mass function for the number of coordinates in a VGS Sampler draw (with $\bP$ and $\bOmega$ defined such that $\bK^\top \bK = \bP \bOmega \bP^\top$) complying with the non-negativity parameter constraints.
Equivalently, the right panel of \Cref{fig:vgs_sampler_failure} shows the computed probability mass function for the number of coordinates lying within the \bergerboos set.
For this particular setup, we critically note that out of $5 \times 10^4$ draws from the VGS Sampler, none of the draws had all coordinates comply with the non-negativity constraint.
By contrast, for the aforementioned noise model where we more simply sample from the $p$-ball intersected with the non-negative orthant, the computed acceptance probability is approximately $3.35 \times 10^{-6}$, in both cases emphasizing the VGS Sampler's poor performance in high-dimensional and ill-conditioned forward model regimes.

\begin{figure*}[!ht]
  \includegraphics[width=0.49\textwidth]{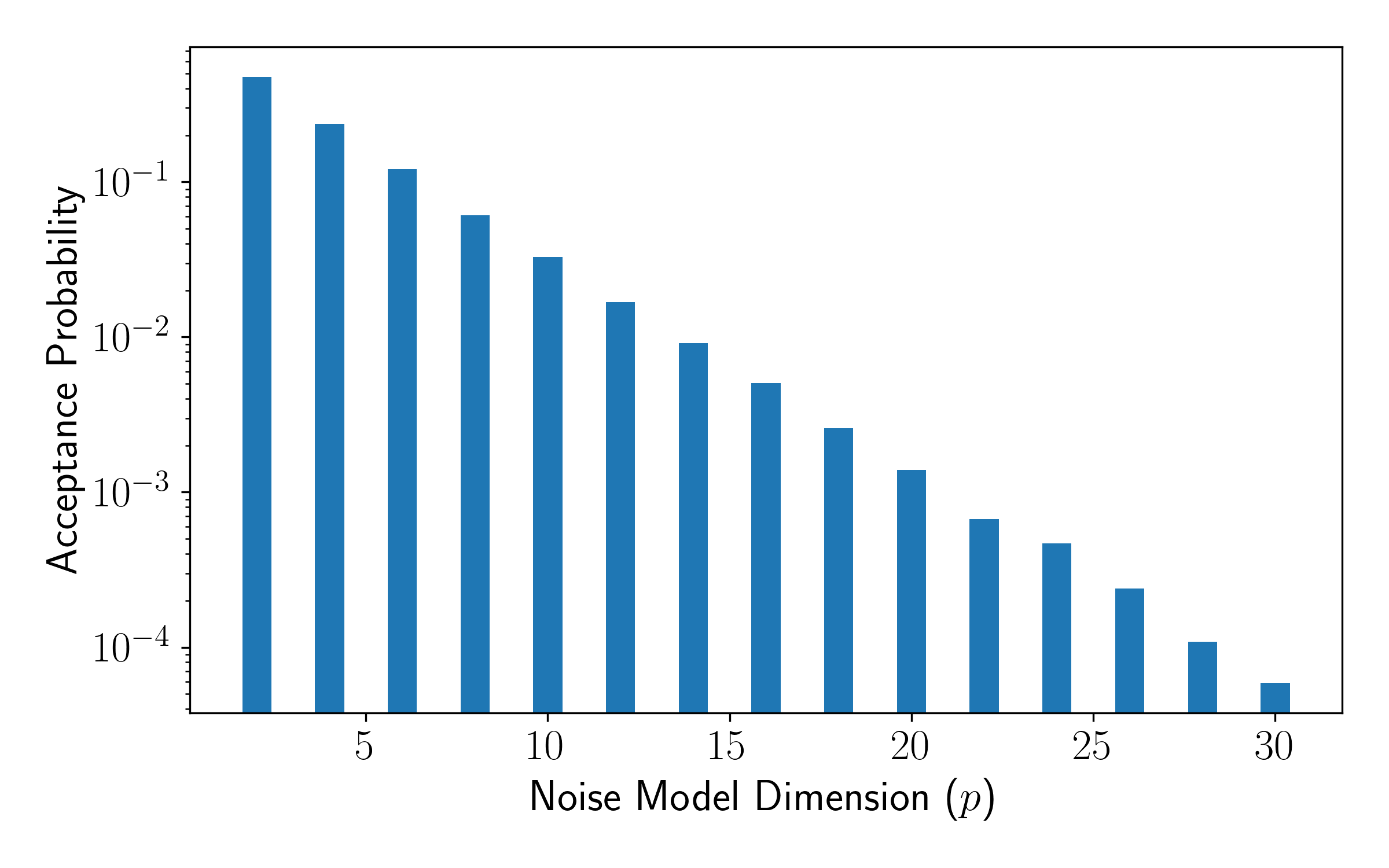}
  \includegraphics[width=0.49\textwidth]{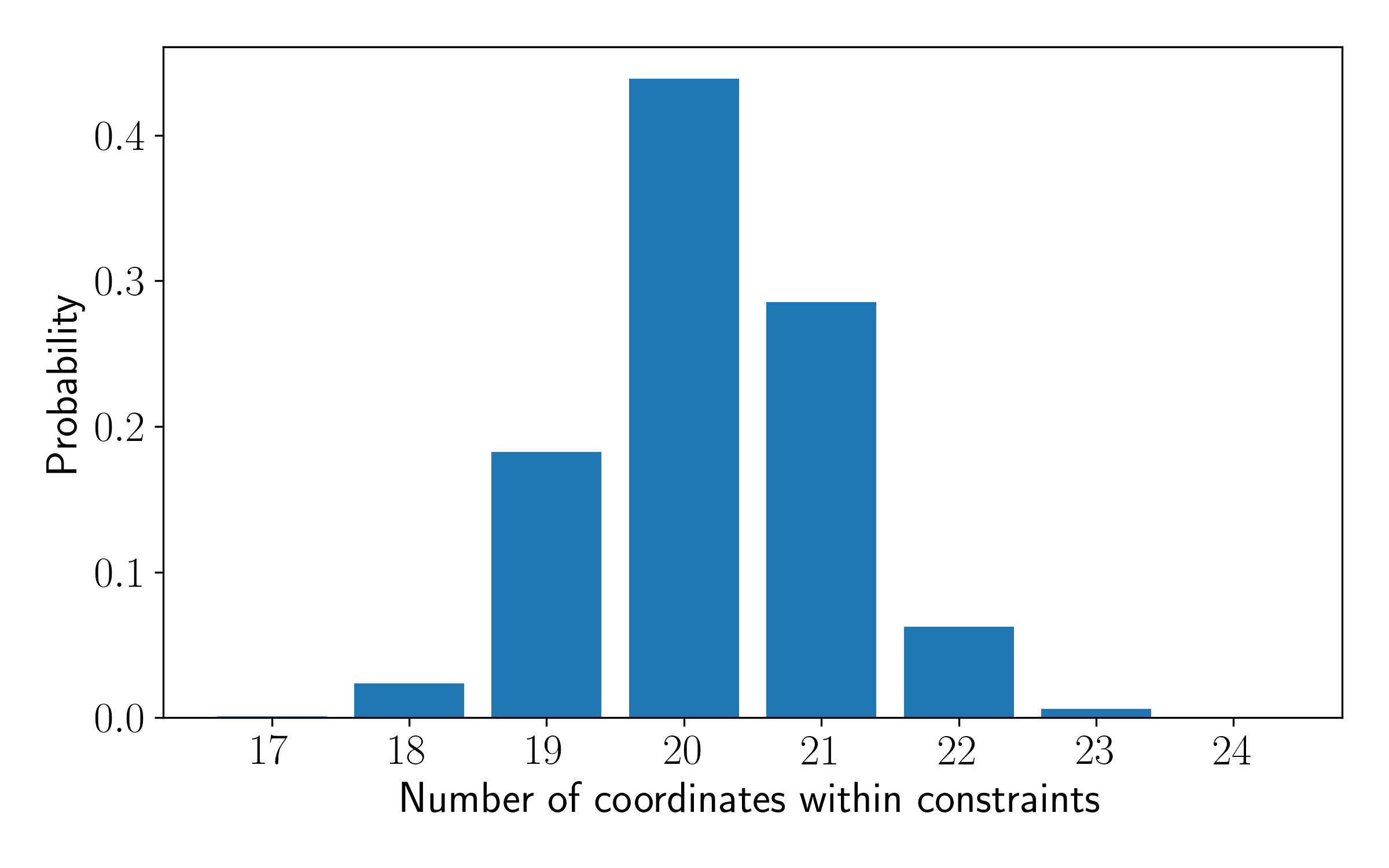}
  \caption[VGS Sampler infeasibility in high-dimensional regimes]{Numerical illustrations of the VGS Sampler's infeasibility in high-dimensional regimes. The \textbf{left} panel shows the computed acceptance probability of a point drawn by the VGS Sampler with data generated from a non-negatively constrained Gaussian noise model. Crucially, at only $30$ dimensions, the acceptance probability is already less than $10^{-4}$ for this particular setup. The \textbf{right} panel shows the computed probability mass function for the number of non-negative constraint complying coordinates of a VGS sample with data generated from a non-negatively constrained linear Gaussian model in $40$ dimensions with a non-identity forward model. Since this is an example using a forward model with a large condition number ($\approx 1.6 \times 10^4$), we critically note that there is empirically zero probability of generating a sample within the non-negativity constraints.}
  \label{fig:vgs_sampler_failure}
\end{figure*}

\subsection{Polytope sampler for general settings}
\label{subsec:mcmc_algo}

In settings where the forward model is not linear and full column rank, \Cref{alg:ellipsoid_sample_full_rank} fails.
The ineffectiveness of this algorithm increases if the condition number of the linear forward model is large, such that most of the pre-image ellipsoid defining the \bergerboos set lies outside of the constraint set.
Although these scenarios induce particular geometric challenges, in the linear-Gaussian case with convex $\mathcal{X}$, we are still fundamentally sampling a convex set for which there is a vast literature.
For example, there is a vast sampling literature for computing Bayesian posteriors in high dimensions via nested sampling \cite{skilling_supp, nest_samp_phys_sci, buchner_supp}.
In particular, nested sampling has been successfully applied in high-dimensional cosmology settings using sophisticated approaches to strategically sample the parameter space by restricting prior sampling in various ways \cite{buchner_supp, montel_nested_samp}.
Although these approaches provide tools addressing a sampling setting similar to ours (i.e., the \bergerboos set can be viewed as a portion of the parameter space defined by a cutoff on the likelihood) they are ultimately aimed at sampling from a particular distribution (i.e., the posterior), which is a stronger criterion than required here.
For sampling general convex sets, simple algorithms like Hit-and-Run are available \cite{hit_and_run, lovasz_hnr_mix_fast, lovasz_hnr_from_corner}.
However, in the particular case considered here, more sophisticated and efficient algorithms can be devised.
In particular, there exists a deep literature on random walks over polytopes such that the asymptotic stationary distribution of the walk is a uniform distribution over the polytope of interest \cite{kannan, narayanan, mcmc_polytope_supp}.
As such, we propose to first construct a bounding polytope, $\mathcal{P}^d$ composed of $d$ half-spaces around $\mathcal{B}_\eta$, sample $C$ random walks within the \bergerboos set using the Vaidya walk as described in \cite{mcmc_polytope_supp} each starting at a point from a collection of strategically chosen locations, and then combine the parallel chains to create the final sample set.
The detailed algorithm can be seen in \Cref{alg:polytope_sampler}.

Although MCMC algorithms are typically evaluated using trace plots on individual dimensions of the parameter space, given the high-dimensionality of the problem and the final step combining several MCMC chains in the parameter space started from different positions, it is more meaningful to evaluate this algorithm's ability to sample fully from the functional space.
Namely, we can solve for the largest and smallest values the functional can take within the \bergerboos set as follows:
\begin{equation} \label{eq:ssb_bounds}
    I_{\BB}(\by) := \bigg[\mu^{l}_{\BB}, \mu^{u}_{\BB} \bigg] = \bigg[\min_{\bx \in \mathcal{B}_\eta} \varphi(\bx), \max_{\bx \in \mathcal{B}_\eta} \varphi(\bx) \bigg].
\end{equation}
When $\varphi(\bx) = \bh^\top \bx$ for some $\bh \in \mathbb{R}^p$, $I_{\BB}(\by)$ corresponds to the SSB interval in \cite{stanley_unfolding_supp}.
Computing the sets $\localcs$ and $\globalcs$ well is then contingent upon sampling functional values within $I_{\BB}(\by)$ well, since a functional value can only be included in the inverted set if the sampler has a non-zero probability of sampling arbitrarily close to it.
By ``well'', we informally mean that the sampled functional values range at least between the endpoints of $I_{\BB}(\by)$ and that if we partition $I_{\BB}(\by)$ into $n$ sub-intervals, that most if not all of the sub-intervals contain at least one sample.
In practice, it will often be the case that the sampled functional values can lie outside the endpoints of $I_{\BB}(\by)$ since the MCMC chains are sampling a bounding polytope of the \bergerboos set.

\myparagraph{Choosing the bounding polytope.}
Any bounded polytope of $\mathcal{B}_\eta$ is the intersection of a finite number of half-spaces defined in $\mathbb{R}^p$.
The task of constructing a bounding polytope is then equivalent to choosing a collection of $d$ hyperplanes in $\mathbb{R}^p$ to construct a set $\mathcal{P}^d := \{\bx \in \mathbb{R}^p : \bA \bx \leq \bb \}$ such that $\mathcal{B}_\eta \subseteq \mathcal{P}^d$, where $\bA \in \mathbb{R}^{d \times p}$ and $\bb \in \mathbb{R}^d$.
Let $\ba_i^\top \in \mathbb{R}^p$ denote the $i$-th row vector of $\bA$.
We compute $b_i$ (the $i$-th element of $\bb$) as
\begin{equation}
    \label{eq:hyperplane_max}
    b_i = \max_{\bx \in \mathcal{B}_\eta} \ba_i^\top \bx.
\end{equation}
This construction ensures the necessary inclusion.
We consider three approaches to pick the vectors $\ba_i$: (i) using the constraints $\mathcal{X}$, (ii) using the known eigenvectors defining the bounded directions of the pre-image ellipsoid, and (iii) randomly.
In practice, we combine these approaches to ensure that we consider only parameter settings in agreement with our physical constraints, and to tighten the bounding \bergerboos set polytope as much as possible.
There is a tradeoff with respect to the latter consideration since the mixing time and computational cost for the Vaidya walk increase with the number of hyperplanes \cite{mcmc_polytope_supp}.

To incorporate the known parameter constraints, consider the non-negativity constraint used in \Cref{sec:numerical_exp}, i.e., $\bx \in \mathbb{R}^p_+$.
To enforce non-negativity, we set $\ba_i = -\be_i$ for $i = 1, \dots, p$, where $\be_i$ is defined by its $i$-th element set to one and the rest of its elements set to zero, and $b_i = 0$.
These choices produce $p$ rows in $\bA$ corresponding to the desired lower bounds (i.e., $x_i \geq 0$ for all $i$), but we can compute an additional $p$ constraints using \eqref{eq:hyperplane_max} with $\ba_i := \be_i$ for $i = p + 1, \dots 2p$.
These $2p$ constraints define a hyperrectangle enclosing the \bergerboos set in the parameter space.
To incorporate polytope constraints based upon the forward model, we use the ellipsoidal definition of the pre-image as shown in \eqref{eq:sim_conf_set_around_ls} and eigendecomposition of $\bK^\top \bSigma^{-1} \bK = \bP \bOmega \bP^\top$ shown in \eqref{eq:change_of_basis_justification}.
Note that this ellipsoid form is valid under the linear-Gaussian noise assumption and for all $\bK$.
In the event that $\bK$ is not full column rank, the ellipsoid defined by \Cref{eq:sim_conf_set_around_ls} is still defined via the pseudo-inverse of $\bK^\top \bSigma \bK$, but where the ellipsoid is unbounded in some directions.
The column vectors of $\bP$ corresponding to the non-zero entries on the diagonal of $\bOmega$ are the eigenvectors corresponding to the bounded principal axes.
As such, both $\bp_i$ and $-\bp_i$ for $i = 1, \dots, p$ (the column vectors of $\bP$) can be used as rows of $\bA$ with their corresponding bounds defined by \eqref{eq:hyperplane_max}.
Finally, to further tighten the polytope around the \bergerboos set, we sample a multivariate Gaussian, i.e., $\ba_i \sim \mathcal{N}(\bm{0}, \bI_p)$ to include random hyperplanes.
Note, it is possible that the unbounded directions of the ellipsoid defined by \Cref{eq:sim_conf_set_around_ls} are not bounded when intersecting with the parameter constraint set, and hence there is no bounding polytope.
In this case, our interval construction should be unbounded since there is not enough information to produce a bounded confidence set for the quantity of interest.

Once the components $(\bA, \bb)$ have been defined using some or all of the above hyperplane generation strategies, the Vaidya walk can immediately be employed to perform a random walk around the polytope.
The primary intuitive requirement we wish to satisfy with any sampling scheme in this context is that every region of the \bergerboos set has a non-zero probability of being sampled.
Asymptotically, the Vaidya walk samples the desired polytope uniformly at random, which satisfies a stronger requirement.
In practice, the need to sample non-uniformly can arise if there are particularly meaningful parameter settings in regions that are difficult for the random walk to reach.
We explore such a case in \Cref{sec:3d_constrained_gaussian}.
Additionally, although the asymptotic distribution of the Vaidya sampler is theoretically sufficient, in practice, it often has some difficulty reaching the corners of the generated polytope.
Although MCMC chain mixing is typically evaluated by looking at trace plots, this diagnostic is insufficient here because the dimension is high, and the defined polytope can make different dimensions difficult to compare.
Instead, we consider how well the sampler samples the functional values $\varphi(\bx)$.
This motivates taking a collection of starting points between the SSB endpoints and the Chebyshev center of the polytope, as described in the following section.

\myparagraph{Constructing the parallel chain starting points.} 
Although the random walks defined in \cite{mcmc_polytope_supp} asymptotically sample uniformly over $\mathcal{P}^d$, given the long and thin shape of the pre-image, running the Vaidya walk from even a ``good'' starting position does not consistently sample the functional space well.
Instead, we use the following heuristic to construct several parallel chains that constitute a complete sample when combined.
We define an even number of starting points, $C \in 2\mathbb{N}$, to roughly span the \bergerboos set, run the Vaidya walk for $M_{p}$ steps from each starting point to collect parameter settings $\{\tilde{\bx}_1, \dots, \tilde{\bx}_{M_p} \}$, and combine the samples from each walk, resulting in $C \times M_p$ total samples.
In practice, $C$ and $M_p$ are chosen to yield a total of $M$ samples as desired for \Cref{alg:max_q_rs} or \Cref{alg:max_q_qr}.
Since we have designed this process to avoid the neccessity of any individual chain reaching its asymptotic distribution, we do not need a burn-in period for any chain as long as we are confident that the union of the sampled points sufficiently covers Interval~\eqref{eq:ssb_bounds} so that we may accept or reject any quantity of interest value in that interval.
We define the collection of starting points using the line segments defined by the parameters generating the endpoints of $I_{\BB}(\by)$ and the Chebyshev center of $\mathcal{P}^d$ as defined in \cite{convex_opt}, and denoted by $\bx_c$.
This point is the center of the largest ball contained within $\mathcal{P}^d$ and therefore acts as a way to characterize the center of $\mathcal{P}^d$.
We denote the parameter settings generating the endpoints of $I_{\BB}(\by)$ by $\hat{\bx}^{l}$ for the lower endpoint and $\hat{\bx}^{u}$ for the upper endpoint.
We then define a uniform grid of values $\{ \tau_l \}_{l = 1}^{C / 2}$ such that $\tau_l \in (0, 1)$ and $\tau_l < \tau_{l + 1}$ for all $l$.
For each $k \in [C]$, define $k' := \lceil k / 2 \rceil$ and set $\bx^{start}_k := \tau_{k'} \hat{\bx}^{l} + (1 - \tau_{k'}) \bx_c$ if $k$ is odd and $\bx^{start}_k := \tau_{k'} \hat{\bx}^{u} + (1 - \tau_{k'}) \bx_c$ if $k$ is even.
Creating starting positions along the lines connecting these endpoints and the Chebyshev center accommodates the chosen polytope while helping ensure that samples are chosen spanning the range of possible functional values over the \bergerboos set.

The empirical performance of the Polytope sampler can be seen in \Cref{fig:sampler_figure}, showing (left) a histogram of the functional values sampled and (right) a trace plot for the sampled Vaidya walks starting from points constructed as described above.
These plots are generated for one observation from the $80$-dimensional ill-posed inverse problem in the coverage study performed in \Cref{sec:wide_bin_deconvolution}.
Critically, the histogram shows that the Polytope sampler samples the functional space well, and the trace plot shows that our starting point construction heuristic performs well in practice.

\begin{figure*}[!ht]
  \includegraphics[width=0.49\textwidth]{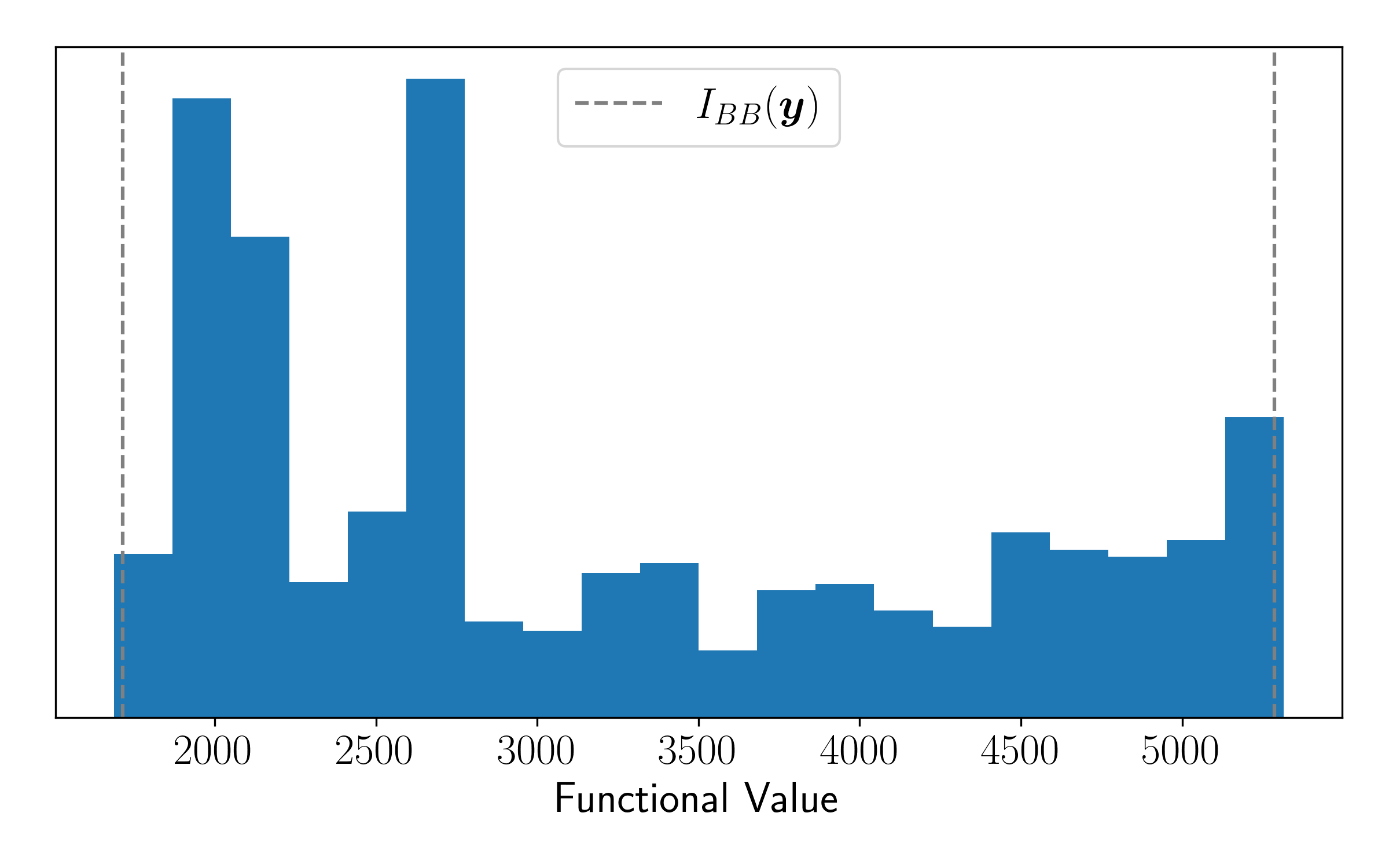}
  \includegraphics[width=0.49\textwidth]{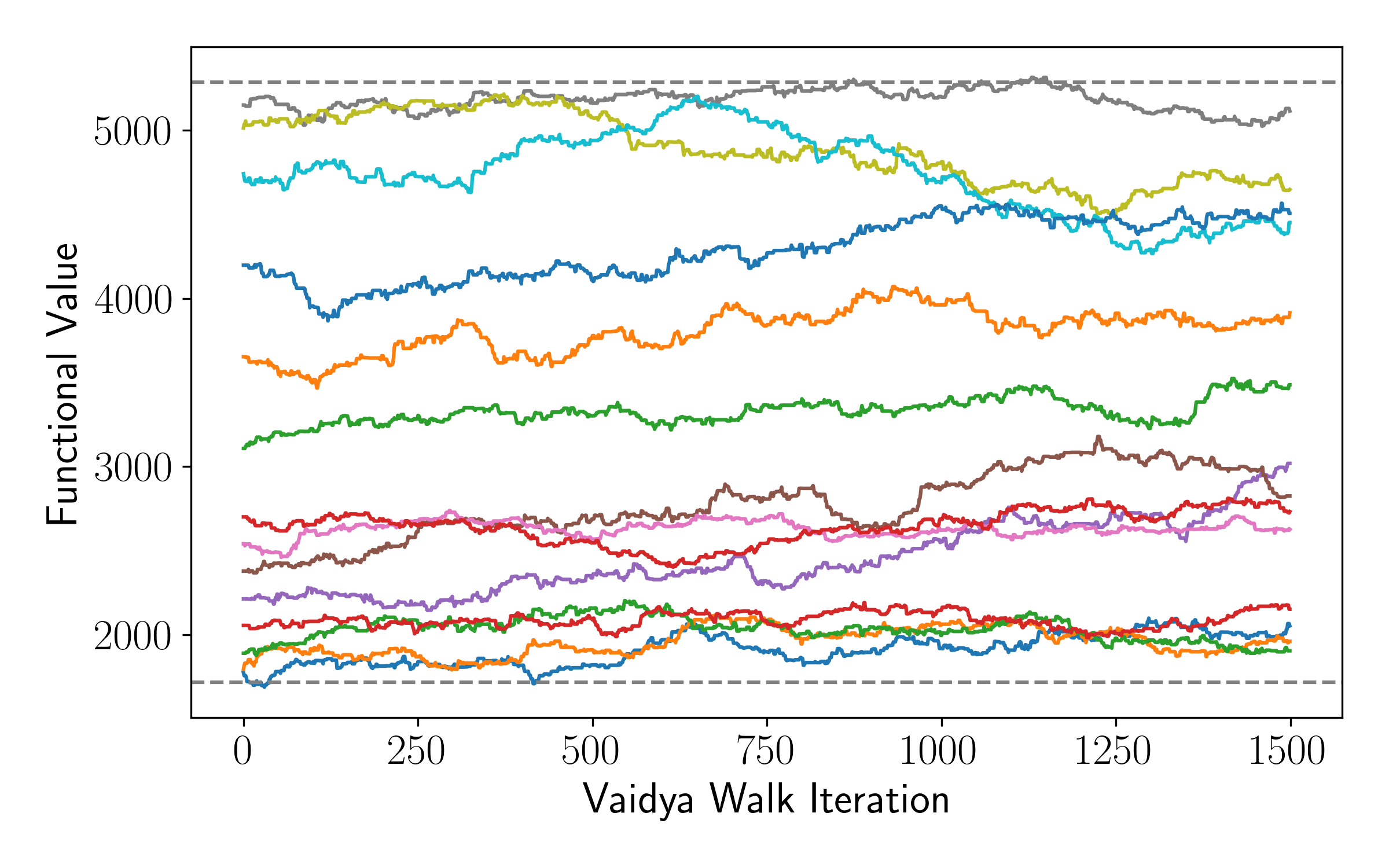}
  \caption[Polytope sampler output for the wide-bin deconvolution problem]{Polytope sampler output for a realization of the $80$-dimensional ill-posed inverse problem studied in \Cref{sec:wide_bin_deconvolution}. The \textbf{left} panel contains a histogram of sampled functional values which both span and cover well the range of $I_{\BB}(\by)$ (shown by the dashed gray lines in both plots). The \textbf{right} panel contains trace plots of the $14$ Vaidya walks (each indicated by a different color) which together constitute the full sample. Our heuristic for choosing starting points along the lines connecting the parameter settings generating the endpoints of $I_{\BB}(\by)$ and the Chebyshev center of the polytope provides a good initial spread of starting functional values.}
  \label{fig:sampler_figure}
\end{figure*}

\myparagraph{Sampling from the \bergerboos set.}
With $\mathcal{P}^d$ and the starting positions defined, we construct a sample $\mathcal{S} := \{\tilde{\bx}_1, \dots,  \tilde{\bx}_{C \times M_p}\}$.
We leave the details of the Vaidya walk to the original paper \cite{mcmc_polytope_supp}, but note an essential radius tuning parameter of the algorithm that must be chosen.
This radius impacts the spread of a Gaussian proposal distribution for the walk and thus affects a new proposed point's acceptance probability.
Choosing a radius that is too large results in a low acceptance rate of proposed steps, thus creating a walk that does not mix well.
In contrast, choosing a radius that is too small results in a high acceptance rate with relatively small step sizes.
In practice, we find a radius setting of $0.5$ works well as it produces an acceptance probability of $\approx33.3\%$, producing a reasonable tradeoff between taking meaningful steps and not rejecting too many steps.

\begin{algorithm}[!ht]
\caption{Polytope sampler}
\label{alg:polytope_sampler}
\begin{algorithmic}[1]
\REQUIRE $M_p \in \mathbb{N}$ and $C \in 2\mathbb{N}$. $\bA \in \mathbb{R}^{d \times p}$, $\bb \in \mathbb{R}^{d}$. $\{\tau_l \}_{l = 1}^{C / 2}$, where $\tau_l \in (0, 1)$ and $\tau_l < \tau_{l + 1}$ for all $l$.
\vspace{0.35em}
\STATE Let $\mathcal{S} := \{ \}$ be the set in which we store all sampled parameter settings.
\STATE \textbf{Construct Chebyshev center:} Solve for $\bx_c$ using the following optimization.
\begin{align}
    \underset{\bx_c, r}{\text{maximize}} \quad& r \nonumber \\
    \text{subject to} \quad& \ba_i^\top \bx_c + r \lVert \ba_i \rVert_2 \leq b_i, \; \; i = 1, \dots, d. \nonumber
\end{align}
\STATE \textbf{Compute $\varphi$ extremes of \bergerboos set:} Extreme points are computed with respect to the functional of interest:
\begin{align}
    \hat{\bx}^{l} &:= \text{argmin} \; \; \varphi(\bx) \quad \text{subject to} \; \; \bx \in \mathcal{B_\eta}, \nonumber \\
    \hat{\bx}^{u} &:= \text{argmax} \; \; \varphi(\bx) \quad \text{subject to} \; \; \bx \in \mathcal{B_\eta}.
\end{align}
\FOR{$k = 1, 2, \dots, C$}
    \STATE \textbf{Construct starting point:} Define $k' := \lceil k / 2 \rceil$. Define $\bx^{start}_k = \tau_{k'} \hat{\bx}^{l} + (1 - \tau_{k'}) \bx_c$ if $k$ is odd and $\bx^{start}_k = \tau_{k'} \hat{\bx}^{u} + (1 - \tau_{k'}) \bx_c$ if $k$ is even.
    \STATE \textbf{Run the Vaidya walk for $M_p$ steps:} Collect samples $\{\tilde{\bx}_1, \dots, \tilde{\bx}_{M_p} \}$ and add them to $\mathcal{S}$.
\ENDFOR
\vspace{0.15em}
\ENSURE $\mathcal{S}$ containing sampled points over $\mathcal{B}_\eta$.
\end{algorithmic}
\end{algorithm}

\section{Quantile regression overview}
\label{app:sec:quantile_regression_overview}

Fundamentally, given a one-dimension random variable $z \sim P_{\bx}$ that depends on the parameter $\bx \in \mathbb{R}^p$, we are interested in the upper $\gamma$-quantile at every parameter setting, i.e.,
\begin{equation}
    \mathbb{P}_{\bx}(z > Q_{\bx}(1 - \gamma)) = \gamma.
\end{equation}
We note that the quantile surface itself is not random, so we can use draws from the distribution $P_{\bx}$ to estimate $Q_{\bx}$ at a given parameter setting $\bx$.
However, as noted in \Cref{sec:interval_constructions}, performing such an estimate is not always computationally feasible, and intuitively, we might expect quantiles to vary smoothly over the parameter space, which would imply that information about a quantile at $\bx_1$ should be related to a quantile at $\bx_2$ if these points are close.
In statistics literature, estimating $Q_{\bx}$ is framed as estimating a quantile function conditional on known covariates and is often thought of as a generalization of estimating the conditional median \cite{qr_econ_perspectives}.
Just as conditional mean and conditional median estimation can be accomplished by using an appropriate loss function (sum of squares and absolute differences, respectively), estimating conditional quantiles can be accomplished by minimizing the pinball loss defined as follows:
\begin{equation} \label{eq:pinball_loss}
    L_\gamma(z, q) := \begin{cases}
        (1 - \gamma) (q - z), & z < q, \\
        -\gamma (z - q), & z \geq q.
    \end{cases}
\end{equation}
\cite{koenker2005quantile, pinball_loss}.
As such, estimating the quantile surface can be framed as a risk-minimization problem, leaving only standard modeling choices to fill in for $q$ in \eqref{eq:pinball_loss}.
Although initial efforts were focused on linear parametric quantile regressors \cite{original_qr_paper}, in recent years, modeling efforts have focused on nonparametric varieties.
\cite{qr_forests} adapted random forests to quantile regression.
\cite{nonpara_qr} leveraged Reproducing Kernel Hilbert Spaces to construct smooth quantile regressors.
Closer to our application, \cite{waldo_supp} used neural networks to optimize the pinball loss to learn the quantile surface for their application.

\section{Additional details and illustrations in \Cref{sec:numerical_exp}}
\label{sec:additional-details-sec:numerical_exp}

\subsection{Importance-like sampler for the three-dimensional example in \Cref{sec:3d_constrained_gaussian}}
\label{app:sec:importance_like_sampler}

Since the parameter settings with quantiles meaningfully larger than $\chi^2_{1, \alpha}$ are located close to the constraint boundary, a sampling challenge is presented.
Using the samplers as described in \Cref{subsec:preimage_sampling} results in under-sampling of this large-quantile region since both samplers provide uniform random samples over the \bergerboos set.
\Cref{alg:3d_sampler} presents a modified version of \Cref{alg:max_q_rs}, an importance-like sampler to increase the probability mass of samples close to the constraint boundary.
Note, we say ``importance-like'' because we do not provide any theoretical guarantee regarding this sampler's ability to produce draws from a particular target distribution.
We tailored \Cref{alg:3d_sampler} to settings with a non-negativity constraint and hand-tuned the length scale parameter to the particular three-dimensional example in \Cref{sec:3d_constrained_gaussian}.

\begin{algorithm}[!ht]
\caption{Importance-like sampler for three-dimensional constrained Gaussian}
\label{alg:3d_sampler}
\begin{algorithmic}[1]
\REQUIRE Number of samples: $M \in \mathbb{N}$, inverse length scale: $\gamma_p$, and order of norm: $q \in (0, 1)$.
\vspace{0.35em}
\STATE Instantiate a list $\mathcal{S}$ of length $M$ to store sampled points.
\WHILE{$\lvert \mathcal{S} \rvert < M$}
    \STATE Draw $M - \lvert \mathcal{S} \rvert$ realizations from \Cref{alg:polytope_sampler}: $\tilde{\bx}_1, \dots, \tilde{\bx}_{M - \lvert \mathcal{S} \rvert}$
    \FOR{$k = 1, \dots, M - \lvert \mathcal{S} \rvert$:}
    \STATE Compute the probability of accepting the $k$-th draw: $p_k := \exp\left(-\gamma_p \lVert \tilde{\bx}_k \rVert_q \right)$.
    \STATE Draw $z_k \sim \text{Bernoulli}(p_k)$.
    \IF{$z_k = 1$}
        \STATE $\mathcal{S}[k] \leftarrow \tilde{\bx}_k$ 
    \ENDIF
    \ENDFOR
\ENDWHILE
\vspace{0.15em}
\ENSURE Sampled parameters in \bergerboos set $\mathcal{S}$.
\end{algorithmic}
\end{algorithm}

The key to \Cref{alg:3d_sampler} is the additional accept/reject step where the $k$-th sample is accepted with probability $p_k$.
Additionally, by setting $q \in (0, 1)$, we prioritize retaining samples closer to the non-negativity boundary.
This effect can be seen in \Cref{fig:3d_importance_sampler}, where the vast majority of samples are found closer to the constraint boundary.
The ability of \Cref{alg:3d_sampler} to better sample the high-quantile regions of the \bergerboos set can be seen in \Cref{fig:importance_vs_polytope_sampler}.
In particular, for the functional values $\mu \in (-4, -2)$, the importance-like sampler is substantially more effective than the Polytope sampler at finding parameter settings with $\gamma$-quantiles greater than $1.1$.

\begin{figure}[p]
  \includegraphics[width=0.95\textwidth]{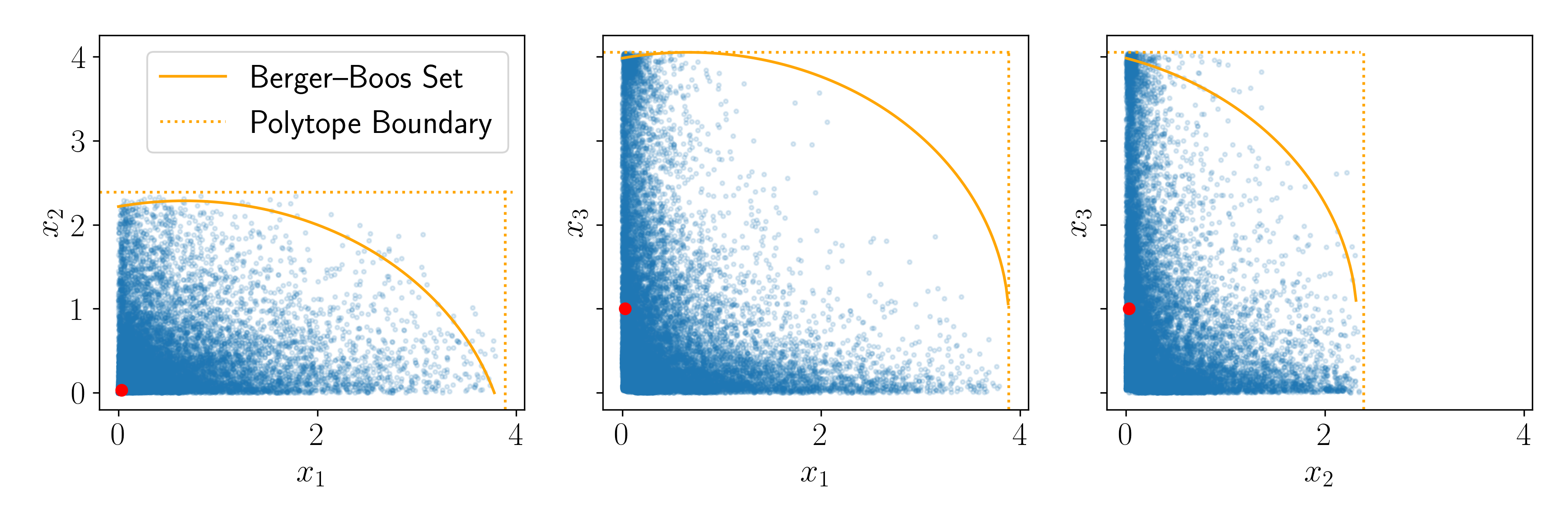}
  \caption[Importance-like sampler points near the constraint boundary]{When sampling points using \Cref{alg:3d_sampler}, the vast majority of samples are found closer to the non-negativity constraint boundary. The true parameter setting is shown by the red point, while the parameter settings sampled by \Cref{alg:3d_sampler} are shown by the blue points. This sampling prioritization helps adequately sample the regions of the \bergerboos set where the quantile surface is larger than $\chi^2_{1, \alpha}$. Furthermore, the vast majority of sampled points lie within the \bergerboos set, with some lying outside within the bounding polytope.}
  \label{fig:3d_importance_sampler}
\end{figure}

\begin{figure}[p]
  \includegraphics[width=0.95\textwidth]{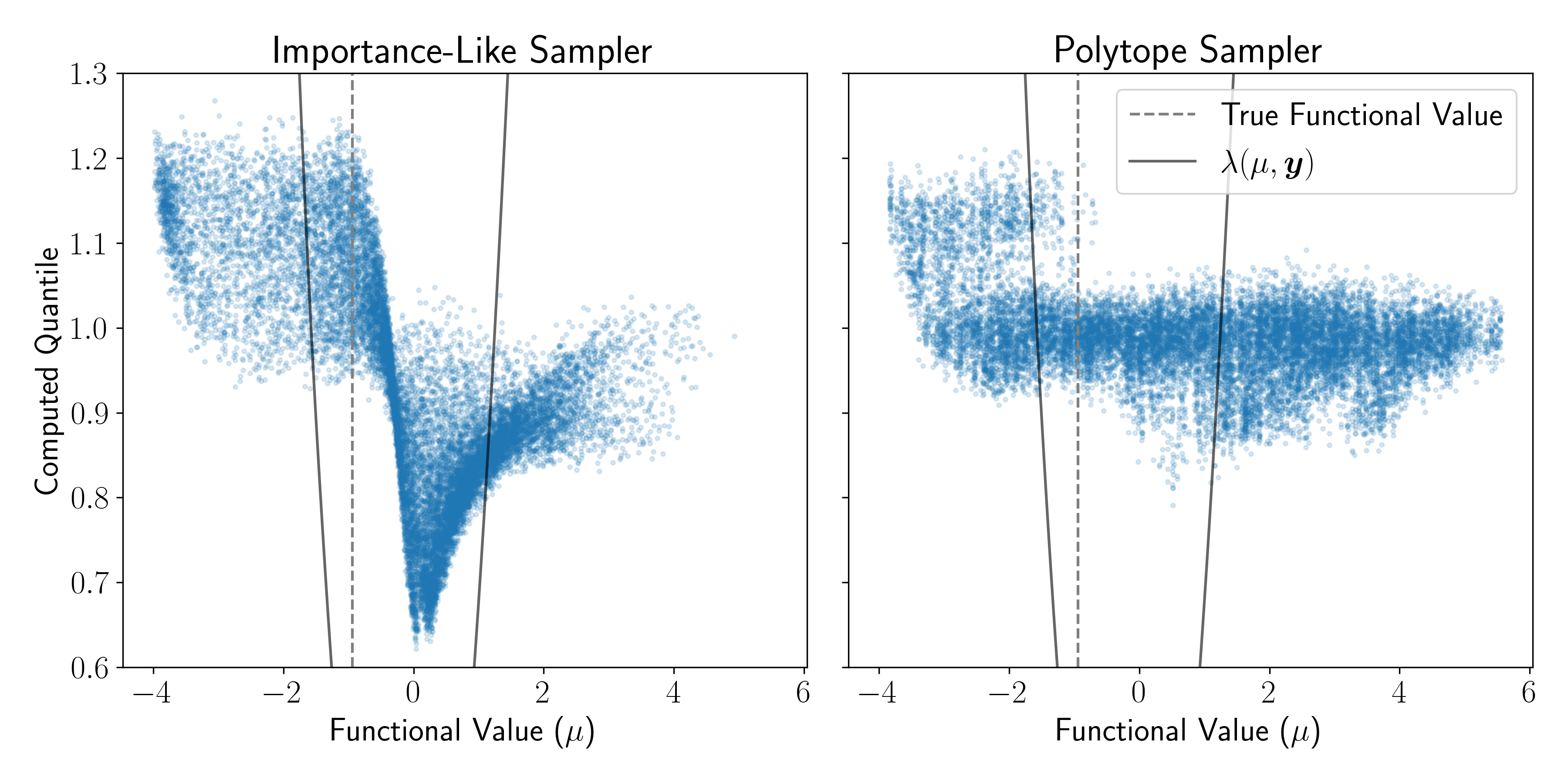}
  \caption[Importance-like sampler versus Polytope sampler]{The importance-like sampler described by \Cref{alg:3d_sampler} is more effective than the Polytope sampler described by \Cref{alg:polytope_sampler} at sampling parameter settings with $\gamma$-quantile greater than $1.1$. Each parameter setting sampled by \Cref{alg:3d_sampler} is shown by a blue point. This improved ability helps ensure the coverage guarantee shown in the left panel of \Cref{fig:3d_coverage_length}.}
  \label{fig:importance_vs_polytope_sampler}
\end{figure}

\end{document}